\pgfplotsset{compat=newest}
\newtheorem{lemma}{Lemma}
\newtheorem{assumption}{Assumption}
\newtheorem{proposition}{Proposition}
\newtheorem{theorem}{Theorem}
\newtheorem{definition}{Definition}
\newcommand{\tn}{\textnormal}
\title{Selling Demand Response Using Options} 
\author{Deepan Muthirayan$^d$, Dileep Kalathil$^c$, Sen Li$^b$, Kameshwar Poolla$^a$, Pravin Varaiya$^a$  % <- this % stops a LSEace
\thanks{$^a$ Electrical Engineering and Computer Sciences, University of California, Berkeley, CA 94720. Email: \{poolla, varaiya\}@berkeley.edu}
\thanks{$^b$ Department of Civil and Environmental Engineering, The Hong Kong University of Science and Technology, Clear Water Bay. Email: cesli@ust.hk}
\thanks{$^c$ Electrical Engineering and Computer Sciences, University of Texas A and M, TX 78363. Email: dileep.kalathil@tamu.edu}
\thanks{$^d$ Electrical Engineering and Computer Sciences, University of California, Irvine, CA 92697. Email: dmuthira@uci.edu}
\thanks{Supported in part by the National Science Foundation under Grants EECS-1129061/9001, 
CPS-1239178.}
}
\begin{document}

\maketitle

\markboth{Submission for IEEE Transactions on Smart Grid} 
{Kalathil \emph{et al.}} 

\begin{abstract}
Wholesale electricity markets in many jurisdictions use a two-settlement structure:  a day-ahead market for bulk power transactions and a real-time market for fine-grain supply-demand balancing. This paper explores trading demand response assets within this two-settlement market structure. We consider two approaches for trading demand response assets: (a) an intermediate spot market with contingent pricing, and (b) an over-the-counter options contract. In the first case, we characterize the competitive equilibrium of the spot market, and  show that it is socially optimal. Economic orthodoxy advocates spot markets, but these require expensive infrastructure and regulatory blessing.In the second case, we characterize competitive equilibria and compare its efficiency with the idealized spot market. Options contract are private bilateral over-the-counter transactions and do not require regulatory approval. We show that the optimal social welfare is, in general, not supported. We then design optimal option prices that minimize the social welfare gap. This optimal design serves to approximate the ideal spot market for demand response using options with modest loss of efficiency. Our results are validated through numerical simulations. 
\end{abstract}

%%%%%%%%%%%%%%%%%%%%%%%%%%%%%%%%%%%%%%%%%%%%%%%%%%%%%%%%%%%%%%%%%%%%%%%%%%%%%%%%
\section{Introduction}

Wholesale electricity markets in many jurisdictions use a two-settlement structure: a day-ahead market for bulk power
transactions and a real-time market for fine-grain supply-demand balancing.  Forecast errors in the day-ahead market necessitate subsequent balancing in the real-time market. With deeper penetrations of wind and solar generation, markets must be able to contend with greater levels of uncertainty stemming from renewable intermittency. Forecast errors increase, and balancing supply and demand becomes more challenging. The traditional approach of balancing using conventional fossil fuel based reserves is untenable: it is expensive and defeats the emissions benefits of renewables. Balancing the variability of intermittent renewable generation through demand flexibility is a far better alternative to reserve generation, as it produces no emissions and consumes no resources. This is recognized and encouraged by the Federal Energy Regulatory Commission (FERC) through its Order 745, which mandates that demand response be compensated on par with the conventional generation that supplies grid power \cite{federal2012assessment}. Commercial buildings, light industry, and households are flexible in their electricity consumption. These agents can be induced to yield this flexibility  in exchange for monetary compensation. This paper explores trading demand response assets within a traditional  two-settlement market structure. 

We consider the setting where a {\it Load Serving Entity} (LSE) supplies electricity to a collection of consumers at the delivery time $T$.  An aggregator manages the aggregate load flexibility of these consumers. The LSE interacts directly with the aggregator and can request a certain aggregate load reduction which will be reliably produced at the delivery time. The LSE can purchase bulk power in the day-ahead market and can also buy balancing 
power in the real-time market. It also has access to zero marginal cost renewable generation. 
We consider the situation where excess renewable generation is spilled, and cannot be sold back into the real-time market. 
Other generalizations of our results are possible, but we choose to explore the simplest situation.
%Since the consumers require lead time to manage their consumption, this 
%demand response request must be made at the intermediate time $t_1$.

{\em When should demand response assets be traded?}
Well in advance of the delivery time, the LSE has poor forecasts of its renewable generation and of clearing
prices in the real-time market. So the LSE prefers to delay its demand response request close to the delivery time.
Conversely, the aggregator prefers to receive any load curtailment requests well before the delivery time. This
affords its client consumers sufficient lead time to organize their electricity use and cede their demand reduction.
These considerations argue that demand response assets should be traded in an intermediate market as a 
recourse between the day-ahead and real-time markets. 
%There are three possibilities: 
%(i) at  day-ahead market, (ii) at  real-time market, (iii) at an intermediate time between the day-ahead market and real-time market. 
%In one day ahead, the LSE has less incentives to ask for a load curtailment, as the forecast about renewables and prices is not accurate enough to determine the amount of load reduction. Therefore, the LSE prefers to signal for load curtailment close to the delivery time. However, in a very short notice, the cost for load curtailment can be considerably high for the aggregator. Therefore, the aggregator prefers to get the load curtailment signal well before the delivery time.  To reconcile this conflict, we consider signaling the load reduction at an intermediate time  between the day-ahead and real-time market. This enables the LSE to exploit better forecast  available at the intermediate time (as compared with the day-ahead). It also provides enough lead time for the aggregator to organize the load curtailment. 

{\em What is an appropriate mechanism for the intermediate time trading of demand response assets?}
Economic orthodoxy argues in favor of an idealized spot market with contingent prices from the perspective of efficiency. 
In this intermediate spot market,  trading takes place after counter-parties digest all information that is revealed. 
Therefore, the clearing prices are {\em contingent} on the realized information. 
While the spot market is efficient, it has two main drawbacks:  (a) pricing is typically very volatile and does not 
offer guaranteed income to demand response assets to compensate for yielding their load flexibility and for the associated capital costs, 
and (b) an intermediate spot market requires organized infrastructure and regulatory approval which can be very expensive. 

To overcome these difficulties, we propose to trade demand response assets using call options.  
In our scheme, the LSE buys a number of call options contracts from the aggregator at time $t_o$, 
coincident with the gate closure of the day-ahead market.
It pays an option price $\pi^o$ per contract. Each call option contract affords the LSE the right, but not the obligation, to 
receive one unit of  load reduction from the aggregator. These options expire at the intermediate time $t_1$
by which time they must be exercised or forfeited. To exercise these options the LSE must pay the aggregator 
the strike price $\pi^{sp}$ per unit of load reduction. The strike price is not contingent, it is fixed and known at time $t_o$.
Payment from the sale of option contracts provides a guaranteed income to flexible loads 
for their demand response {\em capability}. Subsequent payment from the exercise of option contracts compensates loads for the {\em provision} of demand response. Since option contracts can be viewed as private over-the-counter transactions 
between the LSE and the aggregator, our scheme does not require regulatory blessing  or organized market infrastructure. 

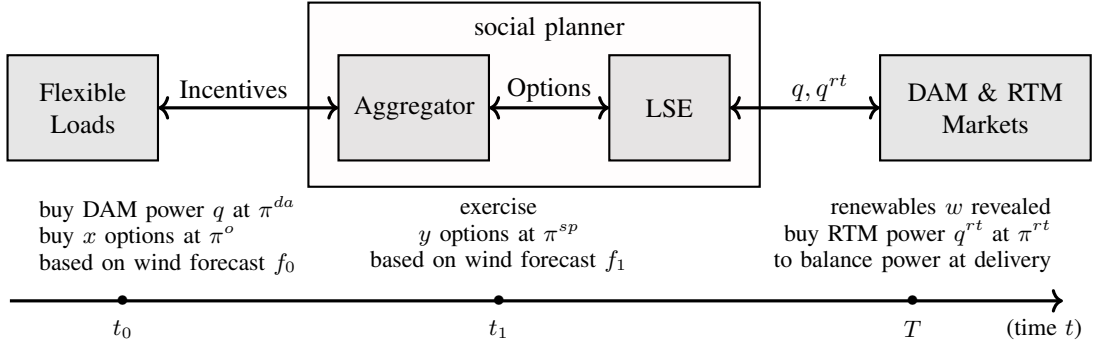
\begin{figure*}[htb] 
\label{fig:timeline}
%\captionsetup{justification=centering}
\centering
\begin{tikzpicture}[xscale=0.4, yscale=0.35]
\draw [black, fill=black, fill opacity = 0.1, thick] (-9,4) rectangle (-4,8);
\node [align=center] at (-6.5,6) {Flexible \\ Loads};
\draw [<->,very thick] (-4,6) --  (2,6);
\node [align=center] at (-1.5,6.7) {Incentives}; 
\draw [black, fill=black, fill opacity = 0.1, thick] (2,4) rectangle (7,8);
\node [align=center] at (4.5,6) {Aggregator};
\draw [<->,very thick] (7,6) --  (11,6);
\draw [black, fill=black, fill opacity = 0.1, thick] (11,4) rectangle (15,8);
\node [align=center] at (13,6) {LSE};
\draw [<->,very thick] (15,6) --  (20,6);
\draw [black, fill=black, fill opacity = 0.1, thick] (20,4) rectangle (27,8);
\node [align=center] at (23.5,6) {DAM \& RTM \\ Markets};
\draw [black, fill=red, fill opacity = 0.01, thick] (1,3) rectangle (16,10);
\node [align=center] at (9,9) {social planner}; 
\node [align=center] at (9,6.7) {Options}; 
%\node [align=center] at (6,5.5) {$\pi^o,  \pi^{sp}$}; 
\node [align=center] at (18,6.75) {$q,  q^{rt}$}; 
\end{tikzpicture}
\begin{tikzpicture}[scale=0.5]
{\small
%\draw[help lines] (0,0) grid (16,6);
\draw [->,very thick,black] (0,-0.25) --  (28,-0.25);
\node [align=center] at (27.5,-1) {(time $t$)};
\node [align=left] at (4.2,1.5) {buy DAM power $q$ at $\pi^{da}$ \\ buy $x$ options at $\pi^o$ \\ based on wind forecast $f_0$}; 
\node [align=center] at (3,-1) {$t_0$}; 
\node [align=center] at (13,1.5) {exercise \\ $y$ options at $\pi^{sp}$ \\ based on wind forecast $f_1$ }; 
\node [align=center] at (13,-1) {$t_1$}; 
\node [align=right] at (24,1.5) {renewables $w$ revealed \\ buy RTM power $q^{rt}$ at $\pi^{rt}$ \\ to balance power at delivery}; 
\node [align=center] at (24,-1) {$T$}; 
\node [align=center] at (3,-0.25) {$\bullet$};
\node [align=center] at (13,-0.25) {$\bullet$};
\node [align=center] at (24,-0.25) {$\bullet$};}
\end{tikzpicture}
\caption{Players, interactions, and decision time-line.} \label{fig:timeline}
\end{figure*}

\subsection{Our Contributions} 
Our principal contributions are:
\begin{itemize}
\item First, we consider  optimal energy scheduling from the perspective of a social planner. 
We formulate this as a three stage optimization problem and characterize the optimal decisions at each stage: the optimal energy purchase in the day-ahead market, the optimal demand response (or load curtailment) decision at the intermediate stage, 
and the balancing energy purchase in the real-time market. This serves as a benchmark for evaluating other market designs.
\item Second, we consider an intermediate spot market with contingent pricing. We study the interactions of the LSE and the aggregator in a spot-market. We show that there exists a competitive equilibrium, and the equilibrium is socially optimal, i.e., it realizes the same system cost as the benchmark. 
\item Third, we study the options market for the LSE and the aggregator. We show that under some conditions, a competitive equilibrium always exists, and it is the optimal solution to a convex optimization problem. We compare the efficiency of the equilibrium for the options market and the spot market, and show that the options market is not necessarily socially optimal. 
We then design optimal option and prices which minimize the welfare gap at the competitive equilibrium. 
\end{itemize}
%In \cite{Va-RLD} Varaiya et al. introduced the framework of risk limiting diLSEatch (RLD) illustrating the benefits of recourse actions via multi-settlement markets.  
\subsection{Related Work} 

There is extensive literature on demand response and work related to managing the uncertainty with renewable integration \cite{li2011optimal, roscoe2010supporting, samadi2014real, chen2012real, wu2012vehicle, mohsenian2010autonomous, yang2013game, gabriel2006optimal, haring2013decentralized, fahrioglu2000designing, gatsis2012residential, chavali2014distributed, shi2014optimal, lidemand, xudemand, huang2012optimal}. These works can be broadly classified as price-based or contract-based.
  
{\it Price-based Demand Response}: This is a type of demand response where the consumers alter their energy consumption based on time varying prices determined apriori by the LSE. The objective here is to improve overall system benefits by influencing the consumers to shift their demand. The works in \cite{li2011optimal, roscoe2010supporting, samadi2014real} propose different approaches to determine the time varying prices such that the overall system benefits, measured in terms of efficiency and load variability, are improved. The authors in \cite{mohsenian2010autonomous, yang2013game} study a game theoretic formulation and propose a pricing strategy that improves system benefits in Nash equilibrium. Closely related works such as \cite{wu2012vehicle} propose a time varying price policy to utilize flexible storage of EVs in order to manage load variability. Other works such as \cite{chen2012real} propose a demand response management strategy using a stochastic optimization procedure that accounts for financial risks associated with time varying prices.

{\it Distributed Price-based Demand Response}: Authors in \cite{gatsis2012residential}, \cite{chavali2014distributed} and \cite{shi2014optimal} propose iterative distributed load control schemes with the objective of meeting system requirements and minimizing consumer discomfort. They primarily address the coordination of multiple demand response users by iteratively discovering the most appropriate electricity price and its variation with time. %Alternatively, contracts which can be purchased ahead of time and that treats demand response as a differentiated good, based on their power level and duration, have also been proposed \cite{nayyar2016duration}, \cite{bitar2017deadline}. 

The setting we consider is different from the above works, which are primarily concerned with price-responsive demand response. We propose a market mechanism for direct calling of a certail level of demand response instead of using price to influence demand. %The scheduling problem is posed as a multi-stage decision problem. 
Here, a LSE can buy DR contracts from aggregators of DR in the day-ahead market and can determine the amount of DR to call in the real-time market at an intermediate stage when more information is available on renewable generation at real-time. 

{\it Multi-stage Stochastic Decision}: Varaiya \emph{et al.} \cite{Va-RLD11} propose a {\em risk-limiting dispatch}  
approach for integrating renewable energy in the grid. They formulate a multi-stage stochastic control problem where at each stage the utility makes purchase decisions based on the available information. Rajagopal \emph{et al.} \cite{rajagopal2013risk} extend this approach and characterize optimal power procurement policies as threshold based decisions. %Huang et.al. \cite{huang2012optimal} also consider the problem of jointly optimizing power procurement and demand response in order to maximize social welfare. In their scheme, the utility announces time-varying prices so as to modify consumption according to system requirements while simultaneously meeting quality-of-service guarantees of the end consumer. %There are many other studies that use techniques from financial engineering for interruptible electricity contracts \cite{gedra1993markets} \cite{oren2001integrating}. %Our approach based on option trading is distinct from these bodies of work. 
Our work parallels the approach of Varaiya et al. \cite{Va-RLD11}. In particular, we extend their approach to a contract setting as proposed in this paper, where the decision of two entities are coordinated in a multi-stage decision problem through an options contract mechanism.

{\it Contract-based Demand Response}: The works in \cite{muthirayan2019mechanism, haring2013decentralized, fahrioglu2000designing} address the problem of demand response aggregation from a mechanism design perspective. The objective of the mechanism design is to gather demand response contracts at minimal cost and at preferrably maximal privacy so that the aggregator or the LSE can meet the DR requirements of the system. Alternatively, demand response contracts that treat demand response as a differentiated good, based on their power level and duration, have also been proposed \cite{nayyar2016duration}, \cite{bitar2017deadline}. Our work is different from these set of works in the sense that we provide a multi-settlement market framework for trading the aggregated demand response in the electricity markets. %In this sense we complement the demand response work developed by these authors.

{\it Options}: Authors in \cite{eydeland1999fundamentals, kluge2006pricing, deng2006electricity, haarbrucker2009valuation, pflug2009electricity} discuss the pricing of electricity options for hedging price risks in deregulated electricity markets. Authors in \cite{oren2001integrating} provide a forward electricity contract with a call option to hedge against price risks while exercising load flexibility. Authors in \cite{kamat2002exotic} provide a similar forward contract bundled with a option to hedge against price for interruptible services. In \cite{chung2003electricity} a forward contract with a bilateral financial option for buyers and sellers to take advantage of their flexibility and reduse their price risks is discussed. In \cite{oum2006hedging}, a options based hedging mechanism for a LSE is discussed and in \cite{pineda2012managing} an options contract for a producer is discussed. There are works such as \cite{oren2005generation} that discuss long term options contract as a mechanism for ensuring generation adequacy. %in an energy only market, where these contracts can be covered by curtailable load contracts.
Other works such as \cite{hsu1998spark, deng2001exotic, sezgen2007option} provide option models for assessment of the value of generation and demand response investment decisions. Authors in \cite{cesena2013real} provide a review of application of real option models for valuing electricity generation projects and renewable energy projects. In this work we propose options market for trading aggregated DR in an intermediate market that allows the LSE to call for load curtailment based on an improved forecast of the wind power that is available at this intermediate time. The options market is proposed as an alternative to the intermediate spot market that can be cumbersome to organize and volatile in terms of the revenue it generates for the service providers.

The paper is organized as follows. We introduce the basic notions and notation in Section \ref{sec:pelims}. In Section \ref{sec:opt-sch}, we consider the problem of energy scheduling with demand response from the perspective of a system planner. In Section \ref{sec:CE-contingent} we discuss the implementation of intermediate spot market for scheduling demand response. We present the options market mechanism in Section \ref{sec:CE-option}. Finally, we conclude the paper with a brief description of future research directions in Section \ref{sec:conclusion}.   

\section{Preliminaries}  
\label{sec:pelims}
%We consider the setting where a Load Serving social planner (LSE)
%supplies electricity to a collection of consumers at the delivery time $T$.  An aggregator
%manages the aggregate load flexibility of these consumers. The LSE
%interacts directly with the aggregator a can request a certain aggregate load reduction which will be reliably 
%produced at the delivery time.  The LSE can purchase bulk power in the day-ahead market and can also buy balancing 
%power in the real-time market. It also has access to zero marginal cost renewable generation. 
%We consider the situation where excess renewable generation is LSEilled, and cannot be sold back into the real-time market. 
%Other generalizations of our results are possible, but we choose to explore the simplest situation.
The setting we consider is shown in Figure \ref{fig:timeline}. A load serving entity (LSE)
supplies $l$ units of electricity to a collection of consumers for delivery at time $T$. The demand $l$ is considered inelastic and known at time $t_o$. Indeed, day-ahead load forecast errors are within $1\%-2\%$\cite{CAISO2018briefing}.  
%Consider an LSE, who buys energy from multiple resources to serve the load of $\ell$ units. 
The LSE buys $q$ units of energy in the day-ahead market at price $\pi^{da}$. At the intermediate time $t_1$, he extracts $y$ units of demand response, which incurs a disutility of $\phi(y)$. The LSE has access to zero marginal cost random renewable 
generation $w$ which is realized at the delivery time $T$. 
To meet its demand obligations, the LSE buys the remaining energy required $q^{rt}$ in the real-time market at price $\pi^{rt}$.  
The total energy purchase must satisfy:
\begin{equation}
\label{eq:power-balance}
l \leq q + q^{rt} + w + y.
\end{equation}
Note that we consider the situation where excess renewable generation is spilled, 
and cannot be sold back into the real-time market. This is necessary to ensure that the LSE does not sell all of the renewable generation back in the real-time market and can be imposed as a regulation. The demand response purchase made at the intermediate time $t_1$ is based on a forecast $f_1$ of $w$ and forecast of $\pi^{rt}$.  
%This intermediate time decision  balances the necessary lead time for loads to organize their flexibility against the need for the LSE to 
%have greater visibility about its renewable generation $w$. 
 %We assume that an aggregator manages a collection of flexible consumers for demand response. It recruits the consumers, designs incentives for them and sends load reduction signals to a set of selected consumers when a demand response event occurs.
%In \cite{MuKaPoVa-ACC16} and \cite{KaRa-Allerton15}, we have  addressed the problem of incentive design, baselining and consumer scheduling  for reliably accruing a given net load reduction.  In this paper we assume that the aggregator can reliably deliver a net load reduction of $y$ units at time $T$ if it commits for the same at time $t_{1}$. For  clarity of exposition and simplicity of notation, we assume  the LSE interacts only with one aggregator, as shown in Figure \ref{fig:timeline}. This paper addresses the market mechanisms which facilitate this interaction. 

\subsection{Model Uncertainties}
Let $f_{0}$ denote the information available at $t_o$. Let $p(w|S)$ be the conditional probability of the wind given the intermediate forecast state $S$ at time $t_{1}$.  The forecast state $S$ can be regarded as a sufficient statistic which parameterizes the information on wind at time $t_{1}$. We parameterize $S \in [0, 1]$. We call this an \emph{information state}. Define 
\[p_{s}(w) = p(w | S=s),~~P_{s}(z) = \int^{z}_{w=0} p_{s}(w) dw, \]
where $P_{s}(z)$ is the  probability that the wind at time $T$ is less than $z$ given the information state $s$. Let $\alpha(s)$ be the prior probability density function of the information state, i.e.,  
\[\alpha(s) = \mathbb{P}\left(S = s |  f_{0}\right)\]
We assume that real-time price $\pi^{rt}$ is a random variable and denote the expected real-time price conditioned on the information state by,
\[\overline{\pi}^{rt}_{s} = \mathbb{E}[\pi^{rt}|S = s] \] 
The day-ahead price $\pi^{da}$ is  known at time $t_{0}$. We use $\mathbb{E}_{S}\left[\cdot\right]$ and $\mathbb{E}_{w}\left[\cdot\right]$ to denote the expectation with respect to the information state and the randomness in wind, respectively. Let $\mathbb{E}\left[\cdot\right]$ denote the joint expectation. 
We make the following assumptions.
\begin{assumption}
\label{as:w-pirt}
(i) $\mathbb{P}(w \geq z | S = s)   < \mathbb{P}(w \geq z | S=s'),$   $ \forall z,~ \text{if}~ s' > s$, 
(ii) $\overline{\pi}^{rt}_{s'} < \overline{\pi}^{rt}_{s}$, if $s'  > s$, 
(iii) $\pi^{rt}$ and $w$ are conditionally independent given the information state $s$.
\end{assumption}
Assumption (i) imposes a stochastic ordering on wind conditioned on the information state $s \in [0, 1]$. The intuitive interpretation is that larger values of $s$ indicate (stochastically) more wind. This assumption guarantees that $P_{s'}(z) < P_{s}(z), \forall z$, if $s' > s$ so that the cumulative distribution $P_{s}(\cdot)$ and $P_{s'}(\cdot)$ do not intersect. Assumption (ii) similarly imposes an ordering on the expected value of the real-time price conditioned on the information state. The ordering is such that higher values of $s$ correspond to a lower expected real-time price (because more wind power reduces demand in the real-time market).  Assumption (iii) imposes that the information state $s$ contains all the causal factors that determine the real-time price and wind power $w$. This is a reasonable assumption because the information state $s$ represents the underlying state of nature. 
%We comment that the above assumptions are not necessary for most of the results we show. However, they considerably simplify the notations and give better insight to the problems.    
%At time $t_1$, there could be a shock to the disutility of the aggregator. For instance,  a group of thermostatically controlled loads suffers less from the same amount of load reduction when the outside temperature is increased, or when the residents are not at home. The disutiltiy of load reduction is also contingent on the information revealed at time $t_1$. 
%We define $R\in [0,1]$  as another information state. It parametrizes the information on disutiltiy at time $t_1$.  Let $\beta(r)$ be the probability distribution of the information state $R$:
%\[\beta(r)=\mathbb{P}(R=r|f_0).\]
%We impose the following assumption:
%\begin{assumption}
%\label{as:disutility}  
%The information states $R$ and $S$ are conditionally independent given $f_0$.
%\end{assumption}
%
%This assumption is based on the fact that the wind does not affect the marginal disutiltiy of load curtailment. 

\subsection{Decision Making of Players}
The players of the problem include an LSE, an aggregator, and a social planner. We model them as follows.

\noindent {\bf Load Serving Entity (LSE):} The LSE is responsible for satisfying the energy balance specified by equation \eqref{eq:power-balance}. At time $t_0$, it buys $q$ units of energy at a price $\pi^{da}$ from the day-ahead market. At time $t_1$, it receives a load reduction of $y_{s}$ units from the aggregator when the information state $s$ is revealed, and makes a  payment $R_{s}(y_{s})$. At time $T$, the renewable $w$ is revealed, and the LSE purchases the remaining energy from the real-time market, i.e.,  $q^{rt} = (l-q-y_{s}-w)_{+}$. The ex-post cost for the LSE given the information state $s$ is,  
\begin{equation} 
\label{eq:J-LSE-expost}
{J}^{LSE}_{s} = \pi^{da} q + R_{s}(y_{s}) + \pi^{rt} (l-q-y_{s}-w)_{+}
\end{equation}

%\red{Is this the ex-pose cost after T? If so,why use average real time price instead of the realized one}

\noindent {\bf Aggregator:} The aggregator suffers a disutility $\phi(y_s)$  for a load reduction of $y_s$ units,  and receives a compensation payment $R_s(y_s)$ from the LSE. The ex-post cost for the aggregator, given the information state $s$, is as follows,
\begin{equation}
\label{eq:J-agg-expost}
{J}^{agg}_s = \phi(y_s)  - R_{s}(y_s)
\end{equation}
We assume that the disutility function satisfies the assumption given below.
\begin{assumption}
\label{disutilityfunction}
$\phi(y_s)$ is  twice differentiable, and is strongly convex in $y_s$, i.e.,  $\phi''(y_s)> 0$.
\end{assumption}

\noindent {\bf Social Planner or Entity (e):} We consider a hypothetical agent, the \emph{social planner}, which combines the roles of the LSE and the aggregator. We denote decision variables and cost functions  of the social planner with the superscript e for entity. This social planner buys $q$ units of energy from the day-ahead market, receives a load curtailment of $y_s$ units at an intermediate time $t_{1}$,  acquires zero marginal-cost realized wind power $w$ at time $T$, and purchases the remaining energy $(l-q-y_s-w)_{+}$ from the real-time market for load balance. Given $s$, the ex-post cost for the social planner (also called the system cost) is:
\begin{equation}
%\label{eq:J-ent-expost}
J^{e}_s = \pi^{da} q + \phi(y_s) + \pi^{rt} (l-q-y_s-w)_{+}
\end{equation}
Payment for demand response is an internal exchange between the LSE and aggregator, and does not appear in the social planner ledger.  In the sequel, we first discuss the optimal scheduling problem for the social planner, and then we study the interaction between the LSE and the aggregator in the intermediate market and options market. We characterize the competitive equilibrium in both markets, and compare the system costs.   

\section{Optimal Scheduling for the Social Planner}  
\label{sec:opt-sch} 
This section studies the optimal scheduling of energy from the perspective of the social planner. We separately consider the scheduling problems with and without demand response. We use these solutions as benchmarks to compare the various market mechanisms we propose in subsequent sections.  

\subsection{Optimal Scheduling without Demand Response}
\label{sec:opt-sch-ndr}
In the absence of demand response, the social planner is confined to purchase energy from the day-ahead and real-time markets. Let $J^{e}_{ndr}(q)$ be the expected cost for the social planner in the absence of demand response. This is a function of the day-ahead purchase $q$ and is 
\begin{equation}
\label{eq:social planner-J-ndr-1}
J^{e}_{ndr}(q) =  \pi^{da} q + \mathbb{E}[{\pi}^{rt} (l-q-w)_{+}]  
\end{equation}
This implicitly accounts for the balance inequality (\ref{eq:power-balance}) necessary to service the load $l$. The optimal decision of the social planner is
\begin{equation}
\label{eq:optimaldecision4SP}
q^{e}_{ndr} = \arg \min_{q\geq 0} J^{e}_{ndr}(q)
\end{equation}
We have the following:
\begin{proposition}
\label{thm:no-dr-social planner}
$J^{e}_{ndr}(\cdot)$ is  convex. The minimizer $q^e_{ndr}$ solves
\begin{equation}
\label{optimalqwithoutDR}
\pi^{da} -  \mathbb{E}_{s}[\overline{\pi}^{rt}_{s} P_{s}(l-q^{e}_{ndr})]  = 0
\end{equation}
\end{proposition}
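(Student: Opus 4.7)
The plan is to establish convexity first, and then compute the first-order optimality condition using the structure of the cost and Assumption \ref{as:w-pirt}(iii).

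For convexity, I would argue pointwise: for every fixed realization $(w,\pi^{rt})$, the map $q \mapsto (l-q-w)_+$ is a positive part of an affine function and hence convex in $q$. Since $\pi^{rt}\geq 0$, the product $\pi^{rt}(l-q-w)_+$ is convex in $q$ for each realization. Taking expectation preserves convexity, and adding the linear term $\pi^{da}q$ preserves it too. This establishes that $J^{e}_{ndr}(\cdot)$ is convex.

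For the first-order condition, I would differentiate $J^{e}_{ndr}(q)$ with respect to $q$, interchanging the derivative with the expectation. This interchange is justified because $(l-q-w)_+$ is piecewise linear in $q$ with a single kink at $w = l-q$, and because $w$ is a continuous random variable (with density $p_s(w)$), so this kink occurs on a set of probability zero. The derivative of $(l-q-w)_+$ with respect to $q$ equals $-\mathbb{1}(w < l-q)$ almost surely, which yields
\begin{equation*}
\frac{d J^{e}_{ndr}}{dq}(q) = \pi^{da} - \mathbb{E}\bigl[\pi^{rt}\,\mathbb{1}(w < l-q)\bigr].
\end{equation*}

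Next I would use the tower property by conditioning on the information state $s$. By Assumption \ref{as:w-pirt}(iii), $\pi^{rt}$ and $w$ are conditionally independent given $S=s$, so
\begin{equation*}
\mathbb{E}\bigl[\pi^{rt}\,\mathbb{1}(w < l-q)\,\big|\,S=s\bigr] = \overline{\pi}^{rt}_s\,P_s(l-q).
\end{equation*}
Taking the outer expectation over $s$ with density $\alpha(s)$ gives $\mathbb{E}_s[\overline{\pi}^{rt}_s P_s(l-q)]$. Setting the derivative to zero produces the stated optimality equation. Because $J^{e}_{ndr}$ is convex, this stationary condition is both necessary and sufficient, so any $q^{e}_{ndr}\geq 0$ satisfying it is a global minimizer (existence of an interior solution can be verified by checking that the derivative is negative at $q=0$ provided the primitives are non-degenerate, and otherwise the boundary solution $q^{e}_{ndr}=0$ is handled by the same KKT condition).

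The only mildly delicate step is the interchange of differentiation and expectation due to the non-smoothness of the positive part, but this is routine given the continuity of the conditional wind distribution; the rest is straightforward manipulation using conditional independence.
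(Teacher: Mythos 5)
Your proof is correct and reaches the same two conclusions by the same overall structure (convexity, then first-order condition), but the convexity step is argued differently and somewhat more robustly than in the paper. The paper writes $J^e_{ndr}(q)$ as an explicit double integral $\pi^{da}q + \int_0^1\int_0^{l-q}\overline{\pi}^{rt}_s(l-q-w)p_s(w)\,dw\,\alpha(s)\,ds$, differentiates twice via the Leibniz rule, and concludes convexity from $\frac{d^2 J^e_{ndr}}{dq^2} = \int_0^1 \overline{\pi}^{rt}_s p_s(l-q)\alpha(s)\,ds \geq 0$; you instead observe that $q\mapsto \pi^{rt}(l-q-w)_+$ is convex for each realization and that expectations and addition of a linear term preserve convexity. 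Your route needs no density or second derivative and would survive, e.g., atoms in the wind distribution, at the cost of then having to be slightly more careful about the first-order condition (which you handle by noting the kink set has probability zero). Your explicit invocation of Assumption \ref{as:w-pirt}(iii) to factor $\mathbb{E}[\pi^{rt}\,\mathbb{1}(w<l-q)\mid S=s]=\overline{\pi}^{rt}_s P_s(l-q)$ is a step the paper performs silently when it writes $\overline{\pi}^{rt}_s$ inside the integrand of \eqref{eq:social planner-J-ndr-2}; making it explicit is a genuine improvement in rigor. Your closing remark about the boundary case $q^e_{ndr}=0$ is also consistent with the paper, which separately imposes Assumption 3 ($\pi^{da} < \mathbb{E}_s[\overline{\pi}^{rt}_s P_s(l)]$) precisely to guarantee an interior solution.
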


\begin{assumption}
To avoid trivial results, we assume that the day-ahead market price is discounted from the expected real-time market price, i.e., $\pi^{da} <  \mathbb{E}_{s}[\overline{\pi}^{rt}_{s} P_{s}(l) ]$. This will ensure that $q^{e}_{ndr} > 0$. 
\end{assumption} 
%\begin{remark}[Comparison with RLD \cite{Va-RLD}]
%When the RTM market price is a deterministic constant, we recover the RLD solution. So, it is a LSEecial case of our problem. 
%\end{remark}

\subsection{Optimal Scheduling with Demand Response}  
\label{sec:opt-sch-dr}

With demand response, the net expected cost for the social planner as a function of the first-stage purchase is given by,  
\begin{equation}
\label{eq:social planner-J-dr}
J^{e}(q) = \pi^{da} q + \mathbb{E}_{s}\left[\min_{y_s\geq 0}  J^{e}_s(y_s; q)  \right]
\end{equation} 
where $J^{e}_s(y_s; q) $ is the expected second-stage cost conditioned on $s$ and  $q$ and is given by,
\begin{equation}
\label{eq:social planner-J2}
J^{e}_{s}(y_s;q) = \phi(y_s) + \mathbb{E}_{w} \left[ \overline{\pi}^{rt}_{s} (l-q-y_s-w)_{+} \vert s\right]
\end{equation}
The optimal first-stage and second-stage decisions, $q^{e}$ and $y^{e}_s$ respectively, are 
\begin{align}
\label{eq:social planner-q-y-1}
\begin{cases}
q^{e} = \arg \min_{q\geq 0} J^{e}(q), \\ y^{e}_s = \arg \min_{ 0\leq y_s\leq l} J^{e}_s(y_s;q^e)
\end{cases} 
\end{align}
The optimal system cost is then $J^{*e} = J^{e}(q^{e})$. Using the fact that both (\ref{eq:social planner-J-dr}) and (\ref{eq:social planner-J2}) are convex, we can solve for $q^e$ and $y_s^e$ using the conditions given in the following proposition.
\begin{proposition} 
\label{thm:dr-social planner} 
${J}^{e}(\cdot)$ and ${J}^{e}_s(\cdot)$ are convex. For any given first stage decision $q$, the second-stage decision $y^{e}_s$ is given by,
\begin{align}
\label{eq:social planner-ys-1}
\begin{cases}
\phi'(y_s^e) = \overline{\pi}^{rt}_{s} P_{s}(l-q-y_s^e),  & \text{if } \phi'(y_s^e) < \overline{\pi}^{rt}_{s} P_{s}(l-q) \\
y_s^e=0,  & \text{if } \phi'(y_s^e) \geq  \overline{\pi}^{rt}_{s} P_{s}(l-q)
\end{cases}
\end{align}
The first-stage decision $q^{e}$ is given by the solution of,
\begin{equation}
\label{eq:social planner-qe-1}
\pi^{da} - \mathbb{E}_s[ \overline{\pi}^{rt}_{s} P_{s}(l-q^{e}-y^{e}_s)]  = 0
\end{equation}
\end{proposition}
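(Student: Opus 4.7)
The plan is to first establish convexity of both cost functionals, and then derive the two optimality conditions via straightforward first-order analysis. For the convexity of $J^e_s(y_s;q)$, I would use that $\phi(y_s)$ is convex by Assumption \ref{disutilityfunction} and that $(l-q-y_s-w)_{+}=\max\{0,\,l-q-y_s-w\}$ is the pointwise maximum of two affine functions of $(q,y_s)$, hence jointly convex in $(q,y_s)$; conditional expectation over $w$ preserves joint convexity. Because partial minimization of a jointly convex function over a convex set yields a convex function of the remaining argument, $q\mapsto \min_{y_s\ge 0}J^e_s(y_s;q)$ is convex; adding the linear term $\pi^{da}q$ and taking expectation over $s$ then gives convexity of $J^e(q)$.

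For the second-stage optimality condition, I would differentiate $J^e_s$ in $y_s$. Writing the conditional expectation explicitly as $\int_{0}^{l-q-y_s}(l-q-y_s-w)\,p_s(w)\,dw$ and applying Leibniz's rule, the boundary contribution vanishes and $\tfrac{\partial}{\partial y_s}\mathbb{E}_w[(l-q-y_s-w)_{+}\mid s]=-P_s(l-q-y_s)$, so $\partial J^e_s/\partial y_s=\phi'(y_s)-\overline{\pi}^{rt}_{s} P_s(l-q-y_s)$. The KKT conditions for the constraint $y_s\ge 0$ then deliver the stated case split: if $y_s^e>0$, stationarity gives $\phi'(y_s^e)=\overline{\pi}^{rt}_{s} P_s(l-q-y_s^e)$, and strict monotonicity of $P_s(\cdot)$ (implied by Assumption \ref{as:w-pirt}) forces $\phi'(y_s^e)<\overline{\pi}^{rt}_{s} P_s(l-q)$; otherwise $y_s^e=0$ and dual feasibility gives $\phi'(0)\ge \overline{\pi}^{rt}_{s} P_s(l-q)$, which is exactly the branching in the statement.

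For the first-stage condition, set $V(q,s):=\min_{y_s\ge 0}J^e_s(y_s;q)$. By the envelope theorem, equivalently by differentiating through the optimizer $y_s^e(q)$ and using complementary slackness so that $(\partial J^e_s/\partial y_s)(dy_s^e/dq)=0$, one obtains $\partial V/\partial q=-\overline{\pi}^{rt}_{s} P_s(l-q-y_s^e(q))$. Differentiating $J^e(q)=\pi^{da}q+\mathbb{E}_s[V(q,s)]$, interchanging the derivative with the expectation, and setting the result to zero at $q=q^e$ yields the claimed first-order condition for $q^e$.

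The main technical care lies in (i) justifying the Leibniz interchange through the non-smooth $(\cdot)_{+}$ kernel, which goes through as long as $p_s$ is integrable and $P_s$ is continuous, and (ii) applying the envelope theorem across the possible regime change at $y_s^e(q)=0$. For (ii), continuity of the parametric optimizer $q\mapsto y_s^e(q)$ follows from the strong convexity of $\phi$ assumed in Assumption \ref{disutilityfunction} and standard stability of parametric convex programs; complementary slackness then makes the envelope identity hold uniformly, without a separate boundary argument.
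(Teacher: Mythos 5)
Your proposal is correct, and the derivation of the two first-order conditions is essentially the paper's own: the same Leibniz computation gives $\partial J^e_s/\partial y_s=\phi'(y_s)-\overline{\pi}^{rt}_sP_s(l-q-y_s)$ and the case split, and your ``differentiate through the optimizer and let complementary slackness kill the $(\partial J^e_s/\partial y_s)(dy^e_s/dq)$ term'' is precisely how the paper obtains $\frac{d J^e}{dq}=\pi^{da}-\mathbb{E}_s[\overline{\pi}^{rt}_sP_s(l-q-y^e_s)]$ (it just does it separately for left and right derivatives and for the two regimes $y^e_s=0$ and $y^e_s>0$). Where you genuinely diverge is the convexity argument. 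The paper proves convexity of $J^e(\cdot)$ by computing the second derivative explicitly, which forces it to track $\partial^{+}y^e_s/\partial q$ and to differentiate the stationarity identity to show that $\overline{\pi}^{rt}_sp_s(l-q-y^e_s)\bigl(1+\partial^{+}y^e_s/\partial q\bigr)=0$ in the interior regime; this requires (one-sided) differentiability of the optimizer and a regime-by-regime check. Your route --- $(l-q-y_s-w)_{+}$ is a pointwise max of affine functions, hence jointly convex in $(q,y_s)$; expectation preserves this; partial minimization of a jointly convex function over the convex set $\{y_s\ge 0\}$ is convex in $q$ --- gets convexity of both $J^e_s$ and $J^e$ with no smoothness of $y^e_s(q)$ at all, and is cleaner and more robust (it would survive, e.g., a merely convex, non-differentiable $\phi$). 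The only cost is that you still need the differentiability machinery later for the envelope step, so the paper's heavier calculation is not entirely avoided, just confined to where it is actually needed. One small caution: your claim that strict monotonicity of $P_s$ is ``implied by Assumption 1'' is not quite right --- Assumption 1 orders the family $\{P_s\}$ across $s$, not $P_s$ in its argument --- but nothing in the case split actually requires strictness, so this does not affect the proof.
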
  
The proof is offered in Appendix B.

\subsection{Socially Optimal Costs}
The optimal costs for the social planner with and  without demand response are
\[ J^*_{dr}=J^e(q^e) 
\text{ and } J_{ndr}^*=J^e_{ndr}(q^e_{ndr})\]
respectively. Clearly, $J^*_{dr}\leq J^*_{ndr}$. These social cost values serve as benchmarks for our market mechanism designs.  In Section \ref{sec:CE-contingent}, we show that a spot market with contingent prices realizes the socially optimal cost $J^*_{dr}$. In Section V, we show that trading demand response in an options market will, in general,  result in a loss of social welfare. We further select options prices so that this welfare gap is modest. As a result, the over-the-counter options market can well approximate the idealized spot market. 

\section{Spot Markets with Contingent Prices}      
\label{sec:CE-contingent}  
In Section \ref{sec:opt-sch}, we considered the optimal scheduling of energy from the perspective of a hypothetical social planner. We now show that the optimal scheduling decisions of the social planner can be realized through a spot market with contingent prices. In this market, the LSE is a buyer, and the aggregator is the seller. 

At time $t_0$, the LSE buys $q$ units of energy from the day-ahead market at a price $\pi^{da}$. At time $t_1$, the information state $s$  is revealed. Depending on this revelation, the LSE purchases $y_s$ units of energy curtailment from the aggregator, paying a price $\pi_s^{in}$. 
This is a {\em contingent price} as it depends on the realized information state $s$.  
At time $T$, the LSE receives wind energy $w$ and purchases the required balancing energy $(l-q-y_s-w)_{+}$ from the real-time market at a price $\pi^{rt}$. 
 The expected cost for the LSE as a function of the first-stage purchase $q$ is given by,
\begin{equation} 
\label{eq:J-LSE-market-1}
J^{LSE}(q) =  \pi^{da} q +  \mathbb{E}_{s}[\min_{y_{s}\geq 0} J^{LSE}_{s}(y_{s};q)]
\end{equation}
where $J^{LSE}_{s}(y_{s};q)$ is the second stage cost and is given by,
\begin{equation}
J^{LSE}_{s}(y_{s};q) = \pi_{s}^{in} y_{s} + \mathbb{E}_{w}[\overline{\pi}^{rt}_{s} (l-q-y_{s}-w)_{+}]
\end{equation}  
The optimal first and second-stage purchase decisions of the LSE are
$q^{LSE}$ and $y^{LSE}_{s}$ respectively. These are given by
\begin{align*}
\label{eq:J-LSE-q-ys}
\begin{cases}
 q^{LSE} = \arg \min_{q\geq 0} J^{LSE}(q) \\
 y^{LSE}_{s} = \min_{0\leq y_{s} \leq l} J^{LSE}_{s}(y_{s};q^{LSE})
 \end{cases}
\end{align*}
 
The expected cost for the aggregator under the information state $s$ is
\begin{equation}
\label{eq:J-agg-market}
J^{agg}_{s}(y_{s}) =   \phi(y_s) -  \pi_{s}^{in}y_{s}.
\end{equation}
The optimal selling decision of the aggregator is
\begin{equation*}
\label{eq:J-agg-market_loptselling}
 y^{agg}_{s} =  \min_{0\leq y_{s} \leq l} J^{agg}_{s}(y_{s})
\end{equation*}

Note that the optimal buying/selling decisions of agents (LSE/aggregator) depend on the contingent prices $\pi_s^{in}$. The market is said to be in {\em equilibrium} if the prices are such that the optimal buying and selling decisions of the agents are consistent under {{\em all} realizations of $s$. We make this notion more precise below.
 
\begin{definition}[Competitive Equilibrium with Contingent Prices]
The contingent prices $\{\pi_{s}^{*in}\}$, optimal buying decisions of the LSE $q^{*LSE}, \{y^{*LSE}_{s}\}$, optimal selling decisions of the aggregator $\{y^{*agg}_{s}\}$ constitute a competitive equilibrium, if the following holds for \textbf{all} $s\in S$:
\begin{subnumcases}{\label{eq:defineCEforspot}}
 J^{LSE}(q^{*LSE}) = \min_{q\geq 0} J^{LSE}(q)  \label{eq:defineCEforspota}\\
 J^{LSE}_{s}(y^{*LSE}_{s}) = \min_{0\leq y_{s}\leq l} J^{LSE}_{s}(y_{s};q^{*LSE}) \label{eq:defineCEforspotb} \\
  J^{agg}_{s}(y^{*agg}_{s}) = \min_{0\leq y_{s} \leq l} J^{agg}_{s}(y_{s}) \label{eq:defineCEforspotc} \\
  y^{*LSE}_{s}=y^{*agg}_{s} \label{eq:defineCEforspotd}
\end{subnumcases}
\end{definition}

Here (\ref{eq:defineCEforspota}) and (\ref{eq:defineCEforspotb}) require $(q^{*LSE},y^{*LSE}_{s})$ to be the optimal decision of the buyer,  (\ref{eq:defineCEforspotc}) requires $y^{*agg}_{s}$ to be the optimal decision of the seller, and  (\ref{eq:defineCEforspotd}) ensures that the traded demand response quantities are in balance. We require this balance at all realizations of $s$.

Let $J^{*LSE}$ be the expected cost for the LSE, and let $J^{*agg}$ be the expected cost for the aggregator at any competitive equilibrium. The system cost of the market at any competitive equilibrium is 
\begin{equation}
\label{eq:k1}
J^{*cp} = J^{*LSE} + J^{*agg}.
\end{equation} 
Define the minimum system cost for the social planner as $J^{*e}=J^e(q^e)$. This is a lower bound of the system cost for any market. Therefore, we can use  $J^{*e}$ as a benchmark to evaluate the {\em efficiency} of the options market. The market is called efficient (or socially optimal) if the system cost for the market attains the lower bound $J^{*e}$ at the competitive equilibrium. We make this precise in the following definition.

\begin{definition}[Socially Optimal Equilibrium with Contingent Prices] 
An equilibrium with contingent prices is said to be socially optimal, if $J^{*cp} = J^{*e}$.
\end{definition}  
We now offer the main result of this section. 

\begin{theorem}
\label{thm:ce-market} 
(a) There exists at least one competitive equilibrium under contingent pricing. 
(b) All competitive equilibria are socially optimal. Equivalently, define $y_{s}^*= y_{s}^{*LSE} = y_{s}^{*agg}$ at any competitive equilibrium, then
\begin{subnumcases}{\label{optimalconditionofspot}}
 J^{e}(q^{*LSE}) = \min_{q\geq 0} J^{e}(q),  \label{optimalconditionofspot1}\\
 J_{s}^e(y_{s}^*) \quad = \min_{0\leq y_{s} \leq l}J_{s}^e(y_{s};q^{*LSE}). \label{optimalconditionofspot2} 
\end{subnumcases}
\end{theorem}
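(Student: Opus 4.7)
The plan is to tie both parts of the theorem to the first-order optimality conditions of the LSE and aggregator at a competitive equilibrium, and compare them against the social planner's optimality conditions from Proposition~\ref{thm:dr-social planner}.

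For part (a), I would proceed constructively. Let $(q^e,\{y_s^e\})$ denote the social planner's optimal schedule. I would define candidate contingent prices $\pi_s^{*in} := \phi'(y_s^e)$, and set $q^{*LSE}:=q^e$ and $y_s^{*LSE} = y_s^{*agg} := y_s^e$, then verify each of the four conditions in~\eqref{eq:defineCEforspot}. The aggregator subproblem is strongly convex by Assumption~\ref{disutilityfunction}, and its FOC $\phi'(y_s)=\pi_s^{*in}$ is satisfied at $y_s^e$ by construction. For the LSE's second stage the FOC in $y_s$ reads $\pi_s^{*in} = \overline{\pi}^{rt}_s P_s(l-q-y_s)$; setting $q=q^e$ and comparing with~\eqref{eq:social planner-ys-1} yields $y_s^{*LSE}=y_s^e$, so market clearing holds. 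For the first stage, I would invoke the envelope theorem on the value function $V_s(q)$ of the second-stage problem to get $V_s'(q) = -\overline{\pi}^{rt}_s P_s(l-q-y_s^{*LSE}(q))$; the resulting first-order condition in $q$ then matches~\eqref{eq:social planner-qe-1} at $q=q^e$, and convexity of $J^{LSE}(\cdot)$, established by the same argument used for $J^e(\cdot)$ in Proposition~\ref{thm:dr-social planner}, upgrades this to a global optimum.

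For part (b) I would run the argument in reverse. Take any competitive equilibrium $(\pi_s^{*in}, q^{*LSE}, \{y_s^*\})$. The aggregator's FOC forces $\pi_s^{*in} = \phi'(y_s^*)$ in the interior, with the appropriate subgradient inclusion at the corner $y_s^*=0$. The LSE's second-stage FOC forces $\pi_s^{*in} = \overline{\pi}^{rt}_s P_s(l-q^{*LSE}-y_s^*)$. Eliminating $\pi_s^{*in}$ and using market clearing~\eqref{eq:defineCEforspotd} recovers exactly the condition~\eqref{eq:social planner-ys-1} characterizing the social planner's second-stage optimum. Substituting into the LSE's first-stage FOC reproduces~\eqref{eq:social planner-qe-1}. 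Convexity of $J^e$ and $J_s^e$ from Proposition~\ref{thm:dr-social planner} then identifies $(q^{*LSE},\{y_s^*\})$ as a social planner optimum, so \eqref{optimalconditionofspot1}--\eqref{optimalconditionofspot2} hold and in particular $J^{*cp}=J^{*e}$.

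The hard part will be the boundary and corner bookkeeping: one must handle the case $y_s^*=0$ (where the FOCs become subgradient inequalities rather than equalities) and keep the case split of~\eqref{eq:social planner-ys-1} consistent on both sides of the comparison, so that the equivalence of stationarity conditions really is an equivalence on the full domain $0\le y_s\le l$. A secondary technicality is justifying the envelope-theorem step and the interchange of differentiation with $\mathbb{E}_s[\,\cdot\,]$; this should follow from continuity of $p_s$ together with dominated convergence, but warrants an explicit line in the actual proof.
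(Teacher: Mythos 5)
Your proposal is correct, and the core of it --- matching the first-order conditions of the LSE and the aggregator against the social planner's conditions \eqref{eq:social planner-ys-1} and \eqref{eq:social planner-qe-1} --- is exactly the engine of the paper's proof of part (b). Where you genuinely diverge is in how existence is established. The paper fixes $q$, writes the LSE's demand for curtailment as a continuous decreasing function of $\pi_s^{in}$ and the aggregator's supply as a continuous increasing function of $\pi_s^{in}$, and invokes the intersection of the two curves for each $s$; it then checks that the intersection reproduces \eqref{eq:social planner-ys-1} and that the LSE's first-stage derivative reproduces \eqref{eq:social planner-qe-1}. You instead argue in the style of the second welfare theorem: start from the planner's optimum $(q^e,\{y_s^e\})$, post the supporting prices $\pi_s^{*in}=\phi'(y_s^e)$, and verify all four equilibrium conditions directly. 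Your construction buys a cleaner treatment of the joint determination of $q$ and $\{y_s\}$ --- the paper's curve-intersection argument is carried out for a fixed $q$ and only implicitly closes the loop on the first-stage variable, whereas your candidate pins down $q=q^e$ from the outset --- at the cost of having to justify the envelope step for $V_s'(q)$ and the convexity of $J^{LSE}(\cdot)$, both of which go through by the same computation the paper uses for $J^e(\cdot)$ in Proposition~\ref{thm:dr-social planner}. Your corner bookkeeping also checks out: when $y_s^e=0$ the price $\pi_s^{*in}=\phi'(0)$ satisfies both $\phi'(0)\geq \overline{\pi}^{rt}_s P_s(l-q^e)$ (so the LSE demands zero) and the aggregator's condition for supplying zero, so the construction is consistent on the full domain.
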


The proof is deferred to Appendix C. Condition (\ref{optimalconditionofspot}) requires that competitive equilibrium is the optimal solution to the social planner's problem. Therefore, it can be computed by solving (\ref{eq:social planner-ys-1}) and (\ref{eq:social planner-qe-1}). This result implies that the optimal scheduling of the  social planner can be realized though an intermediate spot market with contingent prices. 
  
\section{Options Markets and Competitive Equilibrium}    
\label{sec:CE-option}  
In the previous section, we showed that the intermediate spot market is efficient. However, implementing intermediate spot markets requires organized infrastructure and regulatory approval which can be prohibitive. We now present an intermediate market for demand response using call options. These are private over-the-counter transactions which do not need utility blessing or organized infrastructure. 

\subsection{Options Market}
\label{options_market_original}
At time $t_0$, the LSE purchases energy $q$ in the day-ahead market. Concurrently, he buys  $x$ units of  options from the aggregator at the {\em option price} $\pi^o$. By purchasing these options, the LSE acquires the right, without the obligation, to receive $y$ units of load reduction from the aggregator where $0\leq y\leq x$. At time $t_1$, the LSE can {\em exercise} these options by paying a {\em strike price} $\pi^{sp}$ per contract. Clearly, the number of exercised options $y_{s}$ depends on the information state $s$ revealed at time $t_{1}$. The strike price $\pi^{sp}$ is {\em ex ante}, and does not depend on the information state. At time $T$, the aggregator delivers the contractually obligated load reduction $y_s$. The LSE observes the wind energy $w$ and purchases the remaining balancing energy $(l-q-y_{s}-w)_{+}$ in the real time market.

Since we are considering a competitive market, we assume the agents are rational and price takers. They make their buying/selling decisions based on the market prices $\pi^{o}$ and $\pi^{sp}$. The expected cost for the LSE is a function of the first stage decisions $q$ and $x$:
\begin{equation}
\label{eq:J-LSE-option-1}
\tilde{J}^{LSE}(q, x) = \pi^o x + \pi^{da} q  + \mathbb{E}_{s}[\min_{0\leq y_{s} \leq x} \tilde{J}^{LSE}_{s}(y_{s}) ], 
\end{equation}
Here $\tilde{J}^{LSE}_{s}(\cdot)$ is the second stage cost for the LSE given by
\begin{equation}
\label{eq:J-LSE-option-2}
\tilde{J}^{LSE}_{s}(y_{s}) = \pi^{sp} y_{s} + \mathbb{E}_{w}[\overline{\pi}^{rt}_{s} (l-q-y_{s}-w)_{+}]. 
\end{equation}
Denote the optimal first and second-stage decisions of the LSE by $\tilde{q}^{LSE}, \tilde{x}^{LSE}$ and $\tilde{y}^{LSE}_{s}$. These decisions solve 
\begin{align}
\begin{cases}
\label{eq:J-LSE-option-q-ys} 
(\tilde{q}^{LSE}, \tilde{x}^{LSE}) = \arg \min_{(q,x)} \tilde{J}^{LSE}(q, x),\nonumber \\ \tilde{y}^{LSE}_{s} = \arg \min_{y_s\leq x } \tilde{J}^{LSE}_{s}(y_s) 
\end{cases}
\end{align}
  
In the options market, the expected cost for the aggregator is
\begin{equation}
\label{eq:J-agg-option}
\tilde{J}^{agg}(x) =   \mathbb{E}_{s}[\phi(y_{s}) - \pi^{sp} y_{s}] -\pi^o x,
\end{equation}

The decision variable of the aggregator is the quantity of options $x$ offered for sale. The optimal selling decision is:
\begin{equation}
\label{eq:J-agg-option_solution}
 \tilde{x}^{agg} = \arg \min_{x\geq 0} \tilde{J}^{agg}(x)
\end{equation}

We now define an equilibrium notion for our options market.
\begin{definition}[Competitive Equilibrium for Options Market]
The options price $\pi^{*o}$, the strike price $\pi^{*sp}$, the optimal day-ahead purchase $\tilde{q}^*$, the optimal buying decision of the LSE $\tilde{x}^{*LSE}$ and the optimal selling decision of the aggregator $\tilde{x}^{*agg}$ constitute a competitive equilibrium if
\begin{align*}
\begin{cases}
 \tilde{J}^{LSE}(\tilde{q}^*,\tilde{x}^{*LSE}) = \min_{(q,x)} \tilde{J}^{LSE}(q,x) \\
\tilde{J}_s^{LSE}(\tilde{y}_s^{*LSE}) = \min_{0\leq y_s \leq \tilde{x}^{*LSE}} \tilde{J}_s^{LSE}(y_s) \\
\tilde{J}^{agg}(\tilde{x}^{*agg}) = \min_{0\leq x \leq l} \tilde{J}^{agg}(x) \\
 \tilde{x}^{*LSE}=\tilde{x}^{*agg}
\end{cases}
\end{align*} 
\end{definition} 

At the competitive equilibrium, the volume of options that the LSE is willing to buy balances the volume  of options that the aggregator is willing to sell. Therefore, we have $\tilde{x}^{LSE} = \tilde{x}^{agg}$.  
We now offer the main results of this section. 
\begin{theorem} 
\label{thm:ce-option_oldversion} 
There exists a competitive equilibrium for options market. Define the following at a competitive equilibrium, $\tilde{q}^{*} = \tilde{q}^{LSE}, \tilde{x}^{*} = \tilde{x}^{*LSE} = \tilde{x}^{*agg}$ and $\tilde{y}^{*}_{s} = \tilde{y}^{*LSE}_{s}$. Then the competitive equilibrium satisfies,
\begin{align*}
\pi^{da} &-\mathbb{E}_{s}[\overline{\pi}^{rt}_{s} P_{s}(l-\tilde{q}^{*}-\tilde{y}^{*}_{s})] = 0 \\
\pi^{*o} &+ \pi^{*LSE} \mathbb{E}_{s}[\mathbb{I}\{\tilde{y}^{*}_{s} = \tilde{x}^{*}\}] \nonumber  \\
& - \mathbb{E}_{s}[\overline{\pi}^{rt}_{s} P_{s}(l-\tilde{q}^{*}-\tilde{y}^{*}_{s})  \mathbb{I}\{\tilde{y}^{*}_{s} = \tilde{x}^{*}\}] = 0 \\
\pi^{*o} &+ \pi^{*LSE} \mathbb{E}_{s}[\mathbb{I}\{\tilde{y}^{*}_{s} = x^{*}\}] - \phi'(x^{*}) \mathbb{E}_{s}[\mathbb{I}\{\tilde{y}^{*}_{s} = \tilde{x}^{*}\}] = 0, 
\end{align*}
where $\tilde{y}_s^*$ satisfies,
\begin{align*}
\tilde{y}^*_{s} &= \left\lbrace \begin{array}{ccc}
0, \quad \quad  \quad \quad \quad \quad \quad \quad \text{if}~ P_{s}(l-\tilde{q}^*) < \pi^{*sp}/\bar{\pi}^{rt}_{s}  \\
\tilde{x}^{*},   \quad \quad \quad \quad \quad \quad  \text{if}~P_{s}(l-\tilde{q}^*-\tilde{x}^*) > \pi^{*sp}/\bar{\pi}^{rt}_{s} \\
l - \tilde{q}^*- P^{-1}_{s}(\pi^{*sp}/\bar{\pi}^{rt}_{s}), \quad \quad \quad \quad \quad  \textnormal{otherwise}
\end{array} \right.
\end{align*}
\end{theorem}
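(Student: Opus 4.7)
The plan is to split the theorem into two tasks: (i) characterization of any equilibrium by the displayed first-order conditions, and (ii) existence of at least one such equilibrium. The characterization follows from convex subdifferential analysis of each agent's program combined with Danskin's envelope theorem; existence is the harder half and will be handled by a fixed-point argument exploiting convexity and monotonicity of the best responses.

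I would first dispose of the LSE's inner second-stage problem. For fixed $(q,x)$ and realized state $s$, the map $y_s \mapsto \tilde{J}_s^{LSE}(y_s)$ is convex (linear plus an expectation of a convex hinge) with derivative $\pi^{sp}-\overline{\pi}_s^{rt}P_s(l-q-y_s)$, which is nondecreasing in $y_s$ since $P_s$ is nondecreasing. Comparing this derivative to zero at the two endpoints of $[0,x]$ directly yields the stated three-branch formula, with the interior stationary point $y_s=l-q-P_s^{-1}(\pi^{sp}/\overline{\pi}_s^{rt})$.

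Next I would derive the outer first-stage FOCs. Because $\tilde{J}^{LSE}(q,x)$ is a partial minimization of a jointly convex function over a convex feasible set, it is itself jointly convex in $(q,x)$. Danskin's envelope theorem gives
\[
\partial_q\tilde{J}^{LSE}=\pi^{da}-\mathbb{E}_s[\overline{\pi}_s^{rt}P_s(l-q-\tilde{y}_s^{*})],
\]
which is the first displayed equation. The derivative with respect to $x$ vanishes on states where the upper bound is inactive and equals $\pi^{sp}-\overline{\pi}_s^{rt}P_s(l-q-x)$ on states where it is active, so
\[
\partial_x\tilde{J}^{LSE}=\pi^{o}+\mathbb{E}_s[(\pi^{sp}-\overline{\pi}_s^{rt}P_s(l-q-x))\mathbb{I}\{\tilde{y}_s^{*}=x\}],
\]
which matches the second displayed equation once the apparent typo $\pi^{*LSE}\mapsto\pi^{*sp}$ is corrected. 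On the aggregator side, the delivery obligation forces $y_s$ to coincide with the LSE's exercise, so the only free decision is $x$; since $\partial y_s/\partial x=\mathbb{I}\{\tilde{y}_s^{*}=x\}$, convex differentiation of $\tilde{J}^{agg}$ produces the third displayed equation.

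The main obstacle will be existence. The unknowns $(q^{*},x^{*},\pi^{*o},\pi^{*sp})$ are coupled only by the single clearing condition $\tilde{x}^{*LSE}=\tilde{x}^{*agg}$, so I expect a one-parameter family of equilibria. My plan is to reduce to a single-price fixed point: fix $\pi^{sp}$ as a free parameter, let the LSE's best-response quantity $x^{LSE}(\pi^{o};\pi^{sp})$ and the aggregator's best-response quantity $x^{agg}(\pi^{o};\pi^{sp})$ be defined by their respective convex programs, and clear the market in $\pi^{o}$. Strong convexity of $\phi$ (Assumption 2) and joint convexity of $\tilde{J}^{LSE}$ make both best responses single-valued and continuous; moreover $x^{LSE}$ is nonincreasing and $x^{agg}$ is nondecreasing in $\pi^{o}$ because $\pi^{o}x$ enters the LSE's objective as a penalty and the aggregator's as a reward. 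Boundedness in $[0,l]$, together with the boundary behavior (low $\pi^{o}$ pushes demand above supply; high $\pi^{o}$ reverses this), lets the intermediate value theorem produce a clearing $\pi^{o*}(\pi^{sp})$, and the corresponding primal variables, together with $\tilde{y}_s^{*}$ from the three-branch formula, satisfy all three FOCs and constitute the claimed competitive equilibrium.
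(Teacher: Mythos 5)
Your proposal follows essentially the same route as the paper: characterize the second-stage minimizer by its three-branch formula, derive the first-stage first-order conditions from convexity of $\tilde{J}^{LSE}(q,x)$ and $\tilde{J}^{agg}(x)$, and obtain existence by continuity of the best responses plus an intermediate-value argument on the excess demand $x^{LSE}-x^{agg}$. Your two refinements are genuine improvements in economy: obtaining joint convexity of $\tilde{J}^{LSE}$ as a partial minimization of a jointly convex function over the convex set $\{0\le y_s\le x\}$ avoids the paper's explicit Hessian computation, and clearing the market in $\pi^{o}$ alone for each fixed $\pi^{sp}$ is cleaner than the paper's two-price continuity sketch (which the authors themselves leave as ``an intuitive proof sketch'').

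There is, however, one concrete gap in your existence half. You assert that ``strong convexity of $\phi$ and joint convexity of $\tilde{J}^{LSE}$ make both best responses single-valued and continuous.'' Strong convexity of $\phi$ does handle the aggregator, but mere joint convexity of $\tilde{J}^{LSE}$ does not give a single-valued or continuous $x^{LSE}(\pi^{o};\pi^{sp})$ --- and your Danskin-style argument delivers only convexity, not strict convexity. The paper closes exactly this hole by computing the Hessian explicitly, showing it has the form $\bigl(\begin{smallmatrix} a+b & a\\ a & a\end{smallmatrix}\bigr)$ with $a=\int_0^{s_1}\overline{\pi}^{rt}_{s}p_s(l-q-x)\alpha(s)\,ds>0$ and $b=\int_{s_2}^{1}\overline{\pi}^{rt}_{s}p_s(l-q)\alpha(s)\,ds>0$, and then invoking the implicit function theorem to get continuity of $(\tilde{q}^{LSE},\tilde{x}^{LSE})$ in the prices. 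Without that strict convexity (or, alternatively, an upper-hemicontinuity plus Kakutani argument for the best-response correspondence), your intermediate-value step does not go through as written. The rest of the proposal --- the three-branch formula, the envelope-theorem derivation of $\partial_q\tilde{J}^{LSE}$ and $\partial_x\tilde{J}^{LSE}$, the observation that only states with $\tilde{y}_s^{*}=x$ contribute to the aggregator's marginal condition, and your reading of $\pi^{*LSE}$ as a typo for $\pi^{*sp}$ --- all match the paper's argument.
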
 

The proof of this theorem is given in Appendix D. We comment that the competitive equilibrium consists of four variables $(\pi^{*o},\pi^{*sp},\tilde{q}^*,\tilde{x}^*)$ determined by three equations. Therefore, there is one degree of freedom which induces multiple competitive equilibria. We will illustrate these equilibria prices through a numerical simulation in Section \ref{simulation_sec}.

\subsection{Redesign of Options Market}
\label{options_market_redesign}

The options market proposed in the previous section is asymmetric with respect to the decision of the buyer and the seller. That is the decision of the LSE is $q$ and $x$, while the decision of the aggregator is only $x$. This asymmetry can provide market advantage to the buyer. To address this concern, we propose a redesign of the options market where the decision of the buyer and the seller is symmetric. We show the existence of a competitive equilibrium, and study its various properties.

\noindent B.1 {\em Symmetric Decision Making}

Consider the following modification to the options market: before time $t_0$, the aggregator proposes a demand response offer to the LSE. The aggregator chooses $l'>0$ and dictates that $x+q=l'$. This endows the aggregator the power to negotiate on $q$: the aggregator offers $x$ units of options, only if the LSE buys $l'-x$ units of energy in the day-ahead market. For the moment, we treat $l'$ as given. 

Upon receiving the demand response offer, the LSE decides whether or not to accept it. There is no trade of load reduction if the offer is not accepted. When the offer is accepted, the expected cost for the LSE is
\begin{equation}
\label{eq:J-LSE-option-1_redesign}
\tilde{J}^{LSE}(x) = \pi^o x + \pi^{da} (l'-x)  + \mathbb{E}_{s}[\min_{y_{s}, y_{s} \leq x} \tilde{J}^{LSE}_{s}(y_{s};x) ], 
\end{equation}
where $\tilde{J}_s^{LSE}(\cdot)$ is the second stage cost and is given by, 
\begin{equation}
\label{eq:J-LSE-option-2_redesign}
\tilde{J}^{LSE}_{s}(y_{s};x) = \pi^{sp} y_{s} + \mathbb{E}_{w}[\overline{\pi}^{rt}_{s} (l-l'+x-y_{s}-w)_{+}]. 
\end{equation}
The optimal first and second-stage decisions of the LSE are 
\begin{align}
\label{eq:J-LSE-option-q-ys_redesign} 
\begin{cases}
\tilde{x}^{LSE} = \arg \min_{x\geq 0} \tilde{J}^{LSE}(x), \\ \tilde{y}^{LSE}_{s}(x) = \arg \min_{y_s\leq x} \tilde{J}^{LSE}_{s}(y_s; x) 
\end{cases}
\end{align}
Note that the second-stage decision $\tilde{y}_s^{LSE}$ depends on $x$. From now on, we do not express this dependence explicitly as it is implied by  context. The   expected cost for the aggregator and its optimal decisions remain as in (\ref{eq:J-agg-option}) and (\ref{eq:J-agg-option_solution}).  

We assume that the LSE and the aggregator are price takers. The options market attains a competitive equilibrium if the supply of options balances the demand of options. 

%More precisely, we have:
\begin{definition}[Competitive Equilibrium for Options Market]
Given any $l'$ such that $0\leq l' \leq l$, the options price $\pi^{*o}$, the strike price $\pi^{*sp}$, the optimal buying decision of the LSE $\tilde{x}^{*LSE}$ and the optimal selling decision of the aggregator $\tilde{x}^{*agg}$ constitute a competitive equilibrium if:
\begin{subnumcases}{\label{competitive_opt}}
 \tilde{J}^{LSE}(\tilde{x}^{*LSE}) = \min_{0\leq x\leq l'} \tilde{J}^{LSE}(x),  \label{competitive_opta}\\
\tilde{J}_s^{LSE}(\tilde{y}_s^{*LSE}) = \min_{0\leq y_s \leq x} \tilde{J}_s^{LSE}(y_s;x^{*LSE}) \label{competitive_optb} \\
\tilde{J}^{agg}(\tilde{x}^{*agg}) = \min_{0\leq x \leq l'} \tilde{J}^{agg}(x) \label{competitive_optc} \\
  \tilde{x}^{*LSE} = \tilde{x}^{*agg}. \label{competitive_optd}
\end{subnumcases}
\end{definition}

The choice of $l'$ is determined by the willingness of the LSE to accept the demand response offer. If the LSE accepts the offer, its optimal cost is $\tilde{J}^{LSE}(\tilde{x}^{LSE})$. Else, its cost is equal to that of optimal cost without DR, i.e., $J_{ndr}^{*e}$. Thus,  the LSE will accept the contract proposed by the aggregator if
\begin{equation}
\label{participationcons}
J_{ndr}^{*e} \geq \tilde{J}^{LSE}(\tilde{x}^{LSE}).
\end{equation} 
However, $\tilde{J}^{LSE}(x^{LSE})$ depends on the options price $\pi^o$, which is not revealed when the LSE makes the decision. Ideally,  $l'$ should be such that (\ref{participationcons}) holds for any $\pi^o$. We present a candidate of $l^{'}$ that satisfies this condition:

\begin{proposition}
\label{howtochoosecontract}
If $l'=q_{ndr}^e$, the LSE always accepts the demand response offer, i.e., $J_{ndr}^{*e}\geq \tilde{J}^{LSE}(x^{LSE})$ for $\forall \pi^o\geq 0$.
\end{proposition}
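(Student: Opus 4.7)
My plan is to exploit the fact that when $l' = q_{ndr}^e$, the LSE can always mimic its no-demand-response behavior by simply choosing $x=0$, and then invoke optimality of $\tilde{x}^{LSE}$. The key observation is that buying zero options is always an admissible decision in the redesigned market, and it reduces the LSE's cost to the no-DR benchmark cost.

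First I would evaluate $\tilde{J}^{LSE}(x)$ at $x=0$. Substituting $x=0$ into \eqref{eq:J-LSE-option-1_redesign}, the option premium term $\pi^o x$ vanishes for any $\pi^o \geq 0$, the day-ahead purchase becomes $l' - 0 = l'$, and the inner minimization in the second stage is over $0 \leq y_s \leq 0$, forcing $y_s = 0$. Plugging $y_s = 0$ and $x = 0$ into \eqref{eq:J-LSE-option-2_redesign} gives $\tilde{J}^{LSE}_s(0;0) = \mathbb{E}_w[\bar{\pi}^{rt}_s (l - l' - w)_+]$. Setting $l' = q_{ndr}^e$ therefore yields
\[
\tilde{J}^{LSE}(0) \;=\; \pi^{da}\, q_{ndr}^e + \mathbb{E}_s\!\left[\mathbb{E}_w[\bar{\pi}^{rt}_s (l - q_{ndr}^e - w)_+]\right].
\]
By the law of iterated expectation together with the conditional independence of $\pi^{rt}$ and $w$ given $s$ (Assumption \ref{as:w-pirt}(iii)), the expectation collapses to $\mathbb{E}[\pi^{rt}(l-q_{ndr}^e - w)_+]$, which is precisely $J^e_{ndr}(q_{ndr}^e) = J^{*e}_{ndr}$ by \eqref{eq:social planner-J-ndr-1}.

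Next I would invoke the optimality of $\tilde{x}^{LSE}$. Since $x=0$ lies in the feasible set $[0,l']$ of the LSE's first-stage optimization in \eqref{eq:J-LSE-option-q-ys_redesign}, and $\tilde{x}^{LSE}$ is the minimizer, we have $\tilde{J}^{LSE}(\tilde{x}^{LSE}) \leq \tilde{J}^{LSE}(0) = J^{*e}_{ndr}$. Crucially, this inequality holds for \emph{every} option price $\pi^o \geq 0$, because the computation of $\tilde{J}^{LSE}(0)$ did not depend on $\pi^o$ at all. This is exactly the participation condition \eqref{participationcons}, completing the proof.

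There is no real obstacle here; the argument is essentially a feasibility-plus-optimality observation. The only mild subtlety worth highlighting in the write-up is the reduction of $\mathbb{E}_s[\bar{\pi}^{rt}_s (\cdot)_+]$ to $\mathbb{E}[\pi^{rt}(\cdot)_+]$, which relies on the conditional independence in Assumption \ref{as:w-pirt}(iii); this is why the choice $l' = q_{ndr}^e$ works precisely and not merely approximately. The proof is short and can be presented in a few lines.
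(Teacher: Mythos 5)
Your proof is correct and follows the same route as the paper: the paper justifies Proposition \ref{howtochoosecontract} with exactly this feasibility-plus-optimality observation, namely that the decision $x=0$, $q=l'=q^e_{ndr}$ is admissible, reproduces the no-DR benchmark cost $J^{*e}_{ndr}$ independently of $\pi^o$, and therefore upper-bounds $\tilde{J}^{LSE}(\tilde{x}^{LSE})$. Your write-up is in fact more careful than the paper's one-paragraph sketch, since you make explicit the reduction of $\mathbb{E}[\pi^{rt}(l-q^e_{ndr}-w)_+]$ to $\mathbb{E}_s[\bar{\pi}^{rt}_s\,\mathbb{E}_w[(l-q^e_{ndr}-w)_+]]$ via Assumption \ref{as:w-pirt}(iii).
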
 

The idea is as follows: $q=q_{ndr}^e$ is the optimal decision of the LSE if it declines the demand response offer. Therefore, when $l'=q_{ndr}^e$, the LSE loses nothing if it accepts the demand response offer. This is because, there exists a LSE decision, i.e., $x = 0$ and $q = q_{ndr}^e$, that satisfies the condition $x+q=q_{ndr}^e$ and also attains the same cost. 

\vspace{0.3cm}
\noindent B.2 {\em Properties of Competitive Equilibrium}
\vspace{0.1cm}

We now focus on the existence, efficiency and optimality of the competitive equilibrium for options market. 
\begin{theorem} 
\label{thm:ce-option} 
Given any $l'\in [0,l]$, there exists a competitive equilibrium $(\pi^{*o}, \pi^{*sp}, \tilde{x}^{*LSE}, \tilde{x}^{*agg})$ for the options market, and $\tilde{x}^{*LSE} = \tilde{x}^{*agg}$ is the optimal solution to: 
\begin{equation}
\label{eq:ce-option-ys}
\min_{0\leq x \leq l'}  \pi^{da} (l'-x)+ \mathbb{E}_{s} [\phi(\tilde{y}_s^{LSE})+\tilde{J}_s^{LSE}(\tilde{y}_s^{LSE})],
\end{equation}
where $\tilde{y}_s^{LSE}$ is the second stage optimal decision for the LSE and,
\begin{align}
\label{optimaLSEcondstagedec}
\tilde{y}^{LSE}_{s} &= \left\lbrace \begin{array}{ccc}
0, \quad \quad  \quad \quad \quad \quad \quad \quad \text{if}~ P_{s}(l-l'+x) < \pi^{*sp}/\bar{\pi}^{rt}_{s}  \\
x, \quad \quad \quad \quad \quad \quad \quad \quad \quad \quad  \text{if}~P_{s}(l-l') > \pi^{*sp}/\bar{\pi}^{rt}_{s} \\
l - l'+x- P^{-1}_{s}(\pi^{*sp}/\bar{\pi}^{rt}_{s}), \quad \quad \quad \quad \quad  \textnormal{otherwise}
\end{array} \right.
\end{align}
\end{theorem}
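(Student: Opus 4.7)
The plan is to treat the proof in three coupled pieces: characterize the second-stage exercise rule (\ref{optimaLSEcondstagedec}); recognize (\ref{eq:ce-option-ys}) as the partial minimization (in $\{y_s\}$) of a jointly convex program that must admit a solution on its compact feasible set; and extract the equilibrium prices from the KKT multipliers of that program, then verify they reproduce each agent's individual first-order conditions. For the second stage, I would fix $x$ and $\pi^{sp}$ and solve $\min_{0\le y_s\le x}\tilde{J}_s^{LSE}(y_s;x)$. Since $(l-l'+x-y_s-w)_+$ is convex in $y_s$, $\tilde{J}_s^{LSE}$ is convex with derivative $\pi^{sp}-\overline{\pi}^{rt}_s P_s(l-l'+x-y_s)$, and the KKT conditions with box multipliers on $y_s\ge 0$ and $y_s\le x$ split into exactly the three regimes of (\ref{optimaLSEcondstagedec}), triggered by the sign of $\pi^{sp}-\overline{\pi}^{rt}_s P_s(l-l'+x)$ at zero and of $\pi^{sp}-\overline{\pi}^{rt}_s P_s(l-l')$ at the upper bound.

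Next I would lift (\ref{eq:ce-option-ys}) to the jointly convex program
\begin{equation*}
\min_{\substack{0\le x\le l'\\0\le y_s\le x}}\;\pi^{da}(l'-x)+\mathbb{E}_s\!\bigl[\phi(y_s)+\pi^{sp}y_s+\mathbb{E}_w[\overline{\pi}^{rt}_s(l-l'+x-y_s-w)_+]\bigr].
\end{equation*}
Joint convexity in $(x,\{y_s\})$ follows from strong convexity of $\phi$ (Assumption~\ref{disutilityfunction}), convexity of $(\cdot)_+$ composed with an affine function, and the convexity-preserving property of expectation; the feasible set is a convex polytope, so a minimizer $(\tilde{x}^*,\{\tilde{y}_s^*\})$ exists, and partial minimization in $\{y_s\}$ reproduces (\ref{optimaLSEcondstagedec}) and leaves the reduced problem (\ref{eq:ce-option-ys}) convex in $x$ alone. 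I would then identify $\pi^{*sp}$ from the interior-regime stationarity $\overline{\pi}^{rt}_s P_s(l-l'+\tilde{x}^*-\tilde{y}_s^*)=\pi^{*sp}$, and $\pi^{*o}$ from the $x$-stationarity of the joint program once the aggregator's first-order condition is imposed; the degree of freedom noted after Theorem~\ref{thm:ce-option_oldversion} then manifests as a choice of one of the two prices, with the other determined uniquely.

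To conclude, I would substitute $(\pi^{*o},\pi^{*sp},\tilde{x}^*)$ into the individual problems and verify stationarity via the envelope theorem. For the LSE, $V_s(x)=\tilde{J}_s^{LSE}(\tilde{y}_s^{LSE}(x);x)$ satisfies $V_s'(x)=\overline{\pi}^{rt}_s P_s(l-l'+x)$ on the zero-exercise regime and $V_s'(x)=\pi^{sp}$ on the other two regimes, so $\pi^o-\pi^{da}+\mathbb{E}_s[V_s'(x)]=0$ at $\tilde{x}^*$; for the aggregator, since $d\tilde{y}_s^{LSE}/dx\in\{0,1\}$ across regimes, $-\pi^o+\mathbb{E}_s[(\phi'(\tilde{y}_s)-\pi^{sp})\mathbb{I}\{\tilde{y}_s>0\}]=0$ at $\tilde{x}^*$. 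Market clearing $\tilde{x}^{*LSE}=\tilde{x}^{*agg}=\tilde{x}^*$ then holds by construction, which completes both existence and the claimed characterization as the minimizer of (\ref{eq:ce-option-ys}).

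The main obstacle I anticipate is the non-smooth, piecewise dependence of $\tilde{y}_s^{LSE}$ on $x$, which blocks a direct differentiation of the reduced problem (\ref{eq:ce-option-ys}). I would sidestep it by keeping the analysis at the joint-convex level whenever possible and invoking the envelope theorem for the partial minimization, so that the piecewise form of $V_s'(x)$ appears only inside clean expectations. Strong convexity of $\phi$ from Assumption~\ref{disutilityfunction} is then crucial to guarantee that the extracted $\pi^{*o}$ is uniquely consistent with both the LSE's and the aggregator's first-order conditions at $\tilde{x}^*$.
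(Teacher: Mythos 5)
Your first step (the box-constrained KKT derivation of \eqref{optimaLSEcondstagedec}) and your final verification of the two agents' first-order conditions via the envelope theorem are both sound and agree with the paper's Leibniz-rule computations. The problem is the central device of your second paragraph: you claim that \eqref{eq:ce-option-ys} is the partial minimization over $\{y_s\}$ of the jointly convex program
\[
\min_{0\le x\le l',\,0\le y_s\le x}\;\pi^{da}(l'-x)+\mathbb{E}_s\bigl[\phi(y_s)+\pi^{sp}y_s+\mathbb{E}_w[\overline{\pi}^{rt}_s(l-l'+x-y_s-w)_+]\bigr],
\]
and that this partial minimization ``reproduces \eqref{optimaLSEcondstagedec}.'' It does not. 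Minimizing that objective over $y_s$ gives the interior stationarity condition $\phi'(y_s)+\pi^{sp}=\overline{\pi}^{rt}_sP_s(l-l'+x-y_s)$, whereas \eqref{optimaLSEcondstagedec} is the LSE's \emph{private} exercise rule $\pi^{sp}=\overline{\pi}^{rt}_sP_s(l-l'+x-y_s)$, which ignores $\phi$ entirely. The whole point of the theorem --- and of Proposition \ref{efficiencyofoptions}, whose proof explicitly notes that \eqref{eq:ce-option-ys} ``fixes $y_s$ according to \eqref{optimaLSEcondstagedec}'' rather than optimizing over it --- is that the social cost is evaluated at a second-stage decision that is \emph{not} socially optimal. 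Consequently \eqref{eq:ce-option-ys} is not the value function of a jointly convex program, its convexity in $x$ cannot be inherited from joint convexity, and the equilibrium prices cannot be read off as KKT multipliers of that program. Your own interior-regime identification $\overline{\pi}^{rt}_sP_s(l-l'+\tilde x^*-\tilde y_s^*)=\pi^{*sp}$ is inconsistent with the stationarity of the program you wrote down, which signals the gap.

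The paper closes this differently: it proves convexity of $\tilde J^{LSE}(x)$ and $\tilde J^{agg}(x)$ separately (Lemmas \ref{lem:ce-options-LSE-foc_redesign} and \ref{aggconvex}, with $y_s$ substituted as the piecewise function of $x$ and differentiated via the Leibniz rule), establishes existence by a supply--demand intersection argument in the prices, and obtains the characterization as the minimizer of \eqref{eq:ce-option-ys} by writing $\tilde J^{LSE}(x)=C^{LSE}(x)+\pi^ox$, $\tilde J^{agg}(x)=C^{agg}(x)-\pi^ox$ and \emph{adding the two optimality inequalities} at the equilibrium so the transfer $\pi^ox$ cancels --- no differentiation of the reduced problem is needed. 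If you want to keep your program-based route, the fix is to treat $\pi^{sp}$ as fixed, define $\tilde y_s^{LSE}(x)$ by \eqref{optimaLSEcondstagedec}, prove convexity of the reduced objective in $x$ directly (this is where Assumption \ref{disutilityfunction} and the fact that $d\tilde y_s^{LSE}/dx\in\{0,1\}$ enter), and then set $\pi^{*o}=\frac{d}{dx}C^{agg}(\tilde x^*)=-\frac{d}{dx}C^{LSE}(\tilde x^*)$, which is consistent for both agents precisely because the derivative of the sum vanishes at the minimizer.
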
 

The proof is given in Appendix E. The optimization problem (\ref{eq:ce-option-ys}) is convex, and the optimal value of (\ref{eq:ce-option-ys}) is the social cost at the competitive equilibrium of the options market.

Similar to the options market in Section \ref{options_market_original}, there are multiple competitive equilibria because for any equilibrium price pair $\pi^{s*o}$ and $\pi^{*sp}$, a higher options price with a lower strike price can be equally acceptable to both the LSE and the aggregator. 
To compare the efficiency of different markets, let $\tilde{J}^{*LSE}(\pi^{sp})$ and $\tilde{J}^{*agg}(\pi^{sp})$ be the expected cost at competitive equilibrium for the LSE and the aggregator, respectively. Define the system cost at competitive equilibrium by,
\begin{equation}
\label{eq:k1}
\tilde{J}^{*cp}(\pi^{sp}) = \tilde{J}^{*LSE}(\pi^{sp})+ \tilde{J}^{*agg}(\pi^{sp}).
\end{equation} 
In addition, let $J_{ndr}^{*e}$ be the optimal value of  problem (\ref{eq:social planner-J-ndr-1}), then the following proposition provides a comparison of the optimal cost of the different markets that we have discussed so far.
\begin{proposition}
\label{efficiencyofoptions}
Given any $l'$ and $\pi^{sp}$, the social cost of the options  market at the competitive equilibrium is lower bounded by $J^{*cp}$ and upper bounded by $J_{ndr}^{*e}$, i.e., $J^{*cp}\leq \tilde{J}^{*cp}(\pi^{sp})  \leq J_{ndr}^{*e}$.
\end{proposition}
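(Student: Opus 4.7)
The plan is to treat the two inequalities separately, leveraging two structural observations: (i) the option premium $\pi^{o}x$ and the strike payment $\pi^{sp}y_{s}$ are internal transfers between the LSE and the aggregator, so they cancel in the social cost $\tilde{J}^{*cp}=\tilde{J}^{*LSE}+\tilde{J}^{*agg}$; and (ii) both agents are rational participants who would refuse the contract if it did not weakly improve on their outside options, so any competitive equilibrium that actually clears automatically satisfies both participation constraints.

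For the lower bound, first I would expand $\tilde{J}^{*cp}(\pi^{sp})$ using (\ref{eq:J-LSE-option-1_redesign})--(\ref{eq:J-LSE-option-2_redesign}) and (\ref{eq:J-agg-option}) and cancel the transfers $\pi^{o}\tilde{x}^{*}$ and $\pi^{sp}\tilde{y}_{s}^{*}$. After substituting $q^{*}=l'-\tilde{x}^{*}$, what remains is exactly the social planner's cost functional of (\ref{eq:social planner-J-dr})--(\ref{eq:social planner-J2}) evaluated at $(q^{*},\{\tilde{y}_{s}^{*}\})$, where the second-stage quantities are the LSE-chosen (and possibly socially suboptimal) responses determined by (\ref{optimaLSEcondstagedec}). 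Since the social planner's problem (\ref{eq:social planner-q-y-1}) minimizes this functional over all feasible $(q,\{y_{s}\})$, the equilibrium point is merely a feasible one, giving $\tilde{J}^{*cp}(\pi^{sp})\geq J^{*e}$. Theorem~\ref{thm:ce-market} then identifies $J^{*e}=J^{*cp}$, which closes this direction.

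For the upper bound, I would invoke the two participation inequalities directly. On the LSE side, the outside option is the no-DR scheduling of Section~\ref{sec:opt-sch-ndr} with optimal cost $J^{*e}_{ndr}$; a rational LSE transacts only when (\ref{participationcons}) holds, which gives $\tilde{J}^{*LSE}(\pi^{sp})\leq J^{*e}_{ndr}$ at any competitive equilibrium. On the aggregator side, the natural outside option is to sell no options, i.e. $x=0$, which forces $y_{s}\equiv 0$ and yields $\tilde{J}^{agg}(0)=\phi(0)=0$ under the natural normalization; optimality condition (\ref{competitive_optc}) then yields $\tilde{J}^{*agg}(\pi^{sp})\leq 0$. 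Summing these one-sided bounds produces $\tilde{J}^{*cp}(\pi^{sp})\leq J^{*e}_{ndr}$.

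The main obstacle is making the LSE-side participation argument airtight at a competitive (price-taking) equilibrium for an arbitrary $l'$: the inequality $\tilde{J}^{*LSE}\leq J^{*e}_{ndr}$ presumes the LSE is willing to transact at all. Proposition~\ref{howtochoosecontract} gives the clean sufficient condition $l'=q^{e}_{ndr}$, under which the LSE can always mimic its no-DR outside option by choosing $x=0$ and thus participation is automatic; for a generic $l'$ one only directly establishes $\tilde{J}^{*cp}(\pi^{sp})\le J^{e}_{ndr}(l')$ via the $x=0$ deviation, and the tighter bound $J^{*e}_{ndr}$ requires reading the statement together with Proposition~\ref{howtochoosecontract}. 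The remainder of the proof is bookkeeping, and no new convexity or envelope argument is required beyond the structure already exposed by Theorem~\ref{thm:ce-option}.
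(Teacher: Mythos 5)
Your proof is correct and follows essentially the same route as the paper: the lower bound comes from recognizing the equilibrium allocation as a feasible but possibly suboptimal point of the social planner's problem (plus Theorem \ref{thm:ce-market} to identify $J^{*cp}$ with $J^{*e}=\min_q J^e(q)$), and the upper bound comes from the $x=0$ deviation, which---since Theorem \ref{thm:ce-option} identifies the equilibrium social cost with the optimal value of the aggregate convex program (\ref{eq:ce-option-ys}), i.e.\ the sum of the two agents' costs net of transfers---is exactly what the paper does by plugging $x=0$ into that program rather than summing the two one-sided participation bounds as you do. Your observation that for generic $l'$ the $x=0$ deviation only yields $\tilde{J}^{*cp}(\pi^{sp})\leq J^{e}_{ndr}(l')\geq J^{*e}_{ndr}$, so that the stated upper bound really needs $l'=q^{e}_{ndr}$ (Proposition \ref{howtochoosecontract}) together with the normalization $\phi(0)=0$, is a genuine subtlety that the paper's one-line claim ``when $x=0$ the value of (\ref{eq:ce-option-ys}) attains $J^{*e}_{ndr}$'' silently glosses over.
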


The proof of Proposition \ref{efficiencyofoptions} is given in the Appendix. It indicates that the efficiency of the options market outperforms that of the market without demand response, but is no better than that of the spot market with contingent prices. 

The following theorem presents the optimal strike price that minimizes the social cost at the competitive equilibrium:
\begin{theorem}
\label{optimalstrikeprice}
There exists an optimal strike price $\tilde{\pi}^{*LSE}$, such that $\tilde{J}^{*cp}(\tilde{\pi}^{*LSE})\leq \tilde{J}^{*cp}(\pi^{sp})$ for all $\pi^{sp}$, and 
$\tilde{\pi}^{*LSE}$ satisfies:
\begin{equation}
\tilde{\pi}^{*LSE}=\dfrac{\int_{s_1}^{s_2} \phi'(y_s) \beta(s)ds  }{\int_{s_1}^{s_2}\beta(s)ds }  
\end{equation}
where $\beta(s)=\dfrac{\alpha(s)}{\bar{\pi}_s^{rt}p_s(l-q-y_s)}$.
\end{theorem}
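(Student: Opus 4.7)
The plan is to express the equilibrium social cost as a function of the strike price and then minimize by a first-order condition. Theorem~\ref{thm:ce-option} gives the map $\pi^{sp}\mapsto(x(\pi^{sp}),\, y_s(\pi^{sp}))$ of equilibrium first- and second-stage quantities, and substitution yields
\[
\tilde{J}^{*cp}(\pi^{sp}) = \pi^{da}(l'-x) + \int_0^1 \alpha(s)\Bigl[\phi(y_s) + \overline{\pi}^{rt}_s \int_0^{l-l'+x-y_s} P_s(z)\,dz\Bigr]\,ds.
\]
From \eqref{optimaLSEcondstagedec} the state space splits into three regions: $[0,s_1]$ where $y_s=0$, an interior region $[s_1,s_2]$ where $\overline{\pi}^{rt}_s P_s(l-l'+x-y_s)=\pi^{sp}$, and $[s_2,1]$ where $y_s=x$. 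Only in the interior does $y_s$ depend on $\pi^{sp}$ beyond its shared dependence through $x$, and there $dy_s/d\pi^{sp}=dx/d\pi^{sp}-1/[\overline{\pi}^{rt}_s p_s(l-l'+x-y_s)]$.

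Differentiating $\tilde{J}^{*cp}$ and grouping terms by the chain-rule channel they come from, I will show that
\[
\frac{d\tilde{J}^{*cp}}{d\pi^{sp}} = B(\pi^{sp})\,\frac{dx}{d\pi^{sp}} \;-\; \int_{s_1}^{s_2}\bigl(\phi'(y_s)-\pi^{sp}\bigr)\,\beta(s)\,ds,
\]
with $\beta(s)=\alpha(s)/[\overline{\pi}^{rt}_s p_s(l-l'+x-y_s)]$. The central obstacle is to show that $B(\pi^{sp})=0$ at every equilibrium, so that the derivative decouples from the unknown $dx/d\pi^{sp}$. After replacing $\overline{\pi}^{rt}_s P_s(l-l'+x-y_s)$ by $\pi^{sp}$ in the interior, direct bookkeeping reduces $B$ to
\[
B(\pi^{sp}) = -\pi^{da} + \int_0^{s_1}\alpha(s)\overline{\pi}^{rt}_s P_s(l-l'+x)\,ds + \int_{s_1}^{s_2}\alpha(s)\phi'(y_s)\,ds + \int_{s_2}^1\alpha(s)\phi'(x)\,ds.
\]
I then invoke the LSE's first-order condition in $x$, obtained by envelope-theorem differentiation of \eqref{eq:J-LSE-option-1_redesign},
\[
\pi^{da}=\pi^o+\int_0^{s_1}\alpha(s)\overline{\pi}^{rt}_s P_s(l-l'+x)\,ds+\pi^{sp}\int_{s_1}^1\alpha(s)\,ds,
\]
together with the aggregator's first-order condition in $x$, derived from \eqref{eq:J-agg-option}--\eqref{eq:J-agg-option_solution} using $dy_s/dx=0$ on $[0,s_1]$ and $1$ elsewhere,
\[
\pi^o=\int_{s_1}^1\alpha(s)\bigl(\phi'(y_s)-\pi^{sp}\bigr)\,ds.
\]
Substituting eliminates both $\pi^{da}$ and $\pi^o$, and a telescoping cancellation (using $y_s=x$ on $[s_2,1]$) produces $B(\pi^{sp})=0$.

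Once $B=0$, the stationarity condition collapses to $\int_{s_1}^{s_2}(\phi'(y_s)-\pi^{sp})\beta(s)\,ds=0$, which rearranges directly to the stated weighted-average formula for $\tilde{\pi}^{*LSE}$. Existence of a minimizer follows from continuity of $\tilde{J}^{*cp}(\cdot)$ in $\pi^{sp}$---inherited from the continuous dependence of $(x,y_s)$ on $\pi^{sp}$ in Theorem~\ref{thm:ce-option}---combined with the uniform bounds $J^{*cp}\le\tilde{J}^{*cp}(\pi^{sp})\le J^{*e}_{ndr}$ of Proposition~\ref{efficiencyofoptions}, which allow me to confine attention to a compact interval of strike prices and invoke Weierstrass to extract a minimizer; this minimizer must satisfy the first-order condition just derived.
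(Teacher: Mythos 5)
Your derivation of the stationarity condition is correct and reaches the same equation $\int_{s_1}^{s_2}(\phi'(y_s)-\pi^{sp})\beta(s)\,ds=0$ as the paper, but by a genuinely different route. The paper leans on Theorem~\ref{thm:ce-option}: since the equilibrium $x$ already minimizes the social cost $\tilde{J}^e(x,\pi^{sp})$ over $x$ for each fixed $\pi^{sp}$, the authors treat the design of $\pi^{sp}$ as the joint problem $\min_{x,\pi^{sp}}\tilde{J}^e(x,\pi^{sp})$ and simply compute the \emph{partial} derivative in $\pi^{sp}$ at fixed $x$ --- an envelope argument that never needs $dx/d\pi^{sp}$. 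You instead differentiate the composed map $\pi^{sp}\mapsto\tilde{J}^{*cp}(\pi^{sp})$ totally and then prove by hand that the coefficient $B(\pi^{sp})$ of $dx/d\pi^{sp}$ vanishes by combining the LSE and aggregator first-order conditions. Your bookkeeping is right (I checked that substituting the aggregator FOC into the LSE FOC collapses $B$ to zero; note your region labels are flipped relative to the paper's convention in \eqref{definitionofs1s2}, where $y_s=x$ on $[0,s_1]$ and $y_s=0$ on $[s_2,1]$, but you are internally consistent). What your route buys is an explicit market-level explanation of \emph{why} the cross term dies --- it is exactly the statement that the two agents' FOCs aggregate to the planner's FOC in $x$. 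What it costs is an extra regularity hypothesis: you need $x(\pi^{sp})$ to be differentiable, which is not established anywhere (only continuity is, via implicit-function-theorem arguments), whereas the paper's partial-derivative/envelope formulation sidesteps this entirely.

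The one step that does not close as written is existence. Proposition~\ref{efficiencyofoptions} gives $J^{*cp}\le\tilde{J}^{*cp}(\pi^{sp})\le J^{*e}_{ndr}$ uniformly in $\pi^{sp}$, but boundedness of a continuous function on an unbounded set does not let you ``confine attention to a compact interval'': the infimum could a priori be approached only as $\pi^{sp}\to\infty$. You need either the observation that $\tilde{J}^{*cp}$ is eventually constant for extreme strike prices (for $\pi^{sp}$ large enough no options are exercised, for $\pi^{sp}$ small enough all are), or the paper's sharper argument: from your own derivative formula, $\frac{d}{d\pi^{sp}}\tilde{J}^{*cp}\le 0$ whenever $\pi^{sp}\le\phi'(0)$ and $\ge 0$ whenever $\pi^{sp}\ge\phi'(l')$, so by continuity a zero of the derivative --- and hence a global minimizer --- lies in $[\phi'(0),\phi'(l')]$. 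That intermediate-value argument also guarantees the minimizer sits where the interior region $[s_1,s_2]$ is nondegenerate, so the weighted-average formula is well defined; with Weierstrass alone you would still have to rule out a minimizer at a point where $s_1=s_2$ and the quotient is $0/0$.
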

The proof of Theorem \ref{optimalstrikeprice} is in the Appendix. It shows that the optimal strike price is the average of the marginal disutility over a skewed distribution $\beta(s)$. 

\begin{figure*}[bt]%
\begin{minipage}[b]{0.32\linewidth}
\centering
% This file was created by matlab2tikz.
%
%The latest updates can be retrieved from
%  http://www.mathworks.com/matlabcentral/fileexchange/22022-matlab2tikz-matlab2tikz
%where you can also make suggestions and rate matlab2tikz.
%
\definecolor{mycolor1}{rgb}{0.00000,0.44700,0.74100}%
\begin{tikzpicture}

\begin{axis}[%
width=1.8in,
height=1in,
at={(0.0in,0.0in)},
scale only axis,
bar shift auto,
xmin=-0.0139999999999999,
xmax=1.014,
xlabel style={font=\color{white!15!black}},
xlabel={s},
ymin=0,
ymax=0.08,
ylabel style={font=\color{white!15!black}},
ylabel={histogram},
axis background/.style={fill=white},
legend style={legend cell align=left, align=left, draw=white!15!black}
]
\addplot[ybar, bar width=0.016, fill=mycolor1, draw=black, area legend] table[row sep=crcr] {%
0.01	0.0340330140615447\\
0.03	0.0690849806399022\\
0.05	0.0729570002037905\\
0.07	0.079885877318117\\
0.09	0.0590992459751376\\
0.11	0.0383126146321581\\
0.13	0.0342368045649073\\
0.15	0.0319951090279193\\
0.17	0.0340330140615447\\
0.19	0.0224169553698798\\
0.21	0.0262889749337681\\
0.23	0.0240472793967801\\
0.25	0.023639698390055\\
0.27	0.0234359078866925\\
0.29	0.0216017933564296\\
0.31	0.0205828408396169\\
0.33	0.0203790503362543\\
0.35	0.0218055838597921\\
0.37	0.0173221927858162\\
0.39	0.0201752598328918\\
0.41	0.0179335642959038\\
0.43	0.0185449358059914\\
0.45	0.0175259832891787\\
0.47	0.0191563073160791\\
0.49	0.0252700224169554\\
0.51	0.0189525168127165\\
0.53	0.0201752598328918\\
0.55	0.0169146117790911\\
0.57	0.0148767067454657\\
0.59	0.0195638883228042\\
0.61	0.0150804972488282\\
0.63	0.0134501732219279\\
0.65	0.0128388017118402\\
0.67	0.0112084776849399\\
0.69	0.011616058691665\\
0.71	0.0108008966782148\\
0.73	0.00774403912777665\\
0.75	0.0061137151008763\\
0.77	0.00835541063786428\\
0.79	0.0061137151008763\\
0.81	0.00590992459751376\\
0.83	0.00346443855716324\\
0.85	0.00163032402690035\\
0.87	0.000815162013450173\\
0.89	0.000203790503362543\\
0.91	0.000407581006725087\\
0.93	0\\
0.95	0\\
0.97	0\\
0.99	0\\
};
\addplot[forget plot, color=white!15!black] table[row sep=crcr] {%
-0.0139999999999999	0\\
1.014	0\\
};

\end{axis}

\end{tikzpicture}% % second figure itself
\caption{The histogram of information state $s$ based on real wind data. }
\label{PDF_s}
\end{minipage}
\begin{minipage}[b]{0.01\linewidth}
\hfill
\end{minipage}
\begin{minipage}[b]{0.32\linewidth}
\centering
\input{figure2_2}
\caption{Empirical distribution of  $\sigma$ and it analytic approximation } 
\label{CDF_s}
\end{minipage}
\begin{minipage}[b]{0.01\linewidth}
\hfill
\end{minipage}
\begin{minipage}[b]{0.32\linewidth}
\centering
% This file was created by matlab2tikz.
%
%The latest updates can be retrieved from
%  http://www.mathworks.com/matlabcentral/fileexchange/22022-matlab2tikz-matlab2tikz
%where you can also make suggestions and rate matlab2tikz.
%
\definecolor{mycolor1}{rgb}{0.60000,0.20000,0.00000}%%
\begin{tikzpicture}

\begin{axis}[%
width=1.8in,
height=1in,
at={(2.167in,1.454in)},
scale only axis,
xmin=0,
xmax=1,
xlabel style={font=\color{white!15!black}},
xlabel={Information State},
ymin=0,
ymax=0.6,
ylabel style={font=\color{white!15!black}},
ylabel={$y_s$(MWh)},
axis background/.style={fill=white},
legend style={legend cell align=left, align=left, draw=white!15!black}
]
\addplot [color=mycolor1, line width=1pt, forget plot]
  table[row sep=crcr]{%
0	0.556952819824219\\
0.00333333333333333	0.556621398925781\\
0.00666666666666667	0.556289978027344\\
0.01	0.555958557128906\\
0.0133333333333333	0.555627136230469\\
0.0166666666666667	0.555295715332031\\
0.02	0.554853820800781\\
0.0233333333333333	0.554522399902344\\
0.0266666666666667	0.554190979003906\\
0.03	0.553859558105469\\
0.0333333333333333	0.553528137207031\\
0.0366666666666667	0.553086242675781\\
0.04	0.552754821777344\\
0.0433333333333333	0.552423400878906\\
0.0466666666666667	0.552091979980469\\
0.05	0.551760559082031\\
0.0533333333333333	0.551318664550781\\
0.0566666666666667	0.550987243652344\\
0.06	0.550655822753906\\
0.0633333333333333	0.550324401855469\\
0.0666666666666667	0.549992980957031\\
0.07	0.549551086425781\\
0.0733333333333333	0.549219665527344\\
0.0766666666666667	0.548888244628906\\
0.08	0.548556823730469\\
0.0833333333333333	0.548225402832031\\
0.0866666666666667	0.547893981933594\\
0.09	0.547452087402344\\
0.0933333333333333	0.547120666503906\\
0.0966666666666667	0.546789245605469\\
0.1	0.546457824707031\\
0.103333333333333	0.546126403808594\\
0.106666666666667	0.545684509277344\\
0.11	0.545353088378906\\
0.113333333333333	0.545132141113281\\
0.116666666666667	0.544690246582031\\
0.12	0.544358825683594\\
0.123333333333333	0.543916931152344\\
0.126666666666667	0.543585510253906\\
0.13	0.543254089355469\\
0.133333333333333	0.542922668457031\\
0.136666666666667	0.542591247558594\\
0.14	0.542259826660156\\
0.143333333333333	0.541817932128906\\
0.146666666666667	0.541486511230469\\
0.15	0.541155090332031\\
0.153333333333333	0.540823669433594\\
0.156666666666667	0.540492248535156\\
0.16	0.540050354003906\\
0.163333333333333	0.539718933105469\\
0.166666666666667	0.539387512207031\\
0.17	0.539056091308594\\
0.173333333333333	0.538724670410156\\
0.176666666666667	0.538282775878906\\
0.18	0.537951354980469\\
0.183333333333333	0.537619934082031\\
0.186666666666667	0.537288513183594\\
0.19	0.536846618652344\\
0.193333333333333	0.536515197753906\\
0.196666666666667	0.536183776855469\\
0.2	0.535741882324219\\
0.203333333333333	0.535410461425781\\
0.206666666666667	0.535079040527344\\
0.21	0.534637145996094\\
0.213333333333333	0.534305725097656\\
0.216666666666667	0.533863830566406\\
0.22	0.533532409667969\\
0.223333333333333	0.533090515136719\\
0.226666666666667	0.532648620605469\\
0.23	0.532206726074219\\
0.233333333333333	0.531764831542969\\
0.236666666666667	0.531322937011719\\
0.24	0.530770568847656\\
0.243333333333333	0.530328674316406\\
0.246666666666667	0.529776306152344\\
0.25	0.529223937988281\\
0.253333333333333	0.528561096191406\\
0.256666666666667	0.528008728027344\\
0.26	0.527345886230469\\
0.263333333333333	0.526572570800781\\
0.266666666666667	0.525799255371094\\
0.27	0.524915466308594\\
0.273333333333333	0.524031677246094\\
0.276666666666667	0.523037414550781\\
0.28	0.521932678222656\\
0.283333333333333	0.520827941894531\\
0.286666666666667	0.519502258300781\\
0.29	0.518176574707031\\
0.293333333333333	0.516740417480469\\
0.296666666666667	0.515193786621094\\
0.3	0.513536682128906\\
0.303333333333333	0.511769104003906\\
0.306666666666667	0.509780578613281\\
0.31	0.507792053222656\\
0.313333333333333	0.505582580566406\\
0.316666666666667	0.503262634277344\\
0.32	0.500832214355469\\
0.323333333333333	0.498291320800781\\
0.326666666666667	0.495639953613281\\
0.33	0.492878112792969\\
0.333333333333333	0.490005798339844\\
0.336666666666667	0.486912536621094\\
0.34	0.483819274902344\\
0.343333333333333	0.480505065917969\\
0.346666666666667	0.477190856933594\\
0.35	0.473655700683594\\
0.353333333333333	0.470120544433594\\
0.356666666666667	0.466474914550781\\
0.36	0.462718811035156\\
0.363333333333333	0.458852233886719\\
0.366666666666667	0.454875183105469\\
0.37	0.450898132324219\\
0.373333333333333	0.446810607910156\\
0.376666666666667	0.442723083496094\\
0.38	0.438414611816406\\
0.383333333333333	0.434106140136719\\
0.386666666666667	0.429797668457031\\
0.39	0.425378723144531\\
0.393333333333333	0.420959777832031\\
0.396666666666667	0.416430358886719\\
0.4	0.411790466308594\\
0.403333333333333	0.407150573730469\\
0.406666666666667	0.402510681152344\\
0.41	0.397760314941406\\
0.413333333333333	0.393009948730469\\
0.416666666666667	0.388149108886719\\
0.42	0.383398742675781\\
0.423333333333333	0.378427429199219\\
0.426666666666667	0.373566589355469\\
0.43	0.368595275878906\\
0.433333333333333	0.363623962402344\\
0.436666666666667	0.358652648925781\\
0.44	0.353570861816406\\
0.443333333333333	0.348489074707031\\
0.446666666666667	0.343407287597656\\
0.45	0.338325500488281\\
0.453333333333333	0.333133239746094\\
0.456666666666667	0.327940979003906\\
0.46	0.322748718261719\\
0.463333333333333	0.317556457519531\\
0.466666666666667	0.312364196777344\\
0.47	0.307061462402344\\
0.473333333333333	0.301869201660156\\
0.476666666666667	0.296566467285156\\
0.48	0.291263732910156\\
0.483333333333333	0.285960998535156\\
0.486666666666667	0.280547790527344\\
0.49	0.275245056152344\\
0.493333333333333	0.269831848144531\\
0.496666666666667	0.264418640136719\\
0.5	0.259115905761719\\
0.503333333333333	0.253702697753906\\
0.506666666666667	0.248179016113281\\
0.51	0.242765808105469\\
0.513333333333333	0.237352600097656\\
0.516666666666667	0.231939392089844\\
0.52	0.226415710449219\\
0.523333333333333	0.220892028808594\\
0.526666666666667	0.215478820800781\\
0.53	0.209955139160156\\
0.533333333333333	0.204431457519531\\
0.536666666666667	0.198907775878906\\
0.54	0.193384094238281\\
0.543333333333333	0.187860412597656\\
0.546666666666667	0.182336730957031\\
0.55	0.176813049316406\\
0.553333333333333	0.171178894042969\\
0.556666666666667	0.165655212402344\\
0.56	0.160021057128906\\
0.563333333333333	0.154497375488281\\
0.566666666666667	0.148863220214844\\
0.57	0.143339538574219\\
0.573333333333333	0.137705383300781\\
0.576666666666667	0.132071228027344\\
0.58	0.126547546386719\\
0.583333333333333	0.120913391113281\\
0.586666666666667	0.115279235839844\\
0.59	0.109645080566406\\
0.593333333333333	0.104010925292969\\
0.596666666666667	0.0983767700195313\\
0.6	0.0927426147460938\\
0.603333333333333	0.0871084594726562\\
0.606666666666667	0.0814743041992187\\
0.61	0.0757296752929688\\
0.613333333333333	0.0700955200195312\\
0.616666666666667	0.0644613647460937\\
0.62	0.0588272094726563\\
0.623333333333333	0.0530825805664062\\
0.626666666666667	0.0474484252929687\\
0.63	0.0418142700195313\\
0.633333333333333	0.0360696411132813\\
0.636666666666667	0.0304354858398437\\
0.64	0.0246908569335937\\
0.643333333333333	0.0190567016601563\\
0.646666666666667	0.0133120727539062\\
0.65	0.00767791748046875\\
0.653333333333333	0.00193328857421875\\
0.656666666666667	0\\
0.66	0\\
0.663333333333333	0\\
0.666666666666667	0\\
0.67	0\\
0.673333333333333	0\\
0.676666666666667	0\\
0.68	0\\
0.683333333333333	0\\
0.686666666666667	0\\
0.69	0\\
0.693333333333333	0\\
0.696666666666667	0\\
0.7	0\\
0.703333333333333	0\\
0.706666666666667	0\\
0.71	0\\
0.713333333333333	0\\
0.716666666666667	0\\
0.72	0\\
0.723333333333333	0\\
0.726666666666667	0\\
0.73	0\\
0.733333333333333	0\\
0.736666666666667	0\\
0.74	0\\
0.743333333333333	0\\
0.746666666666667	0\\
0.75	0\\
0.753333333333333	0\\
0.756666666666667	0\\
0.76	0\\
0.763333333333333	0\\
0.766666666666667	0\\
0.77	0\\
0.773333333333333	0\\
0.776666666666667	0\\
0.78	0\\
0.783333333333333	0\\
0.786666666666667	0\\
0.79	0\\
0.793333333333333	0\\
0.796666666666667	0\\
0.8	0\\
0.803333333333333	0\\
0.806666666666667	0\\
0.81	0\\
0.813333333333333	0\\
0.816666666666667	0\\
0.82	0\\
0.823333333333333	0\\
0.826666666666667	0\\
0.83	0\\
0.833333333333333	0\\
0.836666666666667	0\\
0.84	0\\
0.843333333333333	0\\
0.846666666666667	0\\
0.85	0\\
0.853333333333333	0\\
0.856666666666667	0\\
0.86	0\\
0.863333333333333	0\\
0.866666666666667	0\\
0.87	0\\
0.873333333333333	0\\
0.876666666666667	0\\
0.88	0\\
0.883333333333333	0\\
0.886666666666667	0\\
0.89	0\\
0.893333333333333	0\\
0.896666666666667	0\\
0.9	0\\
0.903333333333333	0\\
0.906666666666667	0\\
0.91	0\\
0.913333333333333	0\\
0.916666666666667	0\\
0.92	0\\
0.923333333333333	0\\
0.926666666666667	0\\
0.93	0\\
0.933333333333333	0\\
0.936666666666667	0\\
0.94	0\\
0.943333333333333	0\\
0.946666666666667	0\\
0.95	0\\
0.953333333333333	0\\
0.956666666666667	0\\
0.96	0\\
0.963333333333333	0\\
0.966666666666667	0\\
0.97	0\\
0.973333333333333	0\\
0.976666666666667	0\\
0.98	0\\
0.983333333333333	0\\
0.986666666666667	0\\
0.99	0\\
0.993333333333333	0\\
0.996666666666667	0\\
1	0\\
};
%\addlegendentry{data1}

\end{axis}
\end{tikzpicture}%
\caption{Load reduction called at the spot market under different information state. }
\label{fig:loadreductionsopt}
\end{minipage}
\end{figure*}

\begin{figure*}[bt]%
\begin{minipage}[b]{0.32\linewidth}
\centering
% This file was created by matlab2tikz.
%
%The latest updates can be retrieved from
%  http://www.mathworks.com/matlabcentral/fileexchange/22022-matlab2tikz-matlab2tikz
%where you can also make suggestions and rate matlab2tikz.
%
\definecolor{mycolor1}{rgb}{0.60000,0.20000,0.00000}%%%
\begin{tikzpicture}

\begin{axis}[%
width=1.8in,
height=1in,
at={(1.081in,1.067in)},
scale only axis,
xmin=0,
xmax=1,
xlabel style={font=\color{white!15!black}},
xlabel={Information State},
ymin=14,
ymax=32,
ylabel style={font=\color{white!15!black}},
ylabel={$p_s^{in}$ (\$/MWh)},
axis background/.style={fill=white},
legend style={legend cell align=left, align=left, draw=white!15!black}
]
\addplot [color=mycolor1, line width=1pt, forget plot]
  table[row sep=crcr]{%
0	31.7085845947266\\
0.00333333333333333	31.6986419677734\\
0.00666666666666667	31.6886993408203\\
0.01	31.6787567138672\\
0.0133333333333333	31.6688140869141\\
0.0166666666666667	31.6588714599609\\
0.02	31.6456146240234\\
0.0233333333333333	31.6356719970703\\
0.0266666666666667	31.6257293701172\\
0.03	31.6157867431641\\
0.0333333333333333	31.6058441162109\\
0.0366666666666667	31.5925872802734\\
0.04	31.5826446533203\\
0.0433333333333333	31.5727020263672\\
0.0466666666666667	31.5627593994141\\
0.05	31.5528167724609\\
0.0533333333333333	31.5395599365234\\
0.0566666666666667	31.5296173095703\\
0.06	31.5196746826172\\
0.0633333333333333	31.5097320556641\\
0.0666666666666667	31.4997894287109\\
0.07	31.4865325927734\\
0.0733333333333333	31.4765899658203\\
0.0766666666666667	31.4666473388672\\
0.08	31.4567047119141\\
0.0833333333333333	31.4467620849609\\
0.0866666666666667	31.4368194580078\\
0.09	31.4235626220703\\
0.0933333333333333	31.4136199951172\\
0.0966666666666667	31.4036773681641\\
0.1	31.3937347412109\\
0.103333333333333	31.3837921142578\\
0.106666666666667	31.3705352783203\\
0.11	31.3605926513672\\
0.113333333333333	31.3539642333984\\
0.116666666666667	31.3407073974609\\
0.12	31.3307647705078\\
0.123333333333333	31.3175079345703\\
0.126666666666667	31.3075653076172\\
0.13	31.2976226806641\\
0.133333333333333	31.2876800537109\\
0.136666666666667	31.2777374267578\\
0.14	31.2677947998047\\
0.143333333333333	31.2545379638672\\
0.146666666666667	31.2445953369141\\
0.15	31.2346527099609\\
0.153333333333333	31.2247100830078\\
0.156666666666667	31.2147674560547\\
0.16	31.2015106201172\\
0.163333333333333	31.1915679931641\\
0.166666666666667	31.1816253662109\\
0.17	31.1716827392578\\
0.173333333333333	31.1617401123047\\
0.176666666666667	31.1484832763672\\
0.18	31.1385406494141\\
0.183333333333333	31.1285980224609\\
0.186666666666667	31.1186553955078\\
0.19	31.1053985595703\\
0.193333333333333	31.0954559326172\\
0.196666666666667	31.0855133056641\\
0.2	31.0722564697266\\
0.203333333333333	31.0623138427734\\
0.206666666666667	31.0523712158203\\
0.21	31.0391143798828\\
0.213333333333333	31.0291717529297\\
0.216666666666667	31.0159149169922\\
0.22	31.0059722900391\\
0.223333333333333	30.9927154541016\\
0.226666666666667	30.9794586181641\\
0.23	30.9662017822266\\
0.233333333333333	30.9529449462891\\
0.236666666666667	30.9396881103516\\
0.24	30.9231170654297\\
0.243333333333333	30.9098602294922\\
0.246666666666667	30.8932891845703\\
0.25	30.8767181396484\\
0.253333333333333	30.8568328857422\\
0.256666666666667	30.8402618408203\\
0.26	30.8203765869141\\
0.263333333333333	30.7971771240234\\
0.266666666666667	30.7739776611328\\
0.27	30.7474639892578\\
0.273333333333333	30.7209503173828\\
0.276666666666667	30.6911224365234\\
0.28	30.6579803466797\\
0.283333333333333	30.6248382568359\\
0.286666666666667	30.5850677490234\\
0.29	30.5452972412109\\
0.293333333333333	30.5022125244141\\
0.296666666666667	30.4558135986328\\
0.3	30.4061004638672\\
0.303333333333333	30.3530731201172\\
0.306666666666667	30.2934173583984\\
0.31	30.2337615966797\\
0.313333333333333	30.1674774169922\\
0.316666666666667	30.0978790283203\\
0.32	30.0249664306641\\
0.323333333333333	29.9487396240234\\
0.326666666666667	29.8691986083984\\
0.33	29.7863433837891\\
0.333333333333333	29.7001739501953\\
0.336666666666667	29.6073760986328\\
0.34	29.5145782470703\\
0.343333333333333	29.4151519775391\\
0.346666666666667	29.3157257080078\\
0.35	29.2096710205078\\
0.353333333333333	29.1036163330078\\
0.356666666666667	28.9942474365234\\
0.36	28.8815643310547\\
0.363333333333333	28.7655670166016\\
0.366666666666667	28.6462554931641\\
0.37	28.5269439697266\\
0.373333333333333	28.4043182373047\\
0.376666666666667	28.2816925048828\\
0.38	28.1524383544922\\
0.383333333333333	28.0231842041016\\
0.386666666666667	27.8939300537109\\
0.39	27.7613616943359\\
0.393333333333333	27.6287933349609\\
0.396666666666667	27.4929107666016\\
0.4	27.3537139892578\\
0.403333333333333	27.2145172119141\\
0.406666666666667	27.0753204345703\\
0.41	26.9328094482422\\
0.413333333333333	26.7902984619141\\
0.416666666666667	26.6444732666016\\
0.42	26.5019622802734\\
0.423333333333333	26.3528228759766\\
0.426666666666667	26.2069976806641\\
0.43	26.0578582763672\\
0.433333333333333	25.9087188720703\\
0.436666666666667	25.7595794677734\\
0.44	25.6071258544922\\
0.443333333333333	25.4546722412109\\
0.446666666666667	25.3022186279297\\
0.45	25.1497650146484\\
0.453333333333333	24.9939971923828\\
0.456666666666667	24.8382293701172\\
0.46	24.6824615478516\\
0.463333333333333	24.5266937255859\\
0.466666666666667	24.3709259033203\\
0.47	24.2118438720703\\
0.473333333333333	24.0560760498047\\
0.476666666666667	23.8969940185547\\
0.48	23.7379119873047\\
0.483333333333333	23.5788299560547\\
0.486666666666667	23.4164337158203\\
0.49	23.2573516845703\\
0.493333333333333	23.0949554443359\\
0.496666666666667	22.9325592041016\\
0.5	22.7734771728516\\
0.503333333333333	22.6110809326172\\
0.506666666666667	22.4453704833984\\
0.51	22.2829742431641\\
0.513333333333333	22.1205780029297\\
0.516666666666667	21.9581817626953\\
0.52	21.7924713134766\\
0.523333333333333	21.6267608642578\\
0.526666666666667	21.4643646240234\\
0.53	21.2986541748047\\
0.533333333333333	21.1329437255859\\
0.536666666666667	20.9672332763672\\
0.54	20.8015228271484\\
0.543333333333333	20.6358123779297\\
0.546666666666667	20.4701019287109\\
0.55	20.3043914794922\\
0.553333333333333	20.1353668212891\\
0.556666666666667	19.9696563720703\\
0.56	19.8006317138672\\
0.563333333333333	19.6349212646484\\
0.566666666666667	19.4658966064453\\
0.57	19.3001861572266\\
0.573333333333333	19.1311614990234\\
0.576666666666667	18.9621368408203\\
0.58	18.7964263916016\\
0.583333333333333	18.6274017333984\\
0.586666666666667	18.4583770751953\\
0.59	18.2893524169922\\
0.593333333333333	18.1203277587891\\
0.596666666666667	17.9513031005859\\
0.6	17.7822784423828\\
0.603333333333333	17.6132537841797\\
0.606666666666667	17.4442291259766\\
0.61	17.2718902587891\\
0.613333333333333	17.1028656005859\\
0.616666666666667	16.9338409423828\\
0.62	16.7648162841797\\
0.623333333333333	16.5924774169922\\
0.626666666666667	16.4234527587891\\
0.63	16.2544281005859\\
0.633333333333333	16.0820892333984\\
0.636666666666667	15.9130645751953\\
0.64	15.7407257080078\\
0.643333333333333	15.5717010498047\\
0.646666666666667	15.3993621826172\\
0.65	15.2303375244141\\
0.653333333333333	15.0579986572266\\
0.656666666666667	15\\
0.66	15\\
0.663333333333333	15\\
0.666666666666667	15\\
0.67	15\\
0.673333333333333	15\\
0.676666666666667	15\\
0.68	15\\
0.683333333333333	15\\
0.686666666666667	15\\
0.69	15\\
0.693333333333333	15\\
0.696666666666667	15\\
0.7	15\\
0.703333333333333	15\\
0.706666666666667	15\\
0.71	15\\
0.713333333333333	15\\
0.716666666666667	15\\
0.72	15\\
0.723333333333333	15\\
0.726666666666667	15\\
0.73	15\\
0.733333333333333	15\\
0.736666666666667	15\\
0.74	15\\
0.743333333333333	15\\
0.746666666666667	15\\
0.75	15\\
0.753333333333333	15\\
0.756666666666667	15\\
0.76	15\\
0.763333333333333	15\\
0.766666666666667	15\\
0.77	15\\
0.773333333333333	15\\
0.776666666666667	15\\
0.78	15\\
0.783333333333333	15\\
0.786666666666667	15\\
0.79	15\\
0.793333333333333	15\\
0.796666666666667	15\\
0.8	15\\
0.803333333333333	15\\
0.806666666666667	15\\
0.81	15\\
0.813333333333333	15\\
0.816666666666667	15\\
0.82	15\\
0.823333333333333	15\\
0.826666666666667	15\\
0.83	15\\
0.833333333333333	15\\
0.836666666666667	15\\
0.84	15\\
0.843333333333333	15\\
0.846666666666667	15\\
0.85	15\\
0.853333333333333	15\\
0.856666666666667	15\\
0.86	15\\
0.863333333333333	15\\
0.866666666666667	15\\
0.87	15\\
0.873333333333333	15\\
0.876666666666667	15\\
0.88	15\\
0.883333333333333	15\\
0.886666666666667	15\\
0.89	15\\
0.893333333333333	15\\
0.896666666666667	15\\
0.9	15\\
0.903333333333333	15\\
0.906666666666667	15\\
0.91	15\\
0.913333333333333	15\\
0.916666666666667	15\\
0.92	15\\
0.923333333333333	15\\
0.926666666666667	15\\
0.93	15\\
0.933333333333333	15\\
0.936666666666667	15\\
0.94	15\\
0.943333333333333	15\\
0.946666666666667	15\\
0.95	15\\
0.953333333333333	15\\
0.956666666666667	15\\
0.96	15\\
0.963333333333333	15\\
0.966666666666667	15\\
0.97	15\\
0.973333333333333	15\\
0.976666666666667	15\\
0.98	15\\
0.983333333333333	15\\
0.986666666666667	15\\
0.99	15\\
0.993333333333333	15\\
0.996666666666667	15\\
1	15\\
};

\end{axis}
\end{tikzpicture}%
\caption{Contingency price of the options market under different values of  information state.} 
\label{fig:contingentprice}
\end{minipage}
\begin{minipage}[b]{0.01\linewidth}
\hfill
\end{minipage}
\begin{minipage}[b]{0.32\linewidth}
\centering
\input{figure5_5}
\caption{Options price and strike price as competitive equilibria of  the options market.} 
\label{fig:contingentprice_options}
\end{minipage}
\begin{minipage}[b]{0.01\linewidth}
\hfill
\end{minipage}
\begin{minipage}[b]{0.32\linewidth}
\centering
\input{figure7_7}
\caption{Traded options at the competitive equilibrium under different equilibrium strike price in the options market. }
\label{fig:x_options}
\end{minipage}
\end{figure*}

\begin{figure*}[bt]%
\begin{minipage}[b]{0.32\linewidth}
\centering
\input{figure6_6}
\caption{Energy procurement in the day-ahead market under different strike price in the options market.}
\label{fig:q_options}
\end{minipage}
\begin{minipage}[b]{0.01\linewidth}
\hfill
\end{minipage}
\begin{minipage}[b]{0.32\linewidth}
\centering
% This file was created by matlab2tikz.
%
%The latest updates can be retrieved from
%  http://www.mathworks.com/matlabcentral/fileexchange/22022-matlab2tikz-matlab2tikz
%where you can also make suggestions and rate matlab2tikz.
%
\definecolor{mycolor1}{rgb}{0.60000,0.20000,0.00000}%%%
\begin{tikzpicture}

\begin{axis}[%
width=1.8in,
height=1in,
at={(1.122in,1.067in)},
scale only axis,
xmin=0,
xmax=1,
xlabel style={font=\color{white!15!black}},
xlabel={Information State},
ymin=0,
ymax=0.6,
ylabel style={font=\color{white!15!black}},
ylabel={$y_s$ (MWh)},
axis background/.style={fill=white},
legend style={legend cell align=left, align=left, draw=white!15!black}
]
\addplot [color=mycolor1, line width=1pt, forget plot]
  table[row sep=crcr]{%
0	0.532260759111985\\
0.00333333333333333	0.532260759111985\\
0.00666666666666667	0.532260759111985\\
0.01	0.532260759111985\\
0.0133333333333333	0.532260759111985\\
0.0166666666666667	0.532260759111985\\
0.02	0.532260759111985\\
0.0233333333333333	0.532260759111985\\
0.0266666666666667	0.532260759111985\\
0.03	0.532260759111985\\
0.0333333333333333	0.532260759111985\\
0.0366666666666667	0.532260759111985\\
0.04	0.532260759111985\\
0.0433333333333333	0.532260759111985\\
0.0466666666666667	0.532260759111985\\
0.05	0.532260759111985\\
0.0533333333333333	0.532260759111985\\
0.0566666666666667	0.532260759111985\\
0.06	0.532260759111985\\
0.0633333333333333	0.532260759111985\\
0.0666666666666667	0.532260759111985\\
0.07	0.532260759111985\\
0.0733333333333333	0.532260759111985\\
0.0766666666666667	0.532260759111985\\
0.08	0.532260759111985\\
0.0833333333333333	0.532260759111985\\
0.0866666666666667	0.532260759111985\\
0.09	0.532260759111985\\
0.0933333333333333	0.532260759111985\\
0.0966666666666667	0.532260759111985\\
0.1	0.532260759111985\\
0.103333333333333	0.532260759111985\\
0.106666666666667	0.532260759111985\\
0.11	0.532260759111985\\
0.113333333333333	0.532260759111985\\
0.116666666666667	0.532260759111985\\
0.12	0.532260759111985\\
0.123333333333333	0.532260759111985\\
0.126666666666667	0.532260759111985\\
0.13	0.532260759111985\\
0.133333333333333	0.532260759111985\\
0.136666666666667	0.532260759111985\\
0.14	0.532260759111985\\
0.143333333333333	0.532260759111985\\
0.146666666666667	0.532260759111985\\
0.15	0.532260759111985\\
0.153333333333333	0.532260759111985\\
0.156666666666667	0.532260759111985\\
0.16	0.532260759111985\\
0.163333333333333	0.532260759111985\\
0.166666666666667	0.532260759111985\\
0.17	0.532260759111985\\
0.173333333333333	0.532260759111985\\
0.176666666666667	0.532260759111985\\
0.18	0.532260759111985\\
0.183333333333333	0.532260759111985\\
0.186666666666667	0.532260759111985\\
0.19	0.532260759111985\\
0.193333333333333	0.532260759111985\\
0.196666666666667	0.532260759111985\\
0.2	0.532260759111985\\
0.203333333333333	0.532260759111985\\
0.206666666666667	0.532260759111985\\
0.21	0.532260759111985\\
0.213333333333333	0.532260759111985\\
0.216666666666667	0.532260759111985\\
0.22	0.532260759111985\\
0.223333333333333	0.532260759111985\\
0.226666666666667	0.532260759111985\\
0.23	0.532260759111985\\
0.233333333333333	0.532260759111985\\
0.236666666666667	0.532260759111985\\
0.24	0.532260759111985\\
0.243333333333333	0.532260759111985\\
0.246666666666667	0.532260759111985\\
0.25	0.532260759111985\\
0.253333333333333	0.532260759111985\\
0.256666666666667	0.532260759111985\\
0.26	0.532260759111985\\
0.263333333333333	0.532260759111985\\
0.266666666666667	0.532260759111985\\
0.27	0.532260759111985\\
0.273333333333333	0.532260759111985\\
0.276666666666667	0.532260759111985\\
0.28	0.532260759111985\\
0.283333333333333	0.532260759111985\\
0.286666666666667	0.532260759111985\\
0.29	0.532260759111985\\
0.293333333333333	0.532260759111985\\
0.296666666666667	0.532260759111985\\
0.3	0.532260759111985\\
0.303333333333333	0.532260759111985\\
0.306666666666667	0.532260759111985\\
0.31	0.532260759111985\\
0.313333333333333	0.532260759111985\\
0.316666666666667	0.532260759111985\\
0.32	0.532260759111985\\
0.323333333333333	0.532260759111985\\
0.326666666666667	0.532260759111985\\
0.33	0.532260759111985\\
0.333333333333333	0.532260759111985\\
0.336666666666667	0.532260759111985\\
0.34	0.532260759111985\\
0.343333333333333	0.532260759111985\\
0.346666666666667	0.532260759111985\\
0.35	0.532260759111985\\
0.353333333333333	0.532260759111985\\
0.356666666666667	0.532260759111985\\
0.36	0.532260759111985\\
0.363333333333333	0.532260759111985\\
0.366666666666667	0.532260759111985\\
0.37	0.529133474341146\\
0.373333333333333	0.521809255591146\\
0.376666666666667	0.514485036841146\\
0.38	0.507160818091146\\
0.383333333333333	0.501301443091146\\
0.386666666666667	0.493977224341146\\
0.39	0.486653005591146\\
0.393333333333333	0.479328786841146\\
0.396666666666667	0.473469411841146\\
0.4	0.466145193091146\\
0.403333333333333	0.458820974341146\\
0.406666666666667	0.451496755591146\\
0.41	0.445637380591146\\
0.413333333333333	0.438313161841146\\
0.416666666666667	0.430988943091146\\
0.42	0.423664724341146\\
0.423333333333333	0.417805349341146\\
0.426666666666667	0.410481130591146\\
0.43	0.403156911841146\\
0.433333333333333	0.395832693091146\\
0.436666666666667	0.389973318091146\\
0.44	0.382649099341146\\
0.443333333333333	0.375324880591146\\
0.446666666666667	0.368000661841146\\
0.45	0.362141286841146\\
0.453333333333333	0.354817068091146\\
0.456666666666667	0.347492849341146\\
0.46	0.340168630591146\\
0.463333333333333	0.334309255591146\\
0.466666666666667	0.326985036841146\\
0.47	0.319660818091146\\
0.473333333333333	0.312336599341146\\
0.476666666666667	0.306477224341146\\
0.48	0.299153005591146\\
0.483333333333333	0.291828786841146\\
0.486666666666667	0.284504568091146\\
0.49	0.277180349341146\\
0.493333333333333	0.271320974341146\\
0.496666666666667	0.263996755591146\\
0.5	0.256672536841146\\
0.503333333333333	0.249348318091146\\
0.506666666666667	0.243488943091146\\
0.51	0.236164724341146\\
0.513333333333333	0.228840505591146\\
0.516666666666667	0.221516286841146\\
0.52	0.215656911841146\\
0.523333333333333	0.208332693091146\\
0.526666666666667	0.201008474341146\\
0.53	0.193684255591146\\
0.533333333333333	0.186360036841146\\
0.536666666666667	0.180500661841146\\
0.54	0.173176443091146\\
0.543333333333333	0.165852224341146\\
0.546666666666667	0.158528005591146\\
0.55	0.152668630591146\\
0.553333333333333	0.145344411841146\\
0.556666666666667	0.138020193091146\\
0.56	0.130695974341146\\
0.563333333333333	0.124836599341146\\
0.566666666666667	0.117512380591146\\
0.57	0.110188161841146\\
0.573333333333333	0.102863943091146\\
0.576666666666667	0.0955397243411464\\
0.58	0.0896803493411464\\
0.583333333333333	0.0823561305911464\\
0.586666666666667	0.0750319118411464\\
0.59	0.0677076930911464\\
0.593333333333333	0.0603834743411464\\
0.596666666666667	0.0545240993411464\\
0.6	0.0471998805911464\\
0.603333333333333	0.0398756618411464\\
0.606666666666667	0.0325514430911464\\
0.61	0.0266920680911464\\
0.613333333333333	0.0193678493411464\\
0.616666666666667	0.0120436305911464\\
0.62	0.00471941184114644\\
0.623333333333333	0\\
0.626666666666667	0\\
0.63	0\\
0.633333333333333	0\\
0.636666666666667	0\\
0.64	0\\
0.643333333333333	0\\
0.646666666666667	0\\
0.65	0\\
0.653333333333333	0\\
0.656666666666667	0\\
0.66	0\\
0.663333333333333	0\\
0.666666666666667	0\\
0.67	0\\
0.673333333333333	0\\
0.676666666666667	0\\
0.68	0\\
0.683333333333333	0\\
0.686666666666667	0\\
0.69	0\\
0.693333333333333	0\\
0.696666666666667	0\\
0.7	0\\
0.703333333333333	0\\
0.706666666666667	0\\
0.71	0\\
0.713333333333333	0\\
0.716666666666667	0\\
0.72	0\\
0.723333333333333	0\\
0.726666666666667	0\\
0.73	0\\
0.733333333333333	0\\
0.736666666666667	0\\
0.74	0\\
0.743333333333333	0\\
0.746666666666667	0\\
0.75	0\\
0.753333333333333	0\\
0.756666666666667	0\\
0.76	0\\
0.763333333333333	0\\
0.766666666666667	0\\
0.77	0\\
0.773333333333333	0\\
0.776666666666667	0\\
0.78	0\\
0.783333333333333	0\\
0.786666666666667	0\\
0.79	0\\
0.793333333333333	0\\
0.796666666666667	0\\
0.8	0\\
0.803333333333333	0\\
0.806666666666667	0\\
0.81	0\\
0.813333333333333	0\\
0.816666666666667	0\\
0.82	0\\
0.823333333333333	0\\
0.826666666666667	0\\
0.83	0\\
0.833333333333333	0\\
0.836666666666667	0\\
0.84	0\\
0.843333333333333	0\\
0.846666666666667	0\\
0.85	0\\
0.853333333333333	0\\
0.856666666666667	0\\
0.86	0\\
0.863333333333333	0\\
0.866666666666667	0\\
0.87	0\\
0.873333333333333	0\\
0.876666666666667	0\\
0.88	0\\
0.883333333333333	0\\
0.886666666666667	0\\
0.89	0\\
0.893333333333333	0\\
0.896666666666667	0\\
0.9	0\\
0.903333333333333	0\\
0.906666666666667	0\\
0.91	0\\
0.913333333333333	0\\
0.916666666666667	0\\
0.92	0\\
0.923333333333333	0\\
0.926666666666667	0\\
0.93	0\\
0.933333333333333	0\\
0.936666666666667	0\\
0.94	0\\
0.943333333333333	0\\
0.946666666666667	0\\
0.95	0\\
0.953333333333333	0\\
0.956666666666667	0\\
0.96	0\\
0.963333333333333	0\\
0.966666666666667	0\\
0.97	0\\
0.973333333333333	0\\
0.976666666666667	0\\
0.98	0\\
0.983333333333333	0\\
0.986666666666667	0\\
0.99	0\\
0.993333333333333	0\\
0.996666666666667	0\\
1	0\\
};

\end{axis}
\end{tikzpicture}%
\caption{Load reduction under different information state when strike price minimizes the system total cost.}
\label{fig:reduction_options}
\end{minipage}
\begin{minipage}[b]{0.01\linewidth}
\hfill
\end{minipage}
\begin{minipage}[b]{0.32\linewidth}
\centering
\input{figure9_9}
\caption{System cost of the optimal scheduling without demand response, spot market, and options market.}
\label{fig:systemcostcompare}
\end{minipage}
\end{figure*}

%
%\begin{figure}[t!]%
%\centering
%\includegraphics[width=0.36\textwidth]{Figures/figure1} % second figure itself
%\caption{Load reduction of the spot market with contingency pricing. Less load reduction is called when more wind is available.}
%\label{fig:loadreductionsopt}
%\end{figure}
%\begin{figure}[t]
%\centering
%\includegraphics[width=0.37\textwidth]{Figures/figure2}
%\caption{Contingency price given the information state. When $s>0.5$, no load reduction is traded for the price in the shaded area, i.e., $y_s^{LSE}=y_s^{agg}=0$.} 
%\label{fig:contingentprice}
%\end{figure} 
%\begin{figure}[t]
%\centering
%\includegraphics[width=0.36\textwidth]{Figures/figure3}
%\caption{The amount of options traded at the competitive equilibrium at a function of the strike price.}
%\label{fig:optimalx_options}
%\end{figure}
%\begin{figure}[t]
%\centering
%\includegraphics[width=0.37\textwidth]{Figures/figure4}
%\caption{The options price as a function of the strike price.}
%\label{fig:optionLSErice}
%\end{figure} 

\section{Case Studies} 
\label{simulation_sec}
This section illustrates the proposed options market and validates the results through numerical simulation. We consider a particular time interval where the LSE  needs to deliver electricity to a total load of $l=3 \text{MW}$. We emphasize that this is without any loss of generality because larger or smaller load can be captured by scaling the size of the options and load reduction. 
 
Define the information state $s$ as a real number between  $0$ and $1$.  i.e.,  $s \in [0,1]$.  We assume that $s$ is unknown at time $t_0$ (e.g., one day ahead), but it known at time $t_1$ (e.g., one hour ahead). Here we associate $s$ with the average hourly wind energy level on the next day\footnote{Since the realization of wind is a random variable, $s$ is monotonically associated with the expectation of $w$.}. The probability distribution of $s$ and $w$ can be derived based on historic data. In particular, we collect the 5-minute wind generation data between Nov 2019 - May 2020 in California from CAISO \cite{winddata}. The data is scaled based on the size of the case study, where $s=0$  represents zero wind generation, and  $s=1$  represents the maximum wind generation over this period. The histogram of $s$ is shown in Figure \ref{PDF_s}. The empirical distribution of $s$ can be constructed accordingly. To simplify the computation, we approximate the empirical distribution with a polynomial function  to obtain closed-form expression of $\alpha(s)$.

To obtain the conditional probability distribution $p_s(w)$, we represent the wind generation of each time instance as $w=\bar{w}(1+\sigma)$, where $\bar{w}$ denotes the average wind generation of the hour\footnote{Based on our definition, $\bar{w}$ is a monotone function of $s$. Here we set $\bar{w}=2s+0.5$.}, and $\sigma$ is the percentage of deviation from $\bar{w}$. Assume that $\sigma$ is i.i.d. for each data sample within the dataset \cite{winddata} of 5 minute resolution. The empirical distribution of $\sigma$ can be constructed from the dataset. We observe that the empirical distribution of $\sigma$ is symmetric with respect $0$ and displays exponential decay. Therefore,  we approximate it using a symmetric piece-wise exponential function. The empirical distribution (bar plot) and its analytic approximation are shown in Figure~\ref{CDF_s}. Clearly, the approximation is rather accurate.
% The analytic form of $p_s(w)$ is given as:

We randomly select several day-ahead and real-time prices in Berkeley, California, Jan, 2020 \cite{berkeleyLMP}, and derive that the average day-ahead and real-time LMP are $\$26.76/\text{MWh}$ and $\$29.86/\text{MWh}$, respectively. Suppose the conditional expectation of the real-time price is a linear function of $s$ in the form $\bar{\pi}_s^{rt}=31.71- 3.71s$\footnote{We set the parameter of this linear model so that (i) the average real-time price equals the real data, i.e., $\$29.86/\text{MWh}$, and (ii) wind energy affects real-time price by the maximum of $10\%$.}. For simplicity, consider a quadratic distutility function $\phi(y)=15y+15y^2$. In this example, both Assumptions \ref{as:w-pirt} and \ref{disutilityfunction} are satisfied. 

We consider four cases: (a) optimal scheduling without demand response, (b) spot market with contingent pricing,  (c) options market and (d) redesign of options market. We compute  the optimal decisions in each case and derive the competitive equilibrium for these market setups. The system costs in these cases are compared. 

\vspace{0.1cm}
\noindent 1. \emph{Optimal Scheduling without Demand Response:} The LSE buys energy from the day-ahead and real-time markets. Using Proposition \ref{thm:no-dr-social planner}, the optimal day-ahead purchase is $q_{ndr}^e= 1.35 \ \text{MW}$. The optimal system cost is $J_{ndr}^e(q_{ndr}^e) = \$ 55.469$.
%\begin{figure}[b]
%\centering
%\includegraphics[width = 6.5cm, height = 4cm]{Figures/ent-unif}
%\caption{Purchase Decisions \\ Uniform Forecast Error}
%\label{fig:soln-unif}
%\end{figure}
%
%\begin{figure}[thpb]
%\centering
%\centerline{\includegraphics[width = 8cm, height = 4cm]{Figures/ys-soln-unif}}
%\caption{Load Curtailment Decisions (c.f. equation \eqref{eq:social planner-ys-1})  }
%\label{fig:ys-soln-unif}
%\end{figure}  
%\begin{figure}[b]
%\centering
%\includegraphics[width = 6.5cm, height = 4cm]{Figures/ent-gauss.eps}
%\caption{Purchase Decisions \\ Gaussian Forecast Error}
%\label{fig:soln-gauss}
%\end{figure}
 %\begin{figure}[thpb]
%\centering
%\includegraphics[width = 2.5 in, height = 1.5 in]{Figures/ent-unif.eps}
%\caption{Purchase Decisions - Uniform Forecast Error}
%\label{fig:soln-unif}
%\end{figure} 

\vspace{0.1cm}
\noindent 2. \emph{Optimal Scheduling with Demand Response:}  With demand response, the optimal day-ahead purchase is $q_{ndr}^e= 1.19 \ \text{MW}$ and the optimal system cost is $J_{ndr}^e(q_{ndr}^e) = \$ 52.127$.

\vspace{0.1cm}
\noindent 3. \emph{Spot Market with Contingent Price:}  
Solving equation (\ref{optimalconditionofspot}), the optimal day-ahead purchase is $q^*=1.19\ \text{MW}$. The optimal load reduction and contingent price given the information state $s$ are shown in Figures \ref{fig:loadreductionsopt} and \ref{fig:contingentprice} respectively. Figure \ref{fig:loadreductionsopt} reveals that when $s$ is larger, the LSE expects more wind at the delivery time $T$ and therefore  calls on less load reduction at the intermediate time $t_1$. As a result, the competitive equilibrium price is lower when $s$ is larger, as shown in Figure \ref{fig:contingentprice}.   

\vspace{0.1cm}
\noindent 4. \emph{Options Market:} 
The competitive equilibrium of the options market can be derived based on Theorem \ref{thm:ce-option_oldversion} and Theorem~ \ref{thm:ce-option}, respectively. The setup of these two options markets are distinct. Here we solve the competitive equilibria under both market schemes and compare their performances in terms of the system cost at the competitive equilibria. 

We first compute the competitive equilibrium of the options market presented in Section \ref{options_market_original}. Based on Theorem \ref{thm:ce-option_oldversion}, the competitive equilibrium constitutes four decision variables $(\pi^{*o}, \pi^{*sp}, \tilde{x}^*, \tilde{q}^*)$ that are determined by three equations. This indicates that there  is an extra degree of freedom, which induces multiple competitive equilibria. Figure \ref{fig:q_options} shows competitive equilibrium prices of the options market. Clearly, there is a continuum of competitive equilibrium prices and the options price is a decreasing function of the strike price. This is intuitive since the objective functions of the LES and the aggregator are jointly determined by $\pi^o$ and $\pi^{sp}$. Therefore, for any pair of $(\pi^o, \pi^{sp})$ at the competitive equilibrium, the combination of a higher $\pi^o$ and lower $\pi^{sp}$ is equally acceptable as another competitive equilibrium price. Figure \ref{fig:x_options} shows that as the strike price increases, the traded options increases due to the decrease of options price. Figure \ref{fig:q_options} shows that the day-ahead purchase $q$ at the competitive equilibrium first decreases $(\pi^{*sp}\leq \$3.4/\text{WMh})$ and then increases $(\$3.4/\text{WMh}< \pi^{*sp}\leq \$31.3/\text{WMh})$. The first regime $(\pi^{*sp}\leq \$3.4/\text{WMh})$ is accompanied with a sharp increase in traded options as shown in Figure \ref{fig:x_options}. Figure \ref{fig:reduction_options} shows the exercised load reduction in real time under different information state when the strike price is at $\$22.6/\text{WMh}$. Clearly, $y_s$ is a decreasing function of  $s$, which indicates that less load reduction is called when more wind energy is available for free.

Figure \ref{fig:systemcostcompare} compares the system costs at the competitive equilibria of the original (Section \ref{options_market_original}) and the redesigned  options market  (Section \ref{options_market_redesign}) under different equilibrium strike prices. The equilibrium of the redesigned market is derived based on  (\ref{eq:ce-option-ys}).  It is clear that the system cost at the competitive equilibrium for the redesigned market is  consistently lower than than that of the original one, while the difference of these two markets is more significant when the strike price is smaller. At the optimal strike price that minimizes the system cost, both the original options market and the redesigned market attained a cost of $52.154$, which is close to the system cost for the spot market,i.e., $52.127$. This indicates that by carefully choosing the strike price, the options market can almost achieve the same system cost as the efficiency spot market. However, obtaining the optimal options market price may require information not possessed by the system operator.

\section{Conclusion}
\label{sec:conclusion}
We have studied a novel market model for trading demand response using options. We have shown that demand response can be used as an intermediate recourse between the real-time market and the day-ahead market. Under some conditions, this options market admits a competitive equilibrium. We studied the efficiency of this equilibrium, and obtained the optimal strike price that yields the minimum system cost at the competitive equilibrium. In future work we plan to address  option markets with multiple intermediate stages and also the case where the LSE can exercise market power. 

\bibliographystyle{IEEEtran}
\bibliography{Options-References}
 
\section{Proofs}
\subsection{Proof of 
Proposition \ref{thm:no-dr-social planner}}
The objective function $J_{ndr}^e$ is,
\begin{equation}
\label{eq:social planner-J-ndr-2}
J^{e}_{ndr}(q) = \pi^{da} q + \int^{1}_{s=0} \int^{(l-q)}_{w=0} \overline{\pi}^{rt}_{s} (l-q-w) p_{s}(w) dw ~\alpha(s) ds
\end{equation}
By taking partial derivative we get,
\begin{align*}
\frac{d J^{e}_{ndr}}{d q} &= \pi^{da} -  \int^{1}_{s=0} \overline{\pi}^{rt}_{s} P_{s}(l-q) \alpha(s) ds \\
\frac{d^{2}J^{e}_{ndr}}{d q^{2}} &= \int^{1}_{s=0} \overline{\pi}^{rt}_{s} p_{s}(l-q) \alpha(s) ds
\end{align*}
Since the second derivative is non-negative, $J^{e}_{ndr}$ is convex. The solution is obtained by equating the first derivative to zero.

\subsection{Proof of Proposition \ref{thm:dr-social planner}} 
\begin{proof}
The cost of the second stage is,
\begin{equation*}
{J}^{e}_{s}(y_{s}) = \phi(y_{s}) + \int^{(l-q-y_{s})}_{w=0} \overline{\pi}^{rt}_{s} (l-q-w-y_{s}) p_{s}(w) dw.
\end{equation*} 
Taking partial derivatives, we have,
\begin{align*} 
\frac{\partial J^{e}_{s}}{\partial y_{s}} &= \phi'(y_{s}) - \overline{\pi}^{rt}_{s} P_{s}(l-q-y_{s}) \\
\frac{\partial^{2} J^{e}_{s}}{\partial y_{s}^{2}} &= \overline{\pi}^{rt}_{s} p_{s}(l-q-y_{s})
\end{align*} 
Since the second order derivative is non-negative, $J^{e}_{s}(y_{s})$ is convex. Using first derivative, if $\phi(y_s)' \geq  \overline{\pi}^{rt}_{1} P_{s}(l-q)$, then the optimal load curtailment is
$y^{e}_{s} = 0$. If
$\phi(y_s)' < \overline{\pi}^{rt}_{s} P_{s}(l-q)$, $y^e_s$ is given by equating the first order condition to zero,
\begin{equation}
\label{firstderivativeeqaul0}
\frac{\partial J^{e}_{s}}{\partial y_{s}} = \phi(y_s)' - \overline{\pi}^{rt}_{s} P_{s}(l-q-y_{s})\big \vert_{y_{s} = y^e_{s}}  = 0.
\end{equation}
This leads to (\ref{eq:social planner-ys-1}). Next, we compute the right derivative of the first stage cost $J^e(q)$,
\begin{align}
\label{directtiond}
\dfrac{d^+J^e}{dq}&=\pi^{da} + \int_{0}^1  \frac{\partial^+ y^e_{s}}{\partial q } \phi'(y_s)\alpha(s)ds. \nonumber \\
&-  \int^{1}_{0} (1+\frac{\partial^+ y^e_{s}}{\partial q })  \overline{\pi}^{rt}_{s} P_{s}(l-q-y^e_{s}) \alpha(s) ds
\end{align}
Here $\frac{\partial^+ y^e_{s}}{\partial q }$ is the right partial derivative of $y_{s}^e$ with respect to $q$. Note that when $\phi'(y_s)\geq  \overline{\pi}^{rt}_{s} P_{s}(l-q)$, $\frac{\partial^+ y^e_{s}}{\partial q } = 0$. When $\phi'(y_s) <  \overline{\pi}^{rt}_{s} P_{s}(l-q)$, we have (\ref{firstderivativeeqaul0}). In either case, (\ref{directtiond}) is equivalent to,
\begin{equation}
\label{dervativeof1st}
\dfrac{d^+J^e}{dq}=\pi^{da} -  \int^{1}_{0} \overline{\pi}^{rt}_{s} P_{s}(l-q-y^e_{s}) \alpha(s) ds. 
\end{equation}
Similarly, we can derive the left derivative of $J^e(q)$. It equals the right derivative. Therefore, $J^e(q)$ is differentiable with respect to $q$. The second-order right derivative of $J^e(q)$ is as follows,
\begin{equation*}
%\label{secondrightderiv}
\frac{d^{2} J^{e}}{d q^{2}} =  \int^{1}_{0} \overline{\pi}^{rt}_{s} p_{s}(l-q-y^e_{s}) \left(1 + \frac{\partial^{+} y^e_{s}}{\partial q } \right)\alpha(s)ds.
\end{equation*}
To prove Proposition \ref{thm:dr-social planner}, it suffices to show that $J^e(q)$ is convex with respect to $q$. If so, the optimal decision is obtained by equating the first derivative to $0$, which leads to (\ref{eq:social planner-qe-1}). To this end, we show that $\frac{d^{2} J^{e}}{d q^{2}}\geq 0$.  
Note that when $\phi'(y_s) \geq \overline{\pi}^{rt}_{s} P_{s}(l-q)$, $\frac{\partial^+ y^e_s}{\partial q} = 0$. It trivially holds. When $\phi'(y_s) < \overline{\pi}^{rt}_{s} P_{s}(l-q)$, we take partial derivative with respect to $q$ on both sides of (\ref{firstderivativeeqaul0}). This indicates that,
 \[ \overline{\pi}^{rt}_{s} p_{s}(l-q-y^e_{s})  \left(1 + \frac{\partial^+ y^e_{s}}{\partial q } \right)=0\]   
Therefore, $\frac{d^{2} J^{e}}{d q^{2}}=0$. This completes the proof. 
\end{proof}

\subsection{Proof of Theorem \ref{thm:ce-market} } 
\begin{proof}
It suffices to show there exists a solution to (\ref{eq:defineCEforspot}), and any solution to (\ref{eq:defineCEforspot}) satisfies (\ref{optimalconditionofspot}). According to Proposition \ref{thm:dr-social planner}, it suffices to show that (\ref{eq:defineCEforspot}) is equivalent to (\ref{eq:social planner-ys-1}) and (\ref{eq:social planner-qe-1}). Given any $q$, the optimal solution to (\ref{eq:defineCEforspotb}) is given by
\begin{align}
\label{optimslsolutionyLSE}
y_{s}^{LSE}=
\begin{cases}
l-q-P_s^{-1}(\pi^{in}_{s}/\bar{\pi}_s^{rt}),  & \text{if } \pi^{in}_{s} < \overline{\pi}^{rt}_{s} P_{s}(l-q), \\
0,  & \text{if } \pi^{in}_{s} \geq  \overline{\pi}^{rt}_{s} P_{s}(l-q).
\end{cases}
\end{align}
For any $q$, the optimal solution to (\ref{eq:defineCEforspotc}) is given by,
\begin{align}
\label{optimslsolutionyagg}
y_s^{agg}=
\begin{cases}
0, & \text{if}~  \phi'(0)>\pi_s^{in}, \\
(\phi')^{-1}(\pi_s^{in}), & \text{if}~  \phi'(0)\leq\pi_s^{in}
\end{cases} 
\end{align}  
Proof for (\ref{optimslsolutionyLSE}) and (\ref{optimslsolutionyagg}) is similar to that of Proposition \ref{thm:dr-social planner} and hence we skip the details. Since (\ref{optimslsolutionyLSE}) is continuous and decreasing, and (\ref{optimslsolutionyagg}) is continuous and increasing, there is an intersection. Thus the competitive equilibrium exists.	
It can be verified that at the intersection, the equilibrium satisfies (\ref{eq:social planner-ys-1}).
In addition, based on (\ref{optimslsolutionyLSE}), the derivative of the first stage cost of the LSE is given by,
\begin{equation}
\label{derivativeoffirststage}
\dfrac{d J^{LSE}}{dq}=\pi^{da} -  \int^{1}_{0}   \overline{\pi}^{rt}_{s} P_{s}(l-q-y^{LSE}_{s}) \alpha(s) ds. 
\end{equation}
At the optimal decision $q^{*LSE}$, we have,
\begin{equation}
\label{deducedoptimalq}
\pi^{da} = \mathbb{E}_{s}[ \overline{\pi}^{rt}_{s} P_{s}(l-q^{*LSE}-y^{*}_{s})]  = 0.
\end{equation}
Therefore, (\ref{eq:defineCEforspot}) is equivalent to (\ref{eq:social planner-ys-1}) and (\ref{eq:social planner-qe-1}). This completes the proof.
\end{proof}  

\subsection{Proof of Theorem \ref{thm:ce-option_oldversion}}
    
We first prove a series of lemmas before giving the proof of Theorem \ref{thm:ce-option_oldversion}. 
\begin{lemma}
The function $\tilde{J}^{LSE}_{s}(\cdot)$ is convex for all $s \in S$. The unique minimizer $\tilde{y}^{LSE}_{s}$ is given by,
\begin{align}
\label{secondstageoptimaldecision}
\tilde{y}^{LSE}_{s} &= \left\lbrace \begin{array}{ccc}
0, \quad \quad  \quad \quad \quad \quad \quad \quad \text{if}~ P_{s}(l-q) < \pi^{sp}/\pi^{rt}_{s}  \\
x, \quad \quad \quad \quad \quad \quad  \text{if}~P_{s}(l-q-x) > \pi^{sp}/\pi^{rt}_{s} \\
l - q- P^{-1}_{s}(\dfrac{\pi^{sp}}{\pi^{rt}_{s}}),  \quad \quad \quad \quad \quad \quad \textnormal{otherwise}
\end{array} \right.
\end{align}
\end{lemma}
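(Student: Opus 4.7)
The plan is to exploit the convexity of the one-dimensional problem and reduce the constrained minimization to a projection of the unconstrained critical point onto the feasible interval $[0,x]$. Since the cost $\tilde{J}^{LSE}_s(y_s) = \pi^{sp} y_s + \bar\pi^{rt}_s\,\mathbb{E}_w[(l-q-y_s-w)_{+} \mid s]$ is a sum of a linear term and the expectation of a convex function of $y_s$ (the max $(\cdot)_{+}$ is convex, and expectation preserves convexity), convexity is essentially immediate. My intent is to still derive it through the second derivative, since doing so yields the first derivative formula that I need for the minimizer anyway.

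First I would rewrite the expectation as an explicit integral, $\mathbb{E}_w[(l-q-y_s-w)_+ \mid s] = \int_0^{l-q-y_s}(l-q-y_s-w)\,p_s(w)\,dw$, where the upper limit arises because the integrand vanishes for $w > l-q-y_s$. Differentiating once under the integral sign (the boundary term cancels since the integrand is zero at $w = l - q - y_s$) gives
\begin{equation*}
\frac{d\tilde{J}^{LSE}_s}{dy_s} \;=\; \pi^{sp} \;-\; \bar\pi^{rt}_s\,P_s(l-q-y_s),
\end{equation*}
and a second differentiation yields $\bar\pi^{rt}_s\,p_s(l-q-y_s) \geq 0$, establishing convexity of $\tilde{J}^{LSE}_s$ on $[0,x]$. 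Strict monotonicity of $P_s$ guarantees a unique unconstrained critical point.

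Next I would solve $\tilde{J}^{LSE}_s{}'(y_s)=0$ to obtain the unconstrained stationary point $y_s^\circ = l-q-P_s^{-1}(\pi^{sp}/\bar\pi^{rt}_s)$, which is well defined under Assumption~\ref{as:w-pirt} (strict monotonicity of $P_s$). Since $\tilde{J}^{LSE}_s$ is convex on the closed interval $[0,x]$, the constrained minimizer is the projection of $y_s^\circ$ onto $[0,x]$. Translating the three cases $y_s^\circ \leq 0$, $y_s^\circ \geq x$, $0 < y_s^\circ < x$ back into conditions on $\pi^{sp}/\bar\pi^{rt}_s$ using monotonicity of $P_s$ gives the three branches in the lemma: $y_s^\circ \le 0 \Leftrightarrow P_s(l-q) \leq \pi^{sp}/\bar\pi^{rt}_s$ yielding $\tilde{y}^{LSE}_s = 0$; $y_s^\circ \geq x \Leftrightarrow P_s(l-q-x) \geq \pi^{sp}/\bar\pi^{rt}_s$ yielding $\tilde{y}^{LSE}_s = x$; and the interior case yielding $\tilde{y}^{LSE}_s = y_s^\circ$.

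There is no real obstacle here; the only point requiring a little care is the differentiation under the integral, where one has to check that the contribution of the moving upper limit vanishes (which it does because $(l-q-y_s - w)$ is zero at $w = l-q-y_s$). Everything else is a direct monotonicity-based inversion of the first-order condition, paralleling the derivation already given for $J^e_s$ in Proposition~\ref{thm:dr-social planner}.
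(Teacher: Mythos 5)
Your proposal is correct and follows essentially the same route as the paper's proof: write the expectation as an explicit integral, compute the first and second derivatives to get $\pi^{sp}-\bar\pi^{rt}_s P_s(l-q-y_s)$ and $\bar\pi^{rt}_s p_s(l-q-y_s)\geq 0$, conclude convexity, and read off the minimizer from the first-order condition clipped to $[0,x]$. Your added care about the vanishing boundary term in the Leibniz differentiation and the explicit projection argument are just more detailed versions of what the paper leaves implicit.
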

\begin{proof}
The second-stage cost for the LSE is,
\[ \tilde{J}^{LSE}_{s}(y_s) = \pi^{sp} y_s +   \overline{\pi}^{rt}_{s} \int^{l-q-y_s}_{0}(l - w - q - y_s) p_{s}(w) dw,  \]
The first-order and second-order derivative are as follows,
\begin{align}
\begin{cases}
\frac{\partial \tilde{J}^{LSE}_{s}}{\partial y_s}  = \pi^{sp} - \overline{\pi}^{rt}_{s} P_{s}(l-q-y_s), \nonumber \\
\frac{\partial^{2} \tilde{J}^{LSE}_{s}}{\partial y_s^{2}} =  \overline{\pi}^{rt}_{s} p_{s}(l-q-y_s).
\end{cases}
\end{align}
Since the second derivative is positive, $\tilde{J}^{LSE}_{s}(y)$ is strictly convex. Then using first derivative and the strict convexity property the expression for the unique minimizer $\tilde{y}^{LSE}_{s}$ follows.
\end{proof}

\begin{lemma}
\label{lem:ce-options-LSE-foc}
The function $\tilde{J}^{LSE}(q, x)$ is jointly convex in $q$ and $x$. The minimizers $(\tilde{q}^{LSE}, x^{LSE})$ are given by,
\begin{align}
\pi^{da} & -\mathbb{E}_{s}[\overline{\pi}^{rt}_{s}P_{s}(l-\tilde{q}^{LSE}-\tilde{y}^{LSE}_{s})] = 0 \nonumber \\
\pi^{o} &+ \pi^{sp} \mathbb{E}_{s}[\mathbb{I}\{y_{s} = x^{LSE}\}] \nonumber  \\
& - \mathbb{E}_{s}[\overline{\pi}^{rt}_{s} P_{s}(l-\tilde{q}^{LSE}-\tilde{y}^{LSE}_{s})  \mathbb{I}\{y_{s} = x^{LSE}\}] = 0
\label{eq:ce-options-LSE-foc}
\end{align}
\end{lemma}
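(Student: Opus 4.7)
The plan is to establish joint convexity first, then derive the two first-order conditions by a careful envelope-theorem calculation, treating separately the interior and boundary behavior of the inner minimization.

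For joint convexity, I will show that the second-stage cost $\tilde{J}^{LSE}_s(y_s;q) = \pi^{sp}y_s + \overline{\pi}_s^{rt}\,\mathbb{E}_w[(l-q-y_s-w)_+]$ is jointly convex in $(q,y_s)$: the map $(q,y_s)\mapsto (l-q-y_s-w)_+$ is a convex function (positive-part composed with an affine map), and convexity is preserved by taking expectation in $w$ and adding the linear term $\pi^{sp}y_s$. The constraint set $\{(q,x,y_s): 0\le y_s\le x\}$ is convex, so the partial-minimization value function
\[
V(q,x) \;=\; \min_{0\le y_s\le x}\tilde{J}^{LSE}_s(y_s;q)
\]
is jointly convex in $(q,x)$ for each $s$, by the standard result that partial minimization of a jointly convex function over a convex constraint yields a convex function in the remaining variables. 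Taking expectation in $s$ and adding the linear terms $\pi^{o}x+\pi^{da}q$ preserves joint convexity, giving joint convexity of $\tilde{J}^{LSE}(q,x)$.

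For the first-order conditions, I will apply the envelope theorem to $V(q,x)$, using the explicit minimizer $\tilde{y}^{LSE}_s$ derived in the preceding lemma. For the derivative in $q$, neither the constraint $y_s\ge 0$ nor $y_s\le x$ moves with $q$, so
\[
\frac{\partial V}{\partial q}\;=\;\frac{\partial \tilde{J}^{LSE}_s(y_s;q)}{\partial q}\bigg|_{y_s=\tilde{y}^{LSE}_s}\;=\;-\overline{\pi}^{rt}_s P_s(l-q-\tilde{y}^{LSE}_s),
\]
using a Leibniz calculation analogous to the one in the proof of Proposition~\ref{thm:dr-social planner}. Setting $\partial\tilde{J}^{LSE}/\partial q=0$ yields the first stated equation. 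For the derivative in $x$, the constraint $y_s\le x$ is active precisely when $\tilde{y}^{LSE}_s=x$, i.e., when $P_s(l-q-x)>\pi^{sp}/\overline{\pi}^{rt}_s$; in that case, $\partial V/\partial x=\partial \tilde{J}^{LSE}_s/\partial y_s\big|_{y_s=x}=\pi^{sp}-\overline{\pi}^{rt}_s P_s(l-q-x)$, while on $\{\tilde{y}^{LSE}_s<x\}$ the constraint is slack and $\partial V/\partial x=0$. Combining,
\[
\frac{\partial \tilde{J}^{LSE}}{\partial x}\;=\;\pi^{o}+\mathbb{E}_s\!\left[\mathbb{I}\{\tilde{y}^{LSE}_s=x\}\bigl(\pi^{sp}-\overline{\pi}^{rt}_s P_s(l-q-\tilde{y}^{LSE}_s)\bigr)\right],
\]
and setting this to zero gives exactly the second equation in \eqref{eq:ce-options-LSE-foc}.

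The main obstacle is the envelope-theorem calculation for the constraint $y_s\le x$: the value function $V(q,x)$ is only piecewise smooth, with the non-smoothness occurring across the information-state boundary separating the regimes $\{\tilde{y}^{LSE}_s=x\}$ and $\{\tilde{y}^{LSE}_s<x\}$. I would argue that $V$ is everywhere differentiable in $x$ (with left- and right-derivatives equal, since at the crossover both yield $\pi^{sp}-\overline{\pi}^{rt}_s P_s(l-q-x)=0$), and that the indicator term arises naturally after integrating over $s$. A minor issue is the differentiability under the expectation sign in $s$, which is standard provided $\alpha(\cdot)$ is integrable and the distributions $P_s$ are continuous, as is assumed throughout the paper. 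The convexity established in the first step then guarantees that the stationary point characterized by \eqref{eq:ce-options-LSE-foc} is the global minimizer.
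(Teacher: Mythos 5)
Your proof is correct, but it reaches the result by a genuinely different route than the paper. The paper proves convexity computationally: it partitions the information states into the three regimes via thresholds $s_1$ (where $\tilde{y}_s = x$) and $s_2$ (where $\tilde{y}_s = 0$), writes $\tilde{J}^{LSE}(q,x)$ as an explicit sum of integrals over these regimes, differentiates twice, and verifies that the Hessian has the form $\left[\begin{smallmatrix} a+b & a \\ a & a\end{smallmatrix}\right]$ with $a,b>0$, invoking Sylvester's criterion. You instead use the structural argument that $(q,y_s)\mapsto \pi^{sp}y_s + \overline{\pi}^{rt}_s\,\mathbb{E}_w[(l-q-y_s-w)_+]$ is jointly convex and that partial minimization over the convex set $\{0\le y_s\le x\}$ preserves joint convexity in $(q,x)$; the first-order conditions then follow from an envelope/Danskin argument in which only the states with the constraint $y_s\le x$ active contribute to $\partial/\partial x$, which is exactly where the indicator $\mathbb{I}\{\tilde{y}_s = x\}$ in \eqref{eq:ce-options-LSE-foc} comes from. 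Your route is cleaner and arguably more robust: it sidesteps the paper's somewhat hand-waved claim that $a>0$ and $b>0$ "always" (which is needed for strict convexity and positive definiteness but not for the lemma as stated), and it does not require the explicit regime decomposition. What the paper's computation buys in exchange is the explicit positive-definite Hessian, which is reused later in the proof of Lemma \ref{lem:ce-options-LSE-min} to apply the implicit function theorem; your argument alone would not supply that. Your handling of the kink at the regime boundary (matching one-sided derivatives because $\pi^{sp}-\overline{\pi}^{rt}_s P_s(l-q-x)=0$ there) is the right justification for differentiating under the expectation, and is in fact more careful than what the paper writes down.
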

\begin{proof}
Let $s_{1}, s_{2} \in S$ be such that $y_{s} = x$ for $0 \leq s \leq s_{1}$ and $y_{s} = 0$ for $s_{2} \leq s \leq 1$. Note that 
$s_{1}$ and $s_{2}$ depends on $q$ and $x$ from the first stage. 
Then, $\tilde{J}^{LSE}(q, x)$ (c.f. \eqref{eq:J-LSE-option-1}) can be written as,
\begin{align*} 
& \tilde{J}^{LSE}(q,x) =  (\pi^o x + \pi^{da} q)  \\
& + \int^{s_{1}}_{s=0} \pi^{sp}x\alpha(s) ds \\
& + \int^{s_{1}}_{s=0} \overline{\pi}^{rt}_s \int^{(l-q-x)}_{w=0}(l-q-x-w) p_{s}(w)dw\alpha(s) ds  \\
& + \int^{s_{2}}_{s=s_{1}} \pi^{sp} y_{s} \alpha(s) ds \\ 
& + \int^{s_{2}}_{s=s_{1}}\overline{\pi}^{rt}_s\int^{(l-q-y_{s})}_{w=0} (l-q-y_{s}-w)p_{s}(w)dw\alpha(s) ds   \\
& + \int^{1}_{s=s_{2}}\left(\overline{\pi}^{rt}_s \int^{(l-q)}_{w=0} (l-q-w) p_{s}(w) dw \right) \alpha(s) ds   
\end{align*} 

We give simplified expressions for the partial derivatives of $\tilde{J}^{LSE}$ w.r.t $q$ and $x$ below,
\begin{align*}
\frac{\partial \tilde{J}^{LSE}}{\partial q} &= \pi^{da} - \int^{1}_{s=0} \overline{\pi}^{rt}_{s} P_{s}(l-q-\tilde{y}^{LSE}_{s}) \alpha(s) ds\\
\frac{\partial \tilde{J}^{LSE}}{\partial x} &= \pi^{o} + \pi^{sp} \int^{s_{1}}_{s=0}  \alpha(s) ds \\
&\hspace{1cm} -   \int^{s_{1}}_{s=0}  \overline{\pi}^{rt}_{s} P_{s}(l-q-x) \alpha(s) ds  
\end{align*}
Once again differentiating the above expressions w.r.t $q$ and $x$ we get,
\begin{align*}
\frac{\partial^{2} \tilde{J}^{LSE}}{\partial x \partial q} &= \int^{s_{1}}_{s=0} \overline{\pi}^{rt}_s p_{s}(l-q-x) \alpha(s) ds \\
\frac{\partial^{2} \tilde{J}^{LSE}}{\partial q^{2}} &= \int^{s_{1}}_{s=0} \overline{\pi}^{rt}_s p_{s}(l-q-x) \alpha(s) ds \\
&\hspace{0.5cm} + \int^{1}_{s=s_{2}} \overline{\pi}^{rt}_s p_{s}(l-q) \alpha(s) ds \\
\frac{\partial^{2} \tilde{J}^{LSE}}{\partial x^{2}} &= \int^{s_{1}}_{s=0} \overline{\pi}^{rt}_s p_{s}(l-q-x) \alpha(s) ds
\end{align*}

It follows that the Hessian will be of the form 
\begin{align*}
&\left[\begin{array}{cc}
(a+b) & a \\
a & a
\end{array}  \right], ~\text{where}~ a =  \int^{s_{1}}_{s=0} \overline{\pi}^{rt}_{s} p_{s}(l-q-x) \alpha(s) ds, \\
& b =  \int^{1}_{s=s_{2}} \overline{\pi}^{rt}_{s} p_{s}(l-q) \alpha(s) ds.
\end{align*}
It is easy to show that $a > 0$ and $b > 0$ always. We give a very brief argument here. If $\pi^{sp}$ is large then $s_2 < 1$ trivially and so $b > 0$ trivially. But $\pi^{sp}$ cannot be set large enough that $s_2 =  0$ because it obviates the need for an intermediate DR market. Then there will necessarily exist a $s_1 > 0$ and so $a > 0$ in this case. If  $\pi^{sp}$ is small then $s_1 > 0$ trivially and there will always exist a $s_2 < 1$ such that $\overline{\pi}^{rt}_sP_s(l-q) = \pi^{sp}$. It follows that, $a > 0$ and $b > 0$ in this case as well. Hence, by Silvester's criterion, the Hessian is positive definite. Hence, by convexity the minimizers of LSE cost satisfy \eqref{eq:ce-options-LSE-foc}.
\end{proof}

\begin{lemma}
\label{aggconvex}
The cost function of the aggregator $\tilde{J}^{agg}(x)$ is convex in $x$.
\end{lemma}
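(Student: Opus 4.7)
The plan is to verify convexity of $\tilde{J}^{agg}(x)$ by computing $d^2\tilde{J}^{agg}/dx^2$ directly, leveraging the explicit form of $\tilde{y}^{LSE}_s(x)$ from the preceding lemma. The analysis hinges on partitioning $[0,1]$ according to which regime $\tilde{y}^{LSE}_s$ falls into.

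First I would introduce two cutoffs: $s_2$ defined by $\bar{\pi}^{rt}_{s_2}P_{s_2}(l-q)=\pi^{sp}$ (independent of $x$) and $s_1(x)$ defined implicitly by $\bar{\pi}^{rt}_{s_1}P_{s_1}(l-q-x)=\pi^{sp}$. On $[0,s_1(x)]$ we have $y_s=x$; on $[s_1(x),s_2]$ we have $y_s=l-q-P_s^{-1}(\pi^{sp}/\bar{\pi}^{rt}_s)$, independent of $x$; on $[s_2,1]$ we have $y_s=0$. Splitting $\tilde{J}^{agg}(x)=\mathbb{E}_s[\phi(y_s)-\pi^{sp}y_s]-\pi^o x$ along these regimes sets up the derivative calculation.

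Next, applying Leibniz's rule and noting that at $s=s_1(x)$ both adjacent regime formulas give $y_{s_1(x)}=x$, the boundary terms at $s_1(x)$ from the first two integrals cancel. This yields
\begin{equation*}
\frac{d\tilde{J}^{agg}}{dx}=(\phi'(x)-\pi^{sp})\int_0^{s_1(x)}\alpha(s)\,ds-\pi^o.
\end{equation*}
Differentiating once more produces
\begin{equation*}
\frac{d^2\tilde{J}^{agg}}{dx^2}=\phi''(x)\int_0^{s_1(x)}\alpha(s)\,ds+(\phi'(x)-\pi^{sp})\alpha(s_1(x))\,s_1'(x).
\end{equation*}
The first summand is non-negative by strong convexity of $\phi$ (Assumption \ref{disutilityfunction}). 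Implicit differentiation of the defining equation for $s_1(x)$, combined with Assumption \ref{as:w-pirt}(i)--(ii) which makes $\partial P_s/\partial s<0$ and $d\bar{\pi}^{rt}_s/ds<0$ while $\partial P_s(l-q-x)/\partial x<0$, gives $s_1'(x)\le 0$.

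The hard part will be controlling the sign of the boundary correction $(\phi'(x)-\pi^{sp})\alpha(s_1)s_1'(x)$, which is automatically non-negative when $\phi'(x)\le\pi^{sp}$ but potentially negative otherwise. I would close this by substituting the closed-form expression for $s_1'(x)$ from implicit differentiation and bounding the correction against the strongly-convex leading term using Assumption \ref{as:w-pirt}. An alternative fallback is to rewrite the per-state integrand as $G_s(\min(y_s^*,x))$ with $G_s(u)=\phi(u)-\pi^{sp}u$, observe that $\min(y_s^*,x)$ is concave in $x$, and invoke the composition rule for convex non-increasing functions with concave arguments in the operative range where $G_s$ is non-increasing.
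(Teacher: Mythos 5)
There is a genuine gap, and it sits exactly where you flagged it: the boundary-correction term $(\phi'(x)-\pi^{sp})\,\alpha(s_1(x))\,s_1'(x)$ cannot be signed in general, and neither of your proposed escape routes closes it. With $s_1'(x)\le 0$, the correction is non-negative only when $\phi'(x)\le\pi^{sp}$; when $\phi'(x)>\pi^{sp}$ it is negative, and there is no assumption in the paper that bounds it by the leading term $\phi''(x)\int_0^{s_1}\alpha$. Your fallback via composition fails for the same reason: $G_s(u)=\phi(u)-\pi^{sp}u$ is convex but \emph{increasing} wherever $\phi'(u)>\pi^{sp}$, and a convex increasing function of the concave map $x\mapsto\min(y_s^*,x)$ is not convex --- a single-state example makes this concrete, since $x\mapsto G(\min(y^*,x))-\pi^o x$ has left derivative $G'(y^*)-\pi^o$ and right derivative $-\pi^o$ at $x=y^*$, a downward jump whenever $\phi'(y^*)>\pi^{sp}$. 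So under your formulation, in which the exercise thresholds respond to the aggregator's own $x$, the claimed convexity is actually false in general, not merely hard to prove.

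The paper avoids this entirely by a modeling step you did not adopt: since both agents are price takers, the thresholds $s_1,s_2$ (and the interior-regime exercise levels $\tilde{y}_s^{LSE}=l-l'+x-P_s^{-1}(\pi^{sp}/\bar{\pi}_s^{rt})$ for $s\in[s_1,s_2]$) are determined by the LSE's decisions, which depend only on the fixed prices $\pi^o,\pi^{sp}$; they are therefore held constant as the aggregator varies its offered quantity $x$. The only $x$-dependence left in $\tilde{J}^{agg}$ is then $\int_0^{s_1}(\phi(x)-\pi^{sp}x)\alpha(s)\,ds-\pi^o x$, whose second derivative is $\phi''(x)\int_0^{s_1}\alpha(s)\,ds\ge 0$ by Assumption \ref{disutilityfunction}, and the Leibniz boundary terms you were fighting never appear. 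Your first derivative actually coincides with the paper's (modulo the paper's apparent $s_2$-for-$s_1$ typo), but the second-derivative computation diverges precisely because you differentiate $s_1(x)$. To repair your argument you would need to either adopt the paper's price-taking convention explicitly, or restrict to the regime $\phi'(x)\le\pi^{sp}$, neither of which your write-up currently does.
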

\begin{proof}
Define $s_1$ and $s_2$ in the same way as in the proof of Lemma \ref{lem:ce-options-LSE-foc}. Recall that $s_1$ denotes the information state below which the LSE schedules all of the options and $s_2$ denotes the information state above which the LSE does not schedule any of the demand response at all. These variables depend on the decision of the LSE, which are in turn only dependent on the option prices $\pi^o$ and $\pi^{sp}$, which are fixed. As a result, $s_1$ and $s_2$ will not be affected by the aggregator's decision $x$. 

The cost function $\tilde{J}^{agg}(x)$ is as follows:
\begin{align*} 
\tilde{J}^{agg} &(x) = -\pi^ox+ \int_{s=0}^{s_1}(\phi(x)-\pi^{sp} x) \alpha(s) ds  \\
& + \int_{s=s_1}^{s_2}(\phi(\tilde{y}_s^{LSE})-\pi^{sp} \tilde{y}_s^{LSE}) \alpha(s) ds +\int_{s=s_2}^1 \phi(0) \alpha(s)ds.   
\end{align*} 
The first order derivative of $\tilde{J}^{agg} (x)$ is given by,
\begin{equation}
\label{derivativeofagg}
\frac{\partial \tilde{J}^{agg}}{\partial x }  = - \pi^{o} + \int^{s_{2}}_{s=0} \left(\phi'(\tilde{y}_s^{LSE})-\pi^{sp}  \right) \alpha(s) ds.
\end{equation}
The second order derivative is given by,
\begin{equation}
\label{secondderivativeofagg}
\frac{\partial^{2} \tilde{J}^{agg}}{\partial x^{2} }  = \int^{s_{2}}_{s=0} \phi''(x) \alpha(s).
\end{equation}
Clearly, $\tilde{J}^{agg}(x)$ is convex. 
\end{proof}

Next we show the continuity of  $\tilde{q}^{LSE}(\pi^{o}, \pi^{sp})$ and $x^{LSE}(\pi^{o}, \pi^{sp})$ in $(\pi^{o}, \pi^{sp})$ using the implicit function theorem. We omit the details for the proof of continuity of $x^{agg}(\pi^o,\pi_{LSE})$.
 
\begin{lemma}
\label{lem:ce-options-LSE-min}
The minimizers of LSE cost $\tilde{J}^{LSE}(q, x)$ i.e. $(\tilde{q}^{LSE}(\pi^o,\pi^{sp}), \tilde{x}^{LSE}(\pi^o,\pi^{sp}))$ are continuous in its arguments. 
\end{lemma}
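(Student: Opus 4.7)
The plan is to apply the implicit function theorem to the first-order conditions that characterize $(\tilde{q}^{LSE}, \tilde{x}^{LSE})$. Define the map
\[
F(q, x;\pi^o,\pi^{sp}) \;=\; \nabla_{(q,x)} \tilde{J}^{LSE}(q,x;\pi^o,\pi^{sp}),
\]
whose explicit components are the left-hand sides of \eqref{eq:ce-options-LSE-foc}. By Lemma~\ref{lem:ce-options-LSE-foc}, the minimizer $(\tilde{q}^{LSE},\tilde{x}^{LSE})$ is the unique zero of $F(\cdot,\cdot;\pi^o,\pi^{sp})$. If I can show that $F$ is $C^1$ in all four arguments in a neighborhood of any fixed $(\pi^o,\pi^{sp})$, and that the $2{\times}2$ Jacobian $\partial F/\partial(q,x)$ at the minimizer is invertible, then the IFT yields a $C^1$ (in particular continuous) local solution map $(\pi^o,\pi^{sp}) \mapsto (\tilde{q}^{LSE}(\pi^o,\pi^{sp}),\tilde{x}^{LSE}(\pi^o,\pi^{sp}))$.

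Non-singularity of the Jacobian is essentially free. The matrix $\partial F/\partial(q,x)$ is the Hessian of $\tilde{J}^{LSE}$, which in the proof of Lemma~\ref{lem:ce-options-LSE-foc} was shown to have the form $\left[\begin{smallmatrix} a+b & a\\ a & a \end{smallmatrix}\right]$ with $a,b>0$, so its determinant is $ab>0$. Linearity of $F$ in the prices gives continuous partials with respect to $\pi^o$ and $\pi^{sp}$ trivially.

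The main step (and the only real obstacle) is establishing $C^1$ smoothness of $F$ with respect to $(q,x)$. The subtlety is that the second-stage rule \eqref{secondstageoptimaldecision} is piecewise, with cutoff information states $s_1(q,x,\pi^{sp})$ and $s_2(q,\pi^{sp})$ determined implicitly by $P_{s_1}(l-q-x)=\pi^{sp}/\bar{\pi}^{rt}_{s_1}$ and $P_{s_2}(l-q)=\pi^{sp}/\bar{\pi}^{rt}_{s_2}$; by Assumption~\ref{as:w-pirt} and Assumption~\ref{disutilityfunction} these cutoffs depend smoothly on $(q,x,\pi^{sp})$. When I differentiate the integral expressions for $\partial \tilde{J}^{LSE}/\partial q$ and $\partial \tilde{J}^{LSE}/\partial x$ under the integral sign, Leibniz's rule produces extra boundary terms at $s_1$ and $s_2$. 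The key observation is that these boundary terms cancel because the integrand is continuous across each cutoff: at $s=s_1$ the expression $P_s(l-q-\tilde{y}_s^{LSE})$ equals $\pi^{sp}/\bar{\pi}^{rt}_s$ from both the capped and interior regimes, and at $s=s_2$ it equals $\pi^{sp}/\bar{\pi}^{rt}_s$ from the interior side and $P_s(l-q)=\pi^{sp}/\bar{\pi}^{rt}_s$ from the zero-regime side. Similarly, the indicator terms $\mathbb{I}\{y_s=x\}$ produce boundary contributions that vanish because they are multiplied by $\pi^{sp}-\bar{\pi}^{rt}_s P_s(l-q-x)$ evaluated at $s_1$, which is zero by definition.

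Once this smoothness is pinned down, the rest is mechanical: apply IFT at each $(\pi^o,\pi^{sp})$ in the domain of interest to obtain continuity of $(\tilde{q}^{LSE},\tilde{x}^{LSE})$ as a function of the prices. As a sanity check, one could bypass the IFT entirely via Berge's maximum theorem, using the joint continuity of $\tilde{J}^{LSE}$ in all arguments together with strict convexity in $(q,x)$ (which gives a singleton argmin, promoting Berge's upper hemicontinuity to ordinary continuity); I expect to keep the IFT route since it follows the approach stated in the lemma's preamble.
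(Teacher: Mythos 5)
Your proposal follows essentially the same route as the paper: both characterize $(\tilde{q}^{LSE},\tilde{x}^{LSE})$ as the zero of the first-order-condition map, verify continuous differentiability of that map in $(q,x,\pi^o,\pi^{sp})$, invoke the positive definiteness of the Hessian $\left[\begin{smallmatrix} a+b & a\\ a & a\end{smallmatrix}\right]$ from Lemma~\ref{lem:ce-options-LSE-foc} as the invertible Jacobian, and conclude via the implicit function theorem. Your discussion of the cancellation of the Leibniz boundary terms at the cutoffs $s_1,s_2$ supplies a detail the paper passes over silently (and your remark that the map is ``linear'' in $\pi^{sp}$ is slightly imprecise since $s_1,s_2$ depend on $\pi^{sp}$, though the needed continuity of the price-partials still holds), but the argument is the paper's argument.
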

\begin{proof}
By Lemma \ref{lem:ce-options-LSE-foc}, the minimizers $\tilde{q}^{LSE}, x^{LSE}$ satisfy conditions \eqref{eq:ce-options-LSE-foc}. Define, 
\begin{align}
f(q,x,\pi^o,\pi^{sp}) & = \left( \begin{array}{c} f_1(q,x,\pi^o,\pi^{sp}) \\ f_2(q,x,\pi^o,\pi^{sp})  \end{array} \right) \nonumber \\
\tn{Where} ~~ f_1(q,x,\pi^o,\pi^{sp}) & = \frac{\partial \tilde{J}^{LSE}}{\partial q} \nonumber \\
& =  \pi^{da} - \mathbb{E}_{s}[\overline{\pi}^{rt}_{s} P_{s}(l-q-y_{s})] \nonumber \\
f_2(q,x,\pi^o,\pi^{sp}) & = \frac{\partial \tilde{J}^{LSE}}{\partial x} \nonumber \\
& = \pi^{o} + \pi^{sp} \mathbb{E}_{s}[\mathbb{I}\{y_{s} = x\}] \nonumber  \\
& - \mathbb{E}_{s}[\overline{\pi}^{rt}_{s} P_{s}(l-q-y_{s}) \mathbb{I}\{y_{s} = x\}]
\end{align}
Then the minimizers $\tilde{q}^{LSE}, x^{LSE}$, satisfy $f = 0$. From Lemma \ref{lem:ce-options-LSE-foc} the partial derivatives of $\frac{\partial \tilde{J}^{LSE}}{\partial q}$ and $\frac{\partial \tilde{J}^{LSE}}{\partial x}$ exist. Hence $f(q,x,\pi^o,\pi^{sp})$ is continuously differentiable w.r.t $q$ and $x$. Also the derivatives of $f_1$ and $f_2$ w.r.t $\pi_o $ and $\pi^{sp}$ exists and is given by,
\begin{align*}
\frac{\partial f}{\partial \pi^{o}} & =  \left( \begin{array}{c} 0 \\ 1 \end{array}\right),~~ 
\frac{\partial f}{\partial \pi^{sp}} & = \left( \begin{array}{c} - \mathbb{E}_{s}[\mathbb{I}\{0 < y_{s} < x\}]  \\ \mathbb{E}_s[\mathbb{I}\{y_s = x\}] \end{array} \right) \nonumber \\
\end{align*}
This implies that $f$ is continuously differentiable w.r.t $q, x, \pi^o$ and $\pi^{sp}$. From Lemma \eqref{lem:ce-options-LSE-foc},
\[ H = \left[ \begin{array}{cc} \frac{\partial f}{\partial q} &  \frac{\partial f}{\partial x} \end{array} \right] > 0 \] 
at points $(q,x,\pi^o,\pi^{sp})$ where $\pi^o > 0, \pi^{sp} \geq \pi^{da} $ and $f = 0$.
Then by implicit function theorem there exists continuous functions $g_1: (\pi^o, \pi^{sp}) \rightarrow q$ and $g_2: (\pi^o, \pi^{sp}) \rightarrow x$ such that the minimizers of LSE cost $\tilde{J}^{LSE}(q, x)$ are given by $\tilde{q}^{LSE} = g_1(\pi^o, \pi^{sp})$ and $x^{LSE} = g_2(\pi^o, \pi^{sp})$. Hence the minimizers $(\tilde{q}^{LSE}(\pi_o,\pi_{LSE}), x^{LSE}(\pi^o,\pi^{sp}))$ are continuous in its arguments.
\end{proof}

Similarly we can show that the minimizer of the aggregator's cost, $x^{agg}(\pi^o, \pi^{sp})$ is a continuous function of its arguments. 

We now show the existence of competitive equilibrium for the options market.  Let 
\[z(\pi^{o}, \pi^{sp}) = x^{LSE}(\pi^{o}, \pi^{sp}) - x^{agg}(\pi^{o}, \pi^{sp}) \]
By Lemma  \ref{lem:ce-options-LSE-min}, $z(\pi^{o}, \pi^{sp})$ is a continuous function of $\pi^{o}$ and $\pi^{sp}$. According to the definition, prices $\pi^{*o}, \pi^{*LSE}$ supports an equilibrium if $z(\pi^{*o}, \pi^{*LSE}) = 0$. So, it is sufficient to show that there exists two pair of prices $(\pi^{o}_{1}, \pi^{sp}_{1})$ and $(\pi^{o}_{2}, \pi^{sp}_{2})$ such that
\[z(\pi^{o}_{1}, \pi^{sp}_{1}) > 0,~  z(\pi^{o}_{2}, \pi^{sp}_{2}) < 0\]
Then the existence of prices $(\pi^{*o}, \pi^{*LSE})$ at which $z(\pi^{*o}, \pi^{*LSE}) = 0$ follows from the continuity of the function $z(\cdot, \cdot)$. 

%We only give an intuitive proof sketch. Details are omitted due to page limitation.  
%A more rigorous proof is given in Lemma \ref{lem:ce-options-dem-sup}.  

If $\pi^{o}_{1}$ and $\pi^{sp}_{1}$ are very small, the LSE will prefer to get more load reduction. But the aggregator may not want to offer any load reduction because the reward is not enough to offset the disutility from load curtailment. So, $x^{LSE}(\pi^{o}_{1}, \pi^{sp}_{1}) - x^{agg}(\pi^{o}_{1}, \pi^{sp}_{1})$ will be positive. 

On the other hand, if $\pi^{o}_{2}$ and $\pi^{sp}_{2}$ are very high, the aggregator may offer more load curtailment. But the LSE may prefer very small or no load curtailment at all. It may be better for the LSE to purchase energy from DAM or RTM. So, in this case, $x^{LSE}(\pi^{o}_{2}, \pi^{sp}_{2}) - x^{agg}(\pi^{o}_{2}, \pi^{sp}_{2})$ will be negative.  

Now, the existence of $(\pi^{*o}, \pi^{*LSE})$ follows from the continuity of $z(\cdot, \cdot)$. 

\subsection{Proof of Theorem \ref{thm:ce-option} }  
  
We first prove the following lemma before giving the proof of Theorem \ref{thm:ce-option}. 

\begin{lemma}
\label{lem:ce-options-LSE-foc_redesign}
The function $\tilde{J}^{LSE}(x)$ is convex with respect to $x$. %The minimizers $(\tilde{q}^{LSE}, x^{LSE})$ are given by,
%\begin{align}
%\pi^{da} & -\mathbb{E}_{s}[\overline{\pi}^{rt}_{s}P_{s}(l-\tilde{q}^{LSE}-\tilde{y}^{LSE}_{s})] = 0 \nonumber \\
%\pi^{o} &+ \pi^{sp} \mathbb{E}_{s}[\mathbb{I}\{y_{s} = x^{LSE}\}] \nonumber  \\
%& - \mathbb{E}_{s}[\overline{\pi}^{rt}_{s} P_{s}(l-\tilde{q}^{LSE}-\tilde{y}^{LSE}_{s})  \mathbb{I}\{y_{s} = x^{LSE}\}] = 0
%\label{eq:ce-options-LSE-foc}
%\end{align}
\end{lemma}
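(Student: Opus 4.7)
The plan is to establish convexity through the standard partial-minimization principle: a jointly convex objective minimized over a jointly convex feasible set yields a convex value function in the remaining variable. First, I would rewrite
\begin{equation*}
\tilde{J}^{LSE}(x) = \pi^{da} l' + (\pi^o - \pi^{da}) x + \mathbb{E}_{s}[g_s(x)], \quad g_s(x) := \min_{0 \le y_s \le x} \tilde{J}^{LSE}_{s}(y_s; x),
\end{equation*}
noting that the first two terms are affine in $x$ and that expectation over $s$ (with nonnegative weight $\alpha(s)$) preserves convexity; hence the task reduces to showing that each $g_s(x)$ is convex in $x$.

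Next, I would verify that the inner integrand $\tilde{J}^{LSE}_{s}(y_s; x) = \pi^{sp} y_s + \mathbb{E}_{w}[\bar{\pi}^{rt}_s (l - l' + x - y_s - w)_+]$ is jointly convex in $(x, y_s)$. The term $\pi^{sp} y_s$ is linear. For each fixed $w$, $(l - l' + x - y_s - w)_+$ is the composition of the nondecreasing convex scalar function $(\cdot)_+$ with the affine map $(x, y_s) \mapsto l - l' + x - y_s - w$, hence convex in $(x, y_s)$. Multiplying by the positive constant $\bar{\pi}^{rt}_s$ and taking the expectation over $w$ both preserve joint convexity.

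The feasible set $\{(x, y_s) : x \ge 0,~ 0 \le y_s \le x\}$ is a convex cone, and the minimum defining $g_s(x)$ is attained since $\tilde{J}^{LSE}_{s}(\cdot; x)$ is continuous on the compact interval $[0, x]$. By the standard partial-minimization theorem (e.g.\ Boyd--Vandenberghe, \S 3.2.5), minimizing a jointly convex function over a jointly convex set yields a value function of the remaining variable that is convex, which gives convexity of $g_s(x)$ and therefore of $\tilde{J}^{LSE}(x)$. The only subtlety is that the coupling constraint $y_s \le x$ ties the two variables together; however, because this coupling is a linear inequality the joint feasible region remains convex and the theorem applies directly. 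In contrast to the proof of Lemma \ref{lem:ce-options-LSE-foc} for the original options market, which required an explicit Hessian and a Sylvester-criterion argument, here convexity falls out cleanly from the composition rule for convex functions, with no first- or second-derivative computation required.
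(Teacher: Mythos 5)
Your proof is correct, and it takes a genuinely different route from the paper's. The paper first substitutes the closed-form optimal second-stage decision $\tilde{y}_s^{LSE}$ (the threshold rule that partitions the information state into three regimes via breakpoints $s_1$ and $s_2$), writes $\tilde{J}^{LSE}(x)$ explicitly as a sum of integrals over those regimes, and then differentiates twice with the Leibniz rule to show the second derivative is non-negative. You instead prove joint convexity of $\tilde{J}^{LSE}_s(y_s;x)$ in $(x,y_s)$ via the composition rule (the function $(\cdot)_+$ composed with an affine map, positive scaling by $\bar{\pi}^{rt}_s$, and expectation over $w$), note that the coupling constraint $0\leq y_s\leq x$ defines a convex set, and invoke preservation of convexity under partial minimization; the affine terms and the expectation over $s$ then finish the argument. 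Your route buys robustness: it requires no smoothness of $P_s$, no explicit form for $\tilde{y}_s^{LSE}$, and no tracking of how $s_1$ and $s_2$ move with $x$ --- which is exactly where the paper's computation is delicate (its stated second derivative even appears to carry over expressions such as $p_s(l-q-x)$ and the integration range $[0,s_1]$ from the earlier Lemma \ref{lem:ce-options-LSE-foc}, which do not match its own first-derivative formula for the redesigned market). What the paper's explicit computation buys in exchange is the first-order condition \eqref{derivativeofLSE}, which is reused downstream to characterize the equilibrium; your argument establishes convexity but would need to be supplemented by that derivative calculation wherever the stationarity condition is actually needed.
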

\begin{proof}
We first plug in the second-stage decision (\ref{secondstageoptimaldecision}) in the net expected cost function (\ref{eq:J-LSE-option-1}). We define two variables, $s_1$ and $s_2$, as follows: $s_1$ and $s_2$ are such that,
\begin{align}
\label{definitionofs1s2}
\tilde{y}^{LSE}_{s} =
\begin{cases}
0, &s_2\leq s\leq 1, \\
x, & 0 \leq s \leq  s_1, \\
l - l'+x- P^{-1}_{s}(\pi^{sp}/\pi^{rt}_{s}), & s_1\leq s\leq s_2.
\end{cases}
\end{align} 

Then, $\tilde{J}^{LSE}(x)$ (c.f. \eqref{eq:J-LSE-option-1}) is given by,
\begin{align*} 
& \tilde{J}^{LSE}(x) =  \pi^o x + \pi^{da}(l-l')+ \int^{s_{1}}_{0} \pi^{sp}x\alpha(s) ds\\
& + \int^{s_{2}}_{s_{1}} \pi^{sp} \left( l - l'+x- P^{-1}_{s}(\dfrac{\pi^{sp}}{\pi^{rt}_{s}}) \right) \alpha(s) ds \\
& + \int^{s_{1}}_{s=0}  \int^{l-l'}_{w=0}\overline{\pi}^{rt}_s(l-l'-w) p_{s}(w)dw\alpha(s) ds  \\ 
& + \int^{s_{2}}_{s=s_{1}}\int^{P^{-1}_{s}(\pi^{sp}/\pi^{rt}_{s})}_{w=0}\overline{\pi}^{rt}_s (P^{-1}_{s}(\dfrac{\pi^{sp}}{\pi^{rt}_{s}})-w)p_{s}(w)dw\alpha(s) ds   \\
& + \int^{1}_{s=s_{2}} \int^{l-l'+x}_{w=0} \overline{\pi}^{rt}_s (l-l'+x-w) p_{s}(w) dw  \alpha(s) ds   
\end{align*} 

Using Leibniz rule, the first order derivative of $\tilde{J}^{LSE}$ with respect to $x$ is given by,
\begin{align}
\label{derivativeofLSE}
\frac{\partial \tilde{J}^{LSE}}{\partial x} = &\pi^{0}-\pi^{da} +\int_{s=0}^{s_2} \pi^{sp} \alpha(s)ds \nonumber \\
 &+\int^{1}_{s=s_2} \overline{\pi}^{rt}_{s} P_{s}(l-l'+x) \alpha(s) ds
\end{align}
Then the second order derivative is given by,
\begin{equation*}
\frac{\partial^{2} \tilde{J}^{LSE}}{\partial x^{2}} = \int^{s_{1}}_{s=0} \overline{\pi}^{rt}_s p_{s}(l-q-x) \alpha(s) ds 
\end{equation*}
Since the second order derivative is positive, $\tilde{J}^{LSE}(x)$ is convex.
\end{proof}

Similar to the proof of Theorem \ref{thm:ce-option} and using the above lemma we can show that there exists an intersection to the supply and demand curve, which is the competitive equilibrium.  
 
Next we show that the competitive equilibrium is the solution to (\ref{eq:ce-option-ys}). Let $(\pi^{*o},  \tilde{x}^{*LSE}, \tilde{y}_s^{LSE}, \tilde{x}^{*agg})$ be a competitive equilibrium.
% that satisfies (\ref{competitive_opt}). 
For notation convenience, we define the following,
\begin{align*}
\begin{cases}
\tilde{J}^{LSE}(x)\triangleq C^{LSE}(x)+\pi^ox, \\
\tilde{J}^{agg}(x)\triangleq C^{agg}(x)-\pi^ox.
\end{cases}
\end{align*}
By definition, the competitive equilibrium satisfies
\begin{subnumcases}{\label{competitive_opt_equi}}
 \tilde{x}^{*LSE}= \arg \min_{0\leq x\leq l'} C^{LSE}(x)+\pi^ox,  \label{competitive_opt_equia}\\
\tilde{x}^{*agg}= \arg \min_{0\leq x \leq l'} C^{agg}(x)-\pi^ox \label{competitive_opt_equib} \\
  \tilde{x}^{*LSE}=\tilde{x}^{*agg}. \label{competitive_opt_equic}
\end{subnumcases}
Let $x^*= \tilde{x}^{*LSE}=\tilde{x}^{*agg}$, then (\ref{competitive_opt_equia}) is same as,
\begin{equation}
\label{optimalityLSE}
C^{LSE}(x^*)+\pi^ox^*\leq C^{LSE}(x)+\pi^ox, \quad \forall 0\leq x\leq l',
\end{equation}
and (\ref{competitive_opt_equib}) is same as,
\begin{equation}
\label{optimalityagg2}
C^{agg}(x^*)-\pi^ox^*\leq C^{agg}(z)-\pi^oz, \quad \forall 0\leq z\leq l'.
\end{equation}
Let $z=x$, then (\ref{optimalityLSE}) plus (\ref{optimalityagg2}) gives,
\begin{equation*}
C^{agg}(x^*)+C^{LSE}(x^*)\leq C^{agg}(x)+C^{LSE}(x), \quad \forall 0\leq x\leq l'.
\end{equation*}
This is equivalent to (\ref{eq:ce-option-ys}), which completes the proof.

\subsection{Proof of Proposition \ref{efficiencyofoptions}}
  According to Theorem \ref{thm:ce-market}, $J^{*cp}$ equals the optimal value of $\min_q J^e(q)$.  Clearly, $\min_q J^e(q)$ is smaller than the optimal value of (\ref{eq:ce-option-ys}). This is because (\ref{eq:social planner-J-dr}) optimizes over $y_s$, while (\ref{eq:ce-option-ys}) fixes $y_s$ according to (\ref{optimaLSEcondstagedec}). Therefore,  $J^{*cp}\leq \tilde{J}^{*cp}(\pi^{sp})$. 

To show that $\tilde{J}^{*cp}(\pi^{sp})\leq J_{ndr}^{*e}$, we simply note that when $x=0$, the value of (\ref{eq:ce-option-ys}) attains $J_{ndr}^{*e}$. Therefore, the optimal value of  (\ref{eq:ce-option-ys}) is smaller than $J_{ndr}^{*e}$. This completes the proof.

\subsection{Proof of Theorem \ref{optimalstrikeprice} } 

Based on Theorem \ref{thm:ce-option}, the optimal strike price is the optimal solution to the following problem,
\begin{equation}
\label{eq:ce-option-ys-proof}
\min_{ \pi^{sp}, x\in [0,l']} \pi^{da} (l'-x)+ \mathbb{E}_{s} [\phi(\tilde{y}_s^{LSE})+\tilde{J}_s^{LSE}(\tilde{y}_s^{LSE})],
\end{equation}
where $\tilde{y}_s^{LSE}$ is defined as (\ref{optimaLSEcondstagedec}). The first order derivative of (\ref{eq:ce-option-ys-proof}) with respect to $\pi^{sp}$ can be computed as,
\begin{align*}
\dfrac{\partial \tilde{J}^e(x,\pi^{sp})}{\partial \pi^{sp}}&=\int_{s_1}^{s_2} \dfrac{1}{\bar{\pi}_s^{rt}} (P_s^{-1})'(\dfrac{\pi^{sp}}{\bar{\pi}_s^{rt}})[\pi^{sp}-\phi'(y_s)]\alpha(s)ds \nonumber \\
&=\int_{s_1}^{s_2} \dfrac{[\pi^{sp}-\phi'(y_s)]}{\bar{\pi}_s^{rt}p_s(l-l'+x-y_s)} \alpha(s)ds,
\end{align*}
where the second equation is based on the fact that 
\begin{equation*}
(P_s^{-1})'(\dfrac{\pi^{sp}}{\bar{\pi}_s^{rt}})=\dfrac{1}{p_s(P_s^{-1}(\pi^{sp}/\bar{\pi}_s^{rt}))}=\dfrac{1}{p_s(l-l'+x-y_s)}.
\end{equation*}
Note that $\dfrac{\partial \tilde{J}^e(x,\pi^{sp})}{\partial \pi^{sp}}\leq 0$ when $\pi^{sp}\leq \phi'(0)$, and $\dfrac{\partial \tilde{J}^e(x,\pi^{sp})}{\partial \pi^{sp}}\geq 0$ when $\pi^{sp}\geq \phi'(l')$. Since $\dfrac{\partial \tilde{J}^e(x,\pi^{sp})}{\partial \pi^{sp}}$ is continuous with respect to $\pi^{sp}$, there exists an optimal strike price $\tilde{\pi}^{*LSE}$, such that $\phi'(0)\leq  \tilde{\pi}^{*LSE} \leq \phi'(l')$  and $\dfrac{\partial \tilde{J}^e(x,\tilde{\pi}^{*LSE})}{\partial \pi^{sp}}=0$. The final expression fpr $\tilde{\pi}^{*LSE}$ follows trivially from the condition $\dfrac{\partial \tilde{J}^e(x,\tilde{\pi}^{*LSE})}{\partial \pi^{sp}}=0$. This completes the proof.

\end{document}